\definecolor{darkgreen}{rgb}{0,0.5,0}
\crefname{theorem}{Theorem}{Theorems}
\Crefname{lemma}{Lemma}{Lemmas}
\Crefname{figure}{Figure}{Figures}
\newcommand{\ceil}[1]{\left\lceil #1 \right\rceil}
\newcommand{\poly}{\operatorname{poly}}
\newcommand{\LOCAL}{\mathsf{LOCAL}} 
\newcommand{\CONGEST}{\mathsf{CONGEST}}
\newcommand{\LCL}{\mathsf{LCL}}
\newcommand{\MM}{\mathcal{M}}
\newcommand{\GG}{\mathcal{G}}
\newcommand{\PP}{\mathcal{P}}
\newcommand{\CC}{\mathcal{C}}
\newcommand{\Algo}{\mathcal{A}}
\newcommand{\SSS}{\mathcal{S}}
\newcommand{\LLLL}{\mathsf{L}}
\newcommand{\RRRR}{\mathsf{R}}
\newcommand{\UUUU}{\mathsf{U}}
\newcommand{\DDDD}{\mathsf{D}}
\newcommand{\AAA}{\mathscr{A}}
\newcommand{\BBB}{\mathscr{B}}
\newcommand{\CCC}{\mathscr{C}}
\newcommand{\dist}{\mathsf{dist}}
\newcommand{\colors}{\mathsf{color}}
\newcommand{\upp}{{\sf Up}}
\newcommand{\downn}{{\sf Down}}
\newcommand{\leftt}{{\sf Left}}
\newcommand{\rightt}{{\sf Right}}
\newcommand{\Lovasz}{Lov\'{a}sz}
\newcommand{\LabelIn}{\Sigma_{\operatorname{in}}}
\newcommand{\LabelOut}{\Sigma_{\operatorname{out}}}
\newtheorem{theorem}{Theorem}
\newtheorem{lemma}{Lemma}
\newtheorem{conjecture}{Conjecture}
\newtheorem{definition}{Definition}
\newtheorem{corollary}{Corollary}
\newtheorem{proposition}{Proposition}
\newtheorem{observation}{Observation}
\title{The Distributed Complexity of Locally Checkable Labeling Problems Beyond Paths and Trees}
\author{Yi-Jun Chang\footnote{National University of Singapore. Email: cyijun@nus.edu.sg}}
\begin{document}
\date{}
\maketitle
\thispagestyle{empty}
\setcounter{page}{0}

\begin{abstract}
We consider \emph{locally checkable labeling} ($\LCL$) problems in the $\LOCAL$ model of distributed computing. 
Since 2016, there has been a substantial body of work examining the possible complexities of $\LCL$ problems. For example, it has been established that there are no $\LCL$ problems exhibiting deterministic complexities falling between $\omega(\log^\ast n)$ and $o(\log n)$.
This line of inquiry has yielded a wealth of algorithmic techniques and insights that are useful for algorithm designers.

While the complexity landscape of $\LCL$ problems on general graphs, trees, and paths is now well understood, graph classes beyond these three cases remain largely unexplored. Indeed, recent research trends have shifted towards a fine-grained study of special instances within the domains of paths and trees.

In this paper, we generalize the line of research on characterizing the complexity landscape of $\LCL$ problems to a much broader range of graph classes. We propose a  conjecture that characterizes the complexity landscape of $\LCL$ problems for an  \emph{arbitrary} class of graphs that is closed under minors, and we prove a part of the conjecture. 

Some highlights of our findings are as follows.
\begin{itemize}
\item We establish a simple characterization of the minor-closed graph classes sharing the same deterministic complexity landscape as paths, where $O(1)$, $\Theta(\log^\ast n)$, and $\Theta(n)$ are the only possible complexity classes. 

\item It is natural to conjecture that any minor-closed graph class shares the same complexity landscape as trees if and only if the graph class has bounded treewidth and unbounded pathwidth. We prove the ``only if'' part of the conjecture.

\item For the class of graphs with pathwidth at most $k$, we show the existence of $\LCL$ problems with randomized and deterministic complexities $\Theta(n), \Theta(n^{1/2}), \Theta(n^{1/3}), \ldots, \Theta(n^{1/k})$ and the non-existence of  $\LCL$ problems whose deterministic complexity  is between $\omega(\log^\ast n)$ and $o(n^{1/k})$. Consequently, in addition to the well-known complexity landscapes for paths, trees, and general graphs, there are infinitely many different complexity landscapes among minor-closed graph classes.
\end{itemize}
\end{abstract}

\newpage
 
\maketitle

\tableofcontents
\thispagestyle{empty}
\newpage
\section{Introduction}
\setcounter{page}{1}

In the $\LOCAL$ model of distributed computing, introduced by Linial~\cite{Linial92}, a communication network is modeled as an $n$-node graph $G=(V,E)$. In this representation, each node $v \in V$ corresponds to a computer, and each edge $e\in E$ corresponds to a communication link. In each communication round, each node sends a message to each of its neighbors, 
receives a message from each of its neighbors, and then performs some local computation.

The \emph{complexity} of a distributed problem in the $\LOCAL$ model is defined as the smallest number of communication rounds needed to solve the problem, with unlimited local computation power and message sizes. Intuitively, the complexity of a distributed problem is the greatest distance that information needs to traverse within a network to attain a solution, capturing the fundamental concept of \emph{locality} in the field of distributed computing.

There are two variants of the $\LOCAL$ model. In the \emph{deterministic} variant, each node has a unique identifier of $O(\log n)$ bits. In the \emph{randomized} variant, there are no identifiers, and each node has the ability to generate local random bits. Throughout the paper, unless otherwise specified, we assume that the result under consideration applies to both the deterministic and randomized settings.

\subsection{Locally Checkable Labeling}\label{sect:lcl}

A distributed problem on bounded-degree graphs is a \emph{locally checkable labeling} ($\LCL$) if there is some constant $r$ such that the correctness of a solution can be checked locally in $r$ rounds of communication in the $\LOCAL$ model.
The class of $\LCL$  problems encompasses many well-studied problems in distributed computing, including maximal independent set, maximal matching, $(\Delta+1)$ vertex coloring, sinkless orientation, and many variants of these problems.

Formally, an $\LCL$ problem $\PP$ is defined by the following parameters.
\begin{itemize}
    \item An upper bound of the maximum degree $\Delta = O(1)$.
    \item A locality radius $r = O(1)$.
    \item A finite set of input labels $\LabelIn$.
    \item A finite set of output labels  $\LabelOut$.
    \item A set of allowed configurations $\mathcal{C}$. 
\end{itemize}

Each member of $\mathcal{C}$ is a graph $H=(V',E')$ whose maximum degree is at most $\Delta$ with a distinguished \emph{center}  $v \in V'$ such that each node in $H$ is within distance $r$ to $v$ and is assigned an input label from $\LabelIn$ and an output label from $\LabelOut$. The special case of $|\LabelIn| = 1$ corresponds to the case where there is no input label. Since all of $r$, $\Delta$, $\LabelIn$, and $\LabelOut$ are finite,  $\mathcal{C}$ is also finite.

 An \emph{instance} of an $\LCL$ problem $\mathcal{P}$ is a graph $G=(V,E)$  whose maximum degree is at most $\Delta$ where each node is assigned an input label from  $\LabelIn$. A \emph{solution} for  $\mathcal{P}$  on $G$ is a labeling  $\phi_{\text{out}}$ that assigns to each node in $G$ an output label from  $\LabelOut$.  The solution  $\phi_{\text{out}}$ is \emph{correct} if the $r$-radius neighborhood of each node $v \in V$  is isomorphic to an allowed configuration in $\mathcal{C}$ centered at $v$. It is straightforward to generalize the above definition to allow edge orientations and edge labels.

\subsection{The Complexity Landscape of LCL Problems}\label{sect:intro1}
The first systematic study of $\LCL$ problems in the $\LOCAL$ model was done by Naor and Stockmeyer~\cite{NaorS95}. They showed that randomness does not help for $\LCL$ problems whose complexity is $O(1)$, and they also showed that it is \emph{undecidable} to determine whether an $\LCL$ problem can be solved in $O(1)$ rounds.

Since 2016, there has been a substantial body of work examining the possible complexities of $\LCL$ problems~\cite{BrandtEtal16,Balliu18,BalliuBCORS19,balliuBOS18,BBOS20paddedLCL,ChangKP16,ChangP17,chang:LIPIcs:2020:13096,Chang2021automata,FischerG17,grunau2022landscape}. For example, it has been established that there are no $\LCL$ problems exhibiting deterministic complexities falling between $\omega(\log^\ast n)$ and $o(\log n)$~\cite{BrandtEtal16,ChangKP16,PettieS15}.
The complexity landscape of $\LCL$ problems on general graphs, trees, and paths is now well understood.
For trees and paths, complete classifications were known:  The complexity of \emph{any} $\LCL$ problem on trees or paths must belong to one of the following complexity classes.
\begin{description}
\item[Trees:]  $O(1)$, $\Theta(\log^\ast n)$, $\Theta(\log \log n)$, $\Theta(\log n)$, and $\Theta(n^{1/k})$ for each positive integer $k$. 
\item[Paths:] $O(1)$, $\Theta(\log^\ast n)$, and $\Theta(n)$.
\end{description}

All these complexity classes apply to both randomized and deterministic settings, except that the complexity class $\Theta(\log \log n)$ only appears in the randomized setting. Moreover, if an $\LCL$ problem has randomized complexity $\Theta(\log \log n)$ on trees, then its deterministic complexity must be $\Theta(\log n)$ on trees.

\paragraph{Implications.} This line of research is not only interesting from a complexity-theoretic standpoint but has also yielded insights of relevance to algorithm designers.
The derandomization theorem proved in~\cite{ChangKP16} illustrates that the \emph{graph shattering} technique~\cite{BEPS16} employed in many randomized distributed
algorithms gives optimal algorithms to the $\LOCAL$ model.  The \emph{distributed constructive \Lovasz\ local lemma} problem~\cite{ChungPS17} was shown~\cite{ChangP17} to be \emph{complete} for sublogarithmic randomized complexity in a sense similar to the theory of NP-completeness, motivating a series of subsequent research on this problem~\cite{ChangFGUZ19,davies2023improved,FischerG17,ghaffari2018derandomizing,RozhonG20}.

The proof of some of the complexity gaps is \emph{constructive}. For example, the proof that there is no $\LCL$ problem on trees whose complexity is $\omega(\log n)$ and $n^{o(1)}$ given in~\cite{ChangP17} demonstrates an algorithm such that for any given $\LCL$ problem $\PP$ on trees, the algorithm either outputs a description of an $O(\log n)$-round algorithm solving $\PP$ or decides that the complexity of  $\PP$ is $n^{\Omega(1)}$.
Such a result suggests that the design of distributed algorithms could be \emph{automated} in certain settings. 
Indeed, several recent research endeavors in this field have focused on attaining simple characterizations of various complexity classes of $\LCL$ problems that yield 
 efficient algorithms for the automated design of distributed algorithms~\cite{BalliuBCORS19,balliu2020binary,balliu2022locally,balliu2022efficient,brandt2017lcl,chang:LIPIcs:2020:13096,Chang2021automata}. In particular, for $\LCL$ problems on paths without input labels, the task of designing an asymptotically optimal distributed algorithm can be done in polynomial time~\cite{brandt2017lcl,Chang2021automata}. 

 Some of the algorithms for the automated design of distributed algorithms are practical and have been implemented. These algorithms can be used to efficiently discover non-trivial results such as an $O(1)$-round algorithm for maximal independent set on bounded-degree rooted trees~\cite{balliu2022locally}. 

\paragraph{Extensions.}  The study of the complexity landscape of $\LCL$ problems has been extended to other variants of the $\LOCAL$ model: online and dynamic settings~\cite{Akbari2023online}, message size limitation~\cite{balliu2021congest}, volume complexity~\cite{rosenbaum2020volume}, and node-averaged complexity~\cite{balliu2023node}. 
Following the seminal work of Bernshteyn~\cite{bernshteyn2020distributed}, many connections between the complexity classes of $\LCL$ problems in the $\LOCAL$ model and the complexity classes arising from \emph{descriptive combinatorics} have been established~\cite{brandt_et_al:LIPIcs.ITCS.2022.29,grebik2021classification,grebik2021local}.


\subsection{Our Focus: Minor-Closed Graph Classes}

While the complexity landscape of $\LCL$ problems on general graphs, trees, and paths is now well understood, graph classes beyond these three cases remain largely unexplored. Indeed, recent research trends in this field have shifted towards a fine-grained study of special instances within the domains of paths and trees: regular trees~\cite{balliu2022locally,balliu2022efficient,brandt_et_al:LIPIcs.ITCS.2022.29}, rooted trees~\cite{balliu2022locally,balliu2022efficient}, trees with binary input labels~\cite{balliu2020binary}, and paths without input labels~\cite{Chang2021automata}. 

In contrast, a substantial body of work already exists concerning the design and analysis of distributed graph algorithms for various classes of networks beyond paths and trees in the $\LOCAL$ model~\cite{amiri2019distributed,bonamy2021tight,chechik2019optimal,Czygrinow2007cocoon,CZYGRINOW2006JDA,Czygrinow2006ESA,czygrinow2006,czygrinow2008fast,czygrinow2014distributed,czygrinow2020distributed,lenzen2013distributed,wawrzyniak2014strengthened}, so there currently exists a considerable gap between the complexity-theoretic and algorithmic understanding of locality in distributed computing.

To address this issue, let us consider the following generic question:
Can we characterize the set of possible complexity classes of $\LCL$ problems for any given graph class $\GG$?
As any set of graphs is a graph class, it is possible to construct artificial graph classes to realize various strange complexity landscapes.
To obtain meaningful interesting results, we must restrict our attention to some \emph{natural} graph classes.

\paragraph{Minor-closed graph classes.}
In this work, we focus on characterizing the possible complexity classes of $\LCL$ problems on any given \emph{minor-closed} graph class.
The {minor-closed} graph classes are among the most prominent types of sparse graphs, covering many natural sparse graph classes, such as forests, cacti, planar graphs, bounded-genus graphs, and bounded-treewidth graphs.

A graph $H$ is a \emph{minor} of $G$ if $H$ can be obtained from $G$ by removing nodes, removing edges, and contracting edges. Alternatively, $H$ is a minor of $G$ if there exist a partition of $V(G)$ into $k=|V(H)|$ disjoint connected clusters $\CC = \{V_1, V_2, \ldots, V_k\}$ and a bijection between $\CC = \{V_1, V_2, \ldots, V_k\}$ and $V(H)$  such that for each edge $e$ in $H$, the two clusters in  $\CC$ corresponding to the two endpoints of $e$ are adjacent in $G$. 

Any set $\GG$ of graphs is called a \emph{graph class}. 
 A graph class $\GG$ is said to be \emph{minor-closed} if $G \in \GG$ implies that all minors $H$ of $G$ also belong to $\GG$. Alternatively,  a graph class $\GG$ is minor-closed if it is closed under removing nodes, removing edges, and contracting edges.
 
  A cornerstone result in structural graph theory is the \emph{graph minor theorem} of Robertson and Seymour~\cite{ROBERTSON2004325}, which implies that for \emph{any}  minor-closed graph class $\GG$, there is a \emph{finite} list of graphs $H_1, H_2, \ldots, H_k$ such that $G \in \GG$ if and only if $G$ does not contain any of $H_1, H_2, \ldots, H_k$  as a minor. Thus, any minor-closed graph class has a finite description in terms of a list of forbidden minors.
  The ideas developed in the proof of the graph minor theorem hold significance not only for mathematicians but also prove highly valuable in algorithm design and analysis. This has given rise to a thriving research field known as \emph{algorithmic graph minor theory}~\cite{demaine2005algorithmic,demaine2005subexponential}. 
 
 \subsection{Our Contribution}
 The main contribution of this work is the formulation of a  conjecture that characterizes the complexity landscape of $\LCL$ problems for an  \emph{arbitrary} class of graphs that is closed under minors. We present the conjecture in \cref{sect:intro2}. To substantiate the conjecture, we provide a collection of results in \cref{sect:results}, which collectively serve as a partial validation of the proposed conjecture. For the sake of presentation, detailed technical proofs for the assertions made in \cref{sect:intro2,sect:results} are left to \cref{sect:bounded-growth,sect:complexity-landscape,sect:decide,sect-dense-region}. We conclude the paper with a discussion of potential future directions in \cref{sect:conclusions}.

\subsection{Additional Related Work}
 The $\CONGEST$ model of distributed computing is a variant of the $\LOCAL$ model with an $O(\log n)$-bit message size constraint. 
There has been substantial research dedicated to utilizing the structural graph properties of minor-closed graph classes in the design of efficient algorithms in the $\CONGEST$ model: tree decomposition and its applications~\cite{IzumiSPAA22}, low-congestion shortcut and its applications~\cite{ghaffari2021low,haeupler2016low,haeupler2016near,haeupler2018minor,haeuplerLi2018disc,haeupler2018round}, planarity testing and its applications~\cite{ghaffari2016distributed,ghaffari2017near,levi2021property}, expander decomposition and its applications~\cite{chang2023efficient,chang2022narrowing}.

 In \emph{local certification}, labels are assigned to nodes in a network to certify some property of the network. The certification is local in that the checking process is done by an $O(1)$-round $\LOCAL$ algorithm. Researchers have developed local certification algorithms tailored to various minor-closed graph classes~\cite{bousquet2022local,feuilloley2021compact,feuilloley2022can,fraigniaud2022meta,feuilloley2023local,Naor2020soda}. Akin to the study of $\LCL$ problems in the $\LOCAL$ model, some \emph{algorithmic meta-theorems} that apply to a wide range of graph properties have been established for local certification~\cite{feuilloley2022can,fraigniaud2022meta}.

\section{Our Conjecture} \label{sect:intro2}

We first present some terminology that is needed to state our conjecture.

\begin{definition}\label{def-trees}
 For any two positive integers $k$ and $s$, the rooted tree $T_{k,s}$ is defined as follows.
\begin{itemize}
    \item For $k = 1$,  $T_{1,s}$ is an $s$-node path $(v_1, v_2, \ldots, v_s)$, where $v_1$ is designated as the root.
    \item For each $k > 1$, $T_{k,s}$ is constructed as follows. Start from an $s$-node path $(v_1, v_2, \ldots, v_s)$  and $s$ copies $T_1, T_2, \ldots, T_s$ of  $T_{k-1, s}$. For each $j \in [s]$, add an edge connecting $v_j$ and the root of  $T_j$. Designate $v_1$ as the root.
\end{itemize}
\end{definition}

Intuitively, the rooted tree $T_{k,s}$  in \cref{def-trees} can be seen as a $k$-level hierarchical combination of $s$-node paths. 
The significance of $T_{k,s}$ to the complexity landscape of $\LCL$ problems lies in its role as a hard instance. Specifically, both $T_{k,s}$ and its variants have been identified as hard instances for $\LCL$ problems in the complexity class $\Theta(n^{1/k})$. These trees have been employed as lower-bound graphs in \emph{all} existing $\Omega(n^{1/k})$ lower-bound proofs for such $\LCL$ problems~\cite{balliu2022efficient,balliu2022locally,ChangP17,chang:LIPIcs:2020:13096}. See \cref{fig:graph} for an illustration of the construction of $T_{3,3}$ from three copies $T_1$, $T_2$, and $T_3$ of $T_{2,3}$ and a $3$-node path $(v_1, v_2, v_3)$, where the roots are drawn in black.
 
 \begin{definition}\label{def-pw}
 For each non-negative integer $k$, define $\AAA_k$ as the set of all minor-closed graph classes $\GG$ meeting the following two conditions.
\begin{description}
    \item[(C1)] If $k \geq 1$, then $T_{k, s} \in \GG$ for all positive integers $s$.
    \item[(C2)] There exists a positive integer $s$ such that $T_{k+1, s} \notin \GG$.
\end{description}
 \end{definition}

\begin{figure}
	\centering
	\includegraphics[width=\textwidth]{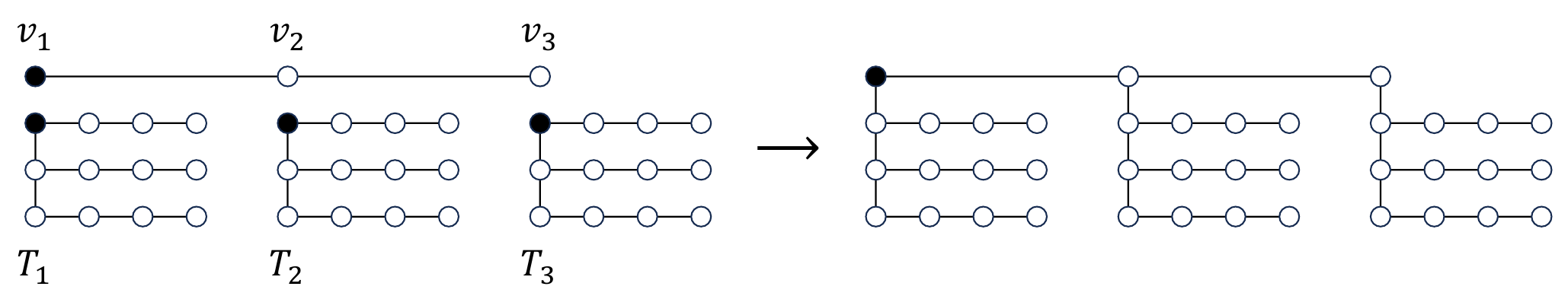}
	\caption{The construction of $T_{3,3}$ in \cref{def-pw}.} 
	\label{fig:graph}
\end{figure}
 
 For each positive integer $k$, $\AAA_k$ is the set of all minor-closed graph classes $\GG$ that includes all of $T_{k,s}$ and excludes some of $T_{k+1,s}$. For the special case of $k=0$, (C1) is vacuously true, so $\AAA_0$ is the set of all minor-closed graph classes $\GG$ that excludes some of $T_{1,s}$.

 

\paragraph{Pathwidth.}
A \emph{path decomposition} of a graph $G=(V,E)$ is a sequence $(X_1, X_2, \ldots, X_k)$ of subsets of $V$ meeting the following conditions.
\begin{description}
    \item[(P1)] $X_1 \cup X_2 \cup \cdots \cup X_k = V$.
    \item[(P2)] For each node $v \in V$, there are two indices $i$ and $j$ such that $v \in X_{l}$ if and only if $i \leq l \leq j$.
    \item[(P3)] For each edge $e=\{u,v\} \in E$, there is an index $i$ such that $\{u,v\} \subseteq X_i$. 
\end{description}

The \emph{width} of a path decomposition $(X_1, X_2, \ldots, X_k)$  is $\max_{1 \leq i \leq k} |X_i| - 1$. The \emph{pathwidth} of a graph is the minimum width over all path decompositions of the graph. Intuitively, the pathwidth of a graph measures how similar it is to a path. A graph class $\GG$ is said to have \emph{bounded pathwidth} if there is a finite number $C$ such that each $G \in \GG$ has pathwidth at most $C$.

It is well-known that for any integer $k$, the set of all graphs with pathwidth at most $k$ is a minor-closed graph class. We emphasize that it is, however, possible that a bounded-pathwidth graph class is not minor-closed. One such example is the set of all graphs with pathwidth at most $k$ and containing at least two nodes.
 
Throughout the paper, we write $\AAA$ to denote the set of all minor-closed graph classes whose pathwidth is bounded. Formally, $\GG \in \AAA$ if $\GG$ is minor-closed and there is a finite number $C$ such that the pathwidth of each $G \in \GG$ is at most $C$.

For the sake of presentation, the proof of the following two statements is deferred to \cref{sect-bounded-pathwidth}.

\begin{restatable}{proposition}{thmpwa}
\label{lem-pathwidth-partition}
$\AAA = \bigcup_{0 \leq i < \infty} \AAA_i$ is a partition of $\AAA$ into disjoint sets.
\end{restatable}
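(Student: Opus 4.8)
The statement claims two things: (i) every $\GG \in \AAA$ lies in $\AAA_i$ for some $i \geq 0$, and (ii) the sets $\AAA_0, \AAA_1, \AAA_2, \ldots$ are pairwise disjoint. Let me think about both.

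First, disjointness. Suppose $\GG \in \AAA_i \cap \AAA_j$ with $i < j$. By (C2) for $\AAA_i$, there is $s$ with $T_{i+1,s} \notin \GG$. By (C1) for $\AAA_j$ (note $j \geq i+1 \geq 1$), we have $T_{j,s'} \in \GG$ for all $s'$. Now I'd want to say $T_{i+1, s}$ is a minor of $T_{j, s'}$ for suitable $s'$ (e.g. $s' = s$), so that $\GG$ minor-closed plus $T_{j,s} \in \GG$ forces $T_{i+1,s} \in \GG$, a contradiction. The key sublemma is: **$T_{k,s}$ is a minor (in fact a subgraph) of $T_{k',s'}$ whenever $k \leq k'$ and $s \leq s'$.** This is immediate from Definition~\ref{def-trees} by induction on $k$: $T_{k,s}$ embeds into the "first branch at each level" of $T_{k',s'}$ — take the length-$s$ prefix of the top path, hang $T_{k-1,s}$ off $v_1$ (inductively a subgraph of the $T_{k-1,s'}$ copy hanging off the corresponding vertex), and ignore the rest. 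So disjointness reduces to this structural embedding fact, which is routine.

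Second, the covering/partition claim: $\AAA = \bigcup_i \AAA_i$. One inclusion ($\supseteq$) requires showing each $\AAA_i \subseteq \AAA$, i.e. membership in some $\AAA_i$ forces bounded pathwidth. Here I'd use the characterization that a minor-closed class has bounded pathwidth iff it excludes some tree (equivalently, iff it does not contain all forests) — this is the classical result of Robertson–Seymour / Bienstock–Robertson–Seymour–Thomas that forests of unbounded pathwidth are exactly the obstruction, together with the fact that the $T_{k+1,s}$ have pathwidth tending to infinity as $s \to \infty$ (indeed $T_{k,s}$ has pathwidth exactly $k$ for $s$ large, or at least unbounded). Condition (C2) says some $T_{k+1,s} \notin \GG$, and since $\GG$ is minor-closed it then excludes a fixed tree, hence has bounded pathwidth. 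For the other inclusion ($\subseteq$): given $\GG \in \AAA$, let $k$ be the largest integer such that $T_{k,s} \in \GG$ for all $s$ (set $k=0$ if no such integer, which happens when some $T_{1,s} \notin \GG$). I need $k$ to be well-defined and finite: finiteness follows because $\GG$ has bounded pathwidth $C$ while $T_{k,s}$ has pathwidth growing without bound in $\min(k,s)$ (e.g. $\operatorname{pw}(T_{k,s}) \geq k$ once $s \geq 2$, or a similar bound), so $T_{k,s} \notin \GG$ once $k > C$ and $s$ is large enough. Then $\GG$ satisfies (C1) for this $k$ by maximality, and (C2) holds because $k$ is maximal — meaning $T_{k+1,s} \notin \GG$ for at least one $s$. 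Hence $\GG \in \AAA_k$.

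The main obstacle is pinning down the pathwidth of $T_{k,s}$, or at least a clean lower bound showing it grows with the level $k$ (uniformly once $s$ is moderately large). The intuition is that $T_{k,s}$ is a "$k$-fold path-of-paths," and each extra level of hierarchy forces one more unit of pathwidth, so $\operatorname{pw}(T_{k,s}) = k$ for $s \geq 2$ — this is a known fact about such caterpillar-like trees (they are the canonical examples witnessing that bounded-pathwidth trees have a bounded "hierarchy depth"). I would either cite this directly or prove the lower bound $\operatorname{pw}(T_{k,s}) \geq k$ by a short induction: in any path decomposition, the top $s$-path ($s \geq 2$) needs some bag, and the subtrees hanging off it — being pairwise in "different parts" of the decomposition — force the recursion, giving $+1$ per level. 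Everything else is bookkeeping with the minor-closed closure property and the definitions of $\AAA_i$.
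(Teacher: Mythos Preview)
Your proposal is correct and follows essentially the same route as the paper: disjointness via the minor containment $T_{i+1,s}\preceq T_{j,s}$ for $i+1\le j$, and the covering argument by combining the excluding-forest theorem with a lower bound on $\operatorname{pw}(T_{k,s})$ to force finiteness of the maximal admissible~$k$.

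Two small corrections. First, the threshold is $s\ge 3$, not $s\ge 2$: for $s=2$ both $T_{1,2}$ and $T_{2,2}$ are paths, so $\operatorname{pw}(T_{k,2})<k$ in general. Second, one of your parentheticals is off: for fixed $k$, $\operatorname{pw}(T_{k+1,s})$ does \emph{not} tend to infinity as $s\to\infty$; it equals $k+1$ for every $s\ge 3$. Fortunately that claim is never needed, since excluding any single tree already gives bounded pathwidth by the excluding-forest theorem. For the lower bound $\operatorname{pw}(T_{k,3})\ge k$ that you leave as a sketch, the paper re-roots $T_{k,3}$ at the middle vertex $v_2$ and applies Scheffler's lemma (a root with three subtrees each of pathwidth $\ge k-1$ forces pathwidth $\ge k$); each of the three subtrees at $v_2$ contains a copy of $T_{k-1,3}$, so induction goes through.
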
 

\cref{lem-pathwidth-partition} shows that $\bigcup_{0 \leq i < \infty} \AAA_i$ is a classification of all  bounded-pathwidth  minor-closed graph classes.

\begin{restatable}{proposition}{thmpwb}
\label{thm-pathwidth-k}
For every integer $k\geq 1$, the class of all graphs with pathwidth at most $k$ is in $\AAA_k$.
\end{restatable}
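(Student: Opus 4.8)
The plan is to check the three conditions that place $\GG_k := \{G : \operatorname{pw}(G)\le k\}$ in $\AAA_k$: that $\GG_k$ is minor-closed, that it satisfies (C1), and that it satisfies (C2). Minor-closedness of $\GG_k$ is recorded in the excerpt, so only (C1) and (C2) need work, and these amount, respectively, to an upper bound and a lower bound on $\operatorname{pw}(T_{k,s})$: for (C1) we must show $\operatorname{pw}(T_{k,s})\le k$ for every $s$, and for (C2) it suffices to exhibit a single $s$ with $\operatorname{pw}(T_{k+1,s})\ge k+1$.

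For (C1) I would argue by induction on $k$, proving the slightly stronger statement that $T_{k,s}$ admits a path decomposition of width at most $k$ whose first bag contains the root. The base case $k=1$ is the decomposition $(\{v_1\},\{v_1,v_2\},\{v_2,v_3\},\dots)$ of a path. For $k>1$, take such decompositions $D_1,\dots,D_s$ for the $s$ copies $T_1,\dots,T_s$ of $T_{k-1,s}$ used to build $T_{k,s}$, and concatenate: start with $\{v_1\}$, append $D_1$ with $v_1$ inserted into every bag, then $\{v_1,v_2\}$, then $D_2$ with $v_2$ inserted into every bag, then $\{v_2,v_3\}$, and so on. Each bag has size at most $(k-1+1)+1=k+1$; the interval property (P2) holds because each $v_j$ occupies a contiguous block of consecutive bags and the vertices of each $T_j$ occupy the block corresponding to $D_j$; and (P3) holds because edges inside $T_j$ are covered by $D_j$, spine edges $\{v_j,v_{j+1}\}$ by the two-element bag, and the connecting edge $\{v_j,\operatorname{root}(T_j)\}$ by the first bag of the $D_j$-block, where $\operatorname{root}(T_j)$ lives by the inductive hypothesis and into which $v_j$ has been inserted. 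Hence $\operatorname{pw}(T_{k,s})\le k$ and $T_{k,s}\in\GG_k$ for all $s$, which is (C1).

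For (C2) I would take $s=3$ and prove by induction on $k$ that $\operatorname{pw}(T_{k,3})\ge k$ for all $k\ge 1$, the base case being $\operatorname{pw}(T_{1,3})=1$ (a path on three vertices). The engine is the standard three-branch lemma for the pathwidth of trees: if a tree $T$ has a vertex $v$ such that at least three connected components of $T-v$ each have pathwidth $\ge p$, then $\operatorname{pw}(T)\ge p+1$. Apply it with $v=v_2$, the middle spine vertex of $T_{k,3}$: the three components of $T_{k,3}-v_2$ are the copy $T_2\cong T_{k-1,3}$, the component containing $v_1$ together with $T_1$, and the component containing $v_3$ together with $T_3$; each of the three contains a copy of $T_{k-1,3}$ as a subgraph, so by monotonicity of pathwidth under subgraphs and the inductive hypothesis each has pathwidth $\ge k-1$. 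The lemma then yields $\operatorname{pw}(T_{k,3})\ge k$. Consequently $\operatorname{pw}(T_{k+1,3})\ge k+1>k$, so $T_{k+1,3}\notin\GG_k$, which is (C2); together with (C1) and minor-closedness this gives $\GG_k\in\AAA_k$.

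The main obstacle is the three-branch lemma, i.e., the pathwidth \emph{lower} bound: the upper-bound direction (C1) is a routine gluing of decompositions, but (C2) requires showing that \emph{every} path decomposition of $T_{k+1,3}$ contains a bag of size at least $k+2$. The cleanest rigorous route I know goes through the equivalence between pathwidth and node search number, $\operatorname{ns}(G)=\operatorname{pw}(G)+1$, together with the monotonicity of node searching: in a monotone strategy clearing $T$ with at most $p+1$ searchers, at the moment a searcher is first placed on $v$, all but one of the three heavy branches must already be clean; but each branch needs at least $p+1$ searchers to clear on its own, and while that branch is being cleared $v$ must stay guarded, forcing a $(p+2)$-nd searcher, a contradiction. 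One can also argue directly with bags — restricting a width-$p$ decomposition of $T$ to a branch of pathwidth $\ge p$ produces a bag with at least $p+1$ of that branch's vertices, which then has size exactly $p+1$ and so misses $v$, after which a pigeonhole argument on the positions of the three such bags relative to the interval of bags containing $v$ yields the contradiction — but this bookkeeping is more delicate. In either case the three-branch lemma is a well-known property of the pathwidth of trees, so I would state it and cite it, and the rest of the argument is as above.
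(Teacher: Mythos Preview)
Your proposal is correct and follows essentially the same route as the paper: the upper bound $\operatorname{pw}(T_{k,s})\le k$ is proved by the same inductive gluing of path decompositions (the paper also adds $v_j$ to every bag of the $T_j$ decomposition and interleaves with the two-element bags $\{v_j,v_{j+1}\}$), and the lower bound $\operatorname{pw}(T_{k,3})\ge k$ is obtained via exactly the three-branch lemma you describe, which the paper states as a lemma and attributes to Scheffler, applied after re-rooting at $v_2$. The only cosmetic difference is that you carry the extra ``root in the first bag'' invariant through the induction, which is harmless but unnecessary since $v_j$ is inserted into every bag of $D_j$.
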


 
We emphasize that the converse of \cref{thm-pathwidth-k} does not hold in the following sense: $\GG \in \AAA_k$ does not imply that each $G \in \GG$ has pathwidth at most $k$. For example, if we let $\GG$ be the set of all graphs with at most $k+1$ nodes, then $\GG \in \AAA_0$ because it does not contain the $(k+2)$-node path $T_{1, k+2}$, but $\GG$ contains the $(k+1)$-node complete graph, whose pathwidth is $k$.

\paragraph{The conjecture.}
  For any graph class $\GG$ and a positive integer $\Delta$, we write $\GG^\Delta$ to denote the set of graphs in $\GG$ whose maximum degree is at most $\Delta$.
   In the subsequent discussion, we informally say that a range of complexities is  \emph{dense} to indicate that many different complexity functions in this range can be realized by $\LCL$ problems in a way reminiscent of the famous \emph{time hierarchy theorem} for Turing machines.  We are now ready to state our conjecture that characterizes the complexity landscape of $\LCL$ problems for an  \emph{arbitrary} minor-closed graph class.

\begin{conjecture}\label{conjecture-main}
Let $\mathcal{G}$ be a minor-closed graph class that is not the class of all graphs, and let $\Delta\geq 3$ be an integer. The complexity landscape of $\LCL$ problems on $\GG^\Delta$ is characterized as follows.
\begin{itemize}
    \item If $\GG \in \AAA_0$, then $O(1)$ is the only possible complexity class. 
    \item If $\GG \in \AAA_k$ for some $k > 0$, then the  possible complexity classes are exactly \[O(1), \Theta(\log^\ast n), \Theta\left(n^{1/k}\right), \Theta\left(n^{1/(k-1)}\right), \ldots, \Theta(n).\]
    \item If $\GG$ has unbounded pathwidth and bounded treewidth, then  $\GG$ shares the same complexity landscape as trees. In other words, the possible complexity classes are exactly \[O(1), \Theta(\log^\ast n), \Theta(\log \log n), \Theta(\log n), \ \text{and} \ \Theta\left(n^{1/k}\right) \  \text{for all positive integers $k$}.\]
    \item If $\GG$ has unbounded treewidth, then the  possible complexity classes are exactly
    \[O(1), \Theta(\log^\ast n), \Theta(\log \log n), \ \text{and a dense region} \ \left[\Theta(\log n), \Theta(n)\right].\]
\end{itemize}
\end{conjecture}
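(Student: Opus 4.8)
The plan is to treat the four cases of the conjecture independently, after first using structural graph theory to decide which case a given minor-closed $\GG$ falls into (bounded treedepth, i.e.\ $\GG\in\AAA_0$; bounded pathwidth with level $k\ge1$, i.e.\ $\GG\in\AAA_k$; bounded treewidth but unbounded pathwidth; unbounded treewidth). The case $\GG\in\AAA_0$ is immediate: $T_{1,s}\notin\GG$ for some $s$, so $T_{1,s}$ is not a minor of any $G\in\GG$, and hence no connected $G\in\GG$ contains a path on $s$ vertices as a subgraph; a shortest path between two vertices at distance $\ge s-1$ would be such a subgraph, so every connected $G\in\GG$ has diameter at most $s-2$, which with the degree bound $\Delta$ forces $|V(G)|\le 1+\Delta+\cdots+\Delta^{s-2}=O_{s,\Delta}(1)$. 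Collecting each component then solves every $\LCL$ in $O(1)$ rounds, so $O(1)$ is the only complexity class.

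\emph{The case $\GG\in\AAA_k$, $k\ge1$.} By \cref{lem-pathwidth-partition}, $\GG$ has pathwidth at most some constant $C=C(\GG)$, and since $T_{j,s}$ is a subgraph of $T_{k,s}\in\GG$ for every $1\le j\le k$, the class $\GG$ contains all $T_{j,s}$; in particular it contains arbitrarily long paths. For realizability: $O(1)$ is trivial; $\Theta(\log^\ast n)$ is realized by $(\Delta+1)$-vertex-colouring ($O(\log^\ast n)$ by Linial~\cite{Linial92}, $\Omega(\log^\ast n)$ on the long paths in $\GG$); and for each $1\le j\le k$ one designs a "$\,j$-level'' pointer/orientation $\LCL$ $\PP_j$ that is $\Omega(n^{1/j})$ on $T_{j,s}$ --- reusing the hard-instance analyses of~\cite{ChangP17,chang:LIPIcs:2020:13096,balliu2022locally} for which $T_{j,s}$ is the canonical lower-bound graph --- while admitting an $O(n^{1/j})$-round algorithm on \emph{every} $G\in\GG^\Delta$, obtained by recursing through a width-$C$ path decomposition of $G$; taking $j=1$ gives a problem of complexity $\Theta(n)$. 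For the gaps one proves a pathwidth speed-up theorem: every $\LCL$ on $\GG^\Delta$ of complexity $o(n^{1/j})$ in fact has complexity $O(n^{1/(j+1)})$, with $n^{1/(k+1)}$ read as $\log^\ast n$, after which the classical $(\omega(\log^\ast n),o(\log n))$ gap~\cite{BrandtEtal16} closes the last step; telescoping this over $j=k,k-1,\dots,1$ shows that nothing lies strictly between consecutive $\Theta(n^{1/j})$'s and that nothing lies in $(\omega(\log^\ast n),o(n^{1/k}))$ --- which also rules out $\Theta(\log n)$ and $\Theta(\log\log n)$ here, since both are $\omega(\log^\ast n)$ and $o(n^{1/k})$. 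Finally the derandomization theorem of~\cite{ChangKP16} collapses the randomized and deterministic classes throughout the relevant range. I expect the speed-up theorem to be the crux of this case: the paper establishes it for the concrete class of graphs of pathwidth $\le k$, by induction on the width, but a general $\GG\in\AAA_k$ may have pathwidth larger than $k$, so the recursion must be driven by $\AAA_k$-membership rather than by the width number, which seems to require pruning the "irrelevant'' parts of a path decomposition using the excluded tree $T_{k+1,s}$ before applying round elimination.

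\emph{The case of bounded treewidth and unbounded pathwidth.} By the classical fact that a minor-closed class has bounded pathwidth iff it excludes some forest --- equivalently, the complete binary tree of some fixed height --- such a $\GG$ contains the complete binary tree $B_h$ of every height $h$. Transplanting every known tree lower bound onto $\{B_h\}_h$ imports the \emph{entire} tree landscape into $\GG^\Delta$ as lower bounds; this is the ``only if'' direction the paper proves (and, combined with the previous two cases, it also proves the ``only if'' direction of the tree-landscape characterization, since both bounded pathwidth and unbounded treewidth destroy that landscape). The matching upper bounds --- that on a bounded-treewidth class \emph{every} $\LCL$ has complexity equal to one of the tree complexities --- are the hard ``if'' direction: one would balance a width-$w$ tree decomposition into one of depth $O(\log n)$ and then lift the tree algorithms and tree gap theorems through it while controlling the bounded-but-nontrivial interactions across bags of size $w>1$. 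This appears to demand a genuinely new $\LCL$-on-bounded-treewidth classification theorem, since round elimination and the existing speed-up arguments are tailored to the single-vertex "bags'' of a tree.

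\emph{The case of unbounded treewidth, and the overall obstacle.} By the excluded-grid theorem, such a $\GG$ contains every grid as a minor (while, as $\GG$ is not the class of all graphs, it still excludes some complete graph $K_t$). Grids already host $\LCL$s of complexity $\Theta(n^{\alpha})$ for a dense set of exponents $\alpha\in[\text{const},1]$, built from two-dimensional analogues of the $T_{k,s}$-style hierarchy constructions, which yields the dense region $[\Theta(\log n),\Theta(n)]$; the sub-$\log n$ part ($O(1),\Theta(\log^\ast n),\Theta(\log\log n)$, together with the gaps separating them and from the dense region) should be inherited from the corresponding statements on general bounded-degree graphs, verified to survive on the minor-closed class at hand. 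Pinning down exactly which exponents are realized, closing the dense region up to a genuine interval, and supplying $\GG$-uniform upper bounds are the open points here. Overall, the $\AAA_0$ case is routine and the $\AAA_k$ case is a nontrivial but plausible extension of the paper's pathwidth-$\le k$ results; the principal obstacle is the ``if'' direction of the bounded-treewidth case --- a new classification of $\LCL$s on bounded-treewidth graphs --- together with making the dense region of the unbounded-treewidth case precise.
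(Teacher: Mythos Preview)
The statement you are attempting to prove is explicitly a \emph{conjecture} in the paper; the paper does not prove it and says so (``we prove a part of the conjecture''). So there is no ``paper's own proof'' to compare against, only a collection of partial results. Your proposal is therefore best read as an outline of what a full proof would require, and on that reading you correctly flag the two main open pieces: the gaps between consecutive $\Theta(n^{1/j})$ classes in the $\AAA_k$ case, and the ``if'' direction (the upper bounds) in the bounded-treewidth/unbounded-pathwidth case.

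There is, however, a concrete factual error. You write that ``the paper establishes [the pathwidth speed-up theorem] for the concrete class of graphs of pathwidth $\le k$, by induction on the width.'' It does not. For $\GG\in\AAA_k$ the paper proves only: existence of $\Theta(n^{1/s})$ for each $1\le s\le k$ (\cref{thm-path-LCL-2}), the single gap $\omega(\log^\ast n)$--$o(n^{1/k})$ (\cref{thm-path-LCL-3}), and the gap $\omega(1)$--$o((\log^\ast n)^{1/k})$ (\cref{thm-path-LCL-4}). All the intermediate gaps $\Theta(n^{1/j})$--$\Theta(n^{1/(j-1)})$ are left open, marked ``?'' in the summary table in \cref{sect:results}. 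There is no induction on width, no round-elimination argument, and no telescoping; the only gap proof is a direct adaptation of the Chang--Kopelowitz--Pettie speed-up using the $k$-growth-bounded property of \cref{thm:alt-definition}. So the ``plausible extension'' you describe is not an extension of something the paper already does --- it would be entirely new, and is precisely the missing ingredient.

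Two smaller remarks. Your $\AAA_0$ argument (no long path minor $\Rightarrow$ bounded diameter $\Rightarrow$ bounded size) is fine and slightly more elementary than the paper's route via the growth-bounded characterization, though they amount to the same thing. For the unbounded-treewidth case, the paper does construct $\LCL$s of complexity $\Theta(n^c)$ for all rational $0\le c\le 1$ and $\Theta(\log^c n)$ for all rational $c\ge1$ on any such $\GG$ (\cref{thm-time-hierarchy-main}), using a Turing-machine-on-a-grid construction augmented with an attached complete binary tree; this is more than the ``dense set of exponents'' you allude to, but still does not establish that \emph{only} these complexities occur, which remains open.
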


Similar to the case of trees discussed earlier, all complexity classes in \cref{conjecture-main} apply to both randomized and deterministic settings, except that the complexity $\Theta(\log \log n)$ only exists in the randomized setting.

\section{Supporting Evidence} \label{sect:results}

In this section, we show a collection of
results that prove a part of \cref{conjecture-main}.
We say that an $\LCL$ problem $\PP$ is \emph{solvable} in a graph class $\GG$ if all graphs in $\GG$ admit a correct solution for $\PP$.
Let us first consider $\AAA_0$.
Intuitively, $\GG \in \AAA_0$ means that there is some length bound $\ell$ such that all graphs in $\GG$ do not contain a minor isomorphic to the $\ell$-length path. Therefore, all members in $\GG$ have bounded diameter, so as long as the considered $\LCL$ problem $\PP$ is solvable, $\PP$ can be solved in $O(1)$ rounds in the $\LOCAL$ model by a brute-force information gathering.

\begin{restatable}{theorem}{thmcomplexitya}
\label{thm-path-LCL-1}
Let  $\Delta\geq 3$  be an integer, and let $\GG \in \AAA_0$. All $\LCL$ problems that are solvable in $\GG^\Delta$ can be solved in $O(1)$ rounds in $\GG^\Delta$.
\end{restatable}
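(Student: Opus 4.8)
The plan is to establish two facts: first, that every graph in $\GG^\Delta$ has diameter bounded by some constant $D = D(\GG, \Delta)$, and second, that bounded diameter plus solvability implies an $O(1)$-round algorithm via brute-force information gathering. For the diameter bound, observe that since $\GG \in \AAA_0$, condition (C2) of \cref{def-pw} gives a positive integer $s$ with $T_{1,s} \notin \GG$; that is, the $s$-node path is a forbidden minor. Now I claim that a connected bounded-degree graph with no $s$-node path as a subgraph has bounded diameter. More carefully: if a connected graph $G$ has diameter $d$, then it contains a shortest path on $d+1$ vertices, which in particular contains the $(d+1)$-node path as a subgraph and hence as a minor. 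So if $d + 1 \geq s$, then $T_{1,s}$ is a minor of $G$, contradicting $G \in \GG$. (If $G$ is disconnected, the $\LCL$ problem is solved independently on each component, and the argument applies componentwise.) Therefore every $G \in \GG^\Delta$ has diameter at most $s - 2$, a constant depending only on $\GG$.

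Next, I would use the bounded diameter to design the $O(1)$-round algorithm. Fix $D := s-2$ and let $r$ be the checkability radius of the $\LCL$ problem $\PP$. In $D$ rounds of communication, every node can learn its entire connected component: its full topology, all input labels, and (in the deterministic model) all identifiers. Since $\PP$ is solvable in $\GG^\Delta$, each component $G$ admits at least one correct solution $\phi_{\text{out}}$. Each node, having learned $G$ entirely, can locally enumerate all labelings of $G$ by $\LabelOut$ (a finite search since $G$ is finite), check each one against the allowed configurations $\CC$ (each node can verify correctness because it knows all of $G$), and select a correct one according to a fixed deterministic rule — for instance, the lexicographically smallest correct labeling under the ordering induced by the identifiers (or in the randomized setting, under the ordering induced by the component's topology together with the input labels, using some canonical tie-breaking; randomness is not actually needed here). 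Because every node runs the same deterministic selection rule on the same data (the whole component), they all agree on the same global solution, so the output is globally consistent and correct. Unlimited local computation is available in the $\LOCAL$ model, so the finite brute-force search is free.

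The only subtlety — and the one point that needs a word of care rather than a genuine obstacle — is the tie-breaking / consistency issue in the randomized setting, where there are no identifiers. If two components (or two parts of one component) are isomorphic as labeled graphs, nodes must still pick a consistent solution; this is handled by noting that each node selects the solution for its own component based solely on that component's isomorphism type as a graph with input labels, using a canonical solution chosen once and for all for each isomorphism type. Since correctness of a solution is invariant under relabeling vertices by an isomorphism, and since each node sees the full component, all nodes in a component agree. Thus $\PP$ is solved in $\max(D, r) = O(1)$ rounds in both settings, which is the claim. I expect this last bookkeeping step — making the selection rule genuinely well-defined and isomorphism-invariant — to be the most delicate part to write precisely, though it is entirely routine.
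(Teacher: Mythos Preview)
Your argument is correct and takes a slightly more elementary route than the paper. The paper's proof invokes the general bounded-growth characterization (\cref{thm:alt-definition}) to deduce that $\GG$ is $0$-growth-bounded, i.e.\ every $d$-radius neighborhood in $\GG^\Delta$ has size at most a constant $C_\Delta$ independent of $d$, from which it concludes that every graph in $\GG^\Delta$ has at most $C_\Delta$ nodes and hence any solvable $\LCL$ can be solved in $C_\Delta = O(1)$ rounds by brute force. You instead go directly from the definition of $\AAA_0$: the forbidden $s$-node path minor forces diameter at most $s-2$, and together with the degree bound $\Delta$ this already gives constant component size. Your approach is self-contained for the $k=0$ case and avoids the machinery of \cref{sect:bounded-growth}; the paper's version has the virtue of uniformity, since the same tool (\cref{thm:alt-definition}) is what drives the proofs of \cref{thm-path-LCL-3,thm-path-LCL-4} for general $k$. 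Your handling of the consistency issue in the randomized setting via a canonical solution per isomorphism type is also fine and more explicit than what the paper writes.
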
 

Our classification $\AAA = \bigcup_{0 \leq i < \infty} \AAA_i$ makes sense from two different points of view. The first viewpoint is to consider the special role of the trees $T_{k,s}$ as hard instances in the study of the complexity of $\LCL$ problems.

Recall from an earlier discussion that \emph{all} existing $\Omega(n^{1/k})$ lower-bound proofs for $\LCL$ in the complexity class $\Theta(n^{1/k})$ in trees are based on the tree $T_{k,s}$ or its variants. In view of the definition of $\AAA_k$, for any $\GG \in \AAA_k$. we expect that the complexity class $\Theta(n^{1/s})$ exists for all $s\in\{1, 2, \ldots,k\}$.

\begin{restatable}{theorem}{thmcomplexityb}
\label{thm-path-LCL-2}
Let $k\geq 1$, $\Delta\geq 3$, and $s\in\{1, 2, \ldots,k\}$ be integers, and let $\GG \in \AAA_k$. 
There is an $\LCL$ problem whose complexity in $\GG^\Delta$ is $\Theta(n^{1/s})$.
\end{restatable}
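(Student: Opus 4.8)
The plan is to exhibit, for each $s\in\{1,\dots,k\}$, a single $\LCL$ problem $\PP_s$ --- the standard $s$-level hierarchical problem (or a minor variant of it) used in the known time-hierarchy results for paths and trees, whose $\Omega(n^{1/s})$ lower bound is realised on the trees $T_{s,m}$, and which is solvable on every bounded-degree graph --- and to establish two facts about it on $\GG^\Delta$: its complexity is $\Omega(n^{1/s})$ because $\GG^\Delta$ contains all of $T_{s,m}$, and its complexity is $O(n^{1/s})$ because every graph in $\GG$ has bounded pathwidth.

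\textbf{Lower bound.} First I would verify $T_{s,m}\in\GG^\Delta$ for every positive integer $m$. Since $1\le s\le k$, condition (C1) of \cref{def-pw} gives $T_{k,m}\in\GG$. By \cref{def-trees}, each $T_{j,m}$ contains $T_{j-1,m}$ as a subgraph, so $T_{s,m}$ is a subgraph, hence a minor, of $T_{k,m}$; as $\GG$ is minor-closed, $T_{s,m}\in\GG$, and since $T_{s,m}$ has maximum degree $3\le\Delta$ we conclude $T_{s,m}\in\GG^\Delta$. Now I invoke the known $\Omega(n^{1/s})$ lower bound for $\PP_s$ on the family $\{T_{s,m}\}_{m\ge1}$ --- this is precisely the regime of the existing lower-bound arguments recalled after \cref{def-pw}, e.g.\ \cite{ChangP17} --- and observe that a lower bound on a subfamily of instances is a lower bound on all of $\GG^\Delta$. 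The argument is insensitive to randomness, so it holds in both the deterministic and the randomized $\LOCAL$ model.

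\textbf{Upper bound.} By \cref{lem-pathwidth-partition}, $\GG\in\AAA_k\subseteq\AAA$, so there is a constant $C$, depending only on $\GG$, such that every $G\in\GG$ --- in particular every $G\in\GG^\Delta$ --- has pathwidth at most $C$. The remaining goal is the structural claim that \emph{$\PP_s$ can be solved in $O(n^{1/s})$ rounds on every bounded-degree graph of pathwidth at most $C$}. The intuition is that such a graph is ``path-like'': fix a width-$C$ path decomposition $(X_1,\dots,X_N)$ with $N=O(n)$ bags; contracting each bag produces a host path on $O(n)$ super-nodes of bounded size, and every edge of $G$ lies inside one bag or between consecutive bags. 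The standard $O(N^{1/s})$-round algorithm solving $\PP_s$ on a length-$N$ path inspects only an $O(N^{1/s})$-radius neighbourhood of each node; simulated over the bags of $G$ it produces a correct $\PP_s$-labeling in $O(n^{1/s})$ rounds.

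\textbf{Main obstacle.} I expect the crux to be the upper bound, specifically making the ``path-like'' simulation rigorous in the $\LOCAL$ model: because $O(n^{1/s})$ can be far below the diameter of $G$, a node cannot afford to learn a global path decomposition. There are two points to handle. First, either one computes a locally consistent fragment of a bounded-width path decomposition of $G$ within $O(n^{1/s})$ rounds, or --- the cleaner route --- one builds the relevant part of such a decomposition into the output labels of $\PP_s$ at a controlled cost, exploiting that excluding a single $T_{k+1,s'}$-minor already pins down the shape of members of $\GG$ up to bounded-size local attachments (this is also what keeps the resulting problem checkable as an $\LCL$). Second, one must check that the hierarchical/ruling-set structure underlying the path algorithm for $\PP_s$ is robust to the bounded blow-up introduced by bags and to disconnectedness, so that its locality increases only by a constant factor. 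Once $\PP_s$ is set up with a bounded-pathwidth witness in its valid solutions, the remaining analysis reduces to the already-understood path and tree case.
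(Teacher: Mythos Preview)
Your lower-bound argument is exactly the paper's: pick the hierarchical problem $\PP_s$ from \cite{ChangP17}, note that $T_{s,m}$ is a minor of $T_{k,m}\in\GG$ so $T_{s,m}\in\GG^\Delta$, and invoke the known $\Omega(n^{1/s})$ bound on those trees.

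The upper bound, however, is where you diverge from the paper, and where your proposal has a real gap. You attribute the $O(n^{1/s})$ upper bound to the bounded pathwidth of $\GG$ and then set yourself the task of simulating the path algorithm over a path decomposition that the nodes cannot globally see. You correctly identify this as the ``main obstacle,'' and you do not actually resolve it; the sketch of embedding a bounded-width decomposition into the output labels of $\PP_s$ is vague, and it is not clear it can be made into an $\LCL$ whose lower bound on $T_{s,m}$ survives.

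The paper avoids this entirely: the problem $\PP_s$ of \cite{ChangP17} is already solvable in $O(n^{1/s})$ rounds on \emph{all} graphs, not just bounded-pathwidth ones. So the upper bound is a direct citation, with no use of the structure of $\GG$ whatsoever; bounded pathwidth is relevant only to the lower bound (via containment of $T_{s,m}$). In short, the obstacle you are worried about is an artefact of your choice of upper-bound strategy, and the paper's one-line fix is to remember that $\PP_s$ was designed from the start to have a general-graph $O(n^{1/s})$ algorithm.
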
 

Similarly, for any $\GG \in \AAA_k$. we expect that the complexity class $\Theta(n^{1/s})$ does not exist for all $s\in\{k+1, k+2, \ldots\}$. To establish this claim, we consider a different perspective.

In \cref{sect:bounded-growth}, we prove an alternative characterization of the set of graph classes $\AAA_k$ in terms of the growth rate of the size of the $d$-radius neighborhood. In particular, we show that $\AAA_k$ is precisely the set of all minor-closed graph classes $\GG$ such that $k$ is the smallest number such that the size of the $d$-radius neighborhood of any node in any bounded-degree graph in $\GG$ is  $O(d^k)$. 

The growth rate of the size of the $d$-radius neighborhood is relevant to the complexity landscape of $\LCL$ problems in that the growth rate affects the complexity gaps resulting from existing approaches. In particular, the alternative characterization of $\AAA_k$, combined with the existing proofs~\cite{ChangKP16,ChangP17,NaorS95} to establish complexity gaps for $\LCL$ problems in general graphs, yields the following two results.

\begin{restatable}{theorem}{thmcomplexityc}
\label{thm-path-LCL-3}
Let $k\geq 1$ and $\Delta\geq 3$ be integers, and let $\GG \in \AAA_k$. There is no $\LCL$ problem whose deterministic complexity in $\GG^\Delta$ is between $\omega(\log^\ast n)$ and $o(n^{1/k})$.
\end{restatable}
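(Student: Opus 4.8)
The plan is to transfer the known deterministic speed-up theorem for general bounded-degree graphs---there is no LCL with deterministic complexity strictly between $\omega(\log^\ast n)$ and $o(\log n)$, and more generally the speed-up arguments of~\cite{ChangKP16,ChangP17,NaorS95}---to the setting of $\GG^\Delta$ with $\GG \in \AAA_k$, exploiting the polynomial growth bound that the alternative characterization of $\AAA_k$ (proved in \cref{sect:bounded-growth}) provides. Concretely, I would first invoke that characterization to fix the fact that every $G \in \GG^\Delta$ satisfies $|B_d(v)| = O(d^k)$ for every node $v$ and radius $d$, where the hidden constant depends only on $\GG$ and $\Delta$. This is the single structural property of $\GG^\Delta$ that the argument uses.

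The core step is a Ramsey-type / order-invariance argument in the spirit of Naor--Stockmeyer, adapted to larger radii. Suppose $\PP$ has a deterministic algorithm $\Algo$ on $\GG^\Delta$ running in $T(n) = o(n^{1/k})$ rounds. I would argue that for infinitely many $n$, $T(n)$ is small enough that the ball $B_{T(n)}(v)$ of any node has size $O(T(n)^k) = o(n)$; hence the identifiers seen by $v$ form a set of $o(n)$ values drawn from a universe of size $\poly(n)$, and by a Ramsey argument the output of $\Algo$ at $v$ depends only on the order-type of the identifiers in $B_{T(n)}(v)$, not their actual values. This yields an \emph{order-invariant} algorithm of the same locality, which---again because $T(n) = o(n^{1/k})$ and balls are polynomially small, so one can glue together many disjoint copies of a hard instance without creating identifier conflicts or exceeding the node budget---can be turned into an algorithm whose output at $v$ depends only on a \emph{constant-radius} neighborhood together with the $\log^\ast$-type information extractable in $O(\log^\ast n)$ rounds. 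The cleanest route is probably to cite the general speed-up theorem of~\cite{ChangKP16} (for the $\omega(\log^\ast n)$-to-$o(\log n)$ gap) to push $T(n)$ down to $O(\log^\ast n)$ once we have reached any $T(n) = O(\log n)$ regime, and to use the polynomial-growth bound only to bridge from $o(n^{1/k})$ down to $O(\log n)$: if $T(n) = o(n^{1/k})$ but $T(n) = \omega(\log n)$, then balls of radius $T(n)$ have size $o(n)$ yet superpolylogarithmic, and the Ramsey/order-invariance machinery of~\cite{ChangP17} applied on the class $\GG^\Delta$ collapses the complexity to $O(\log n)$, after which~\cite{ChangKP16} finishes the job.

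I would structure the write-up as: (i) recall the polynomial-growth characterization of $\AAA_k$; (ii) state the order-invariance lemma specialized to polynomially-growing graph classes (``if an LCL on $\GG^\Delta$ has a deterministic algorithm of locality $T(n) = o(n^{1/k})$, it has an order-invariant one of locality $O(T(n))$''), proving it by the pigeonhole/Ramsey argument on the $O(T(n)^k)$ identifiers visible to a node; (iii) observe that an order-invariant algorithm of locality $o(n^{1/k})$ can be simulated on a shrunk instance---take the hard behavior witnessed on some graph of a certain size $N$ and embed it, via disjoint copies glued appropriately, into graphs of much larger size $n$, so that the effective locality relative to the copy is $o(N^{1/k})$ for every $N$; running this through the standard argument shows the algorithm's output is determined by $O(\log^\ast n)$-round information plus an $O(1)$-radius neighborhood; (iv) conclude the complexity is $O(\log^\ast n)$. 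The main obstacle is step (iii): making the ``glue many copies'' reduction work inside $\GG^\Delta$ requires that $\GG$ is closed under the relevant disjoint-union-and-identification operation and that this operation does not create a forbidden minor---here I would use minor-closedness of $\GG$ together with $\AAA_k$'s defining property that $T_{k,s} \in \GG$ for all $s$, so that the relevant ``stacked path'' gadgets remain in $\GG$. A secondary subtlety is that the growth bound $O(d^k)$ has a constant depending on $\GG^\Delta$, so one must be careful that in ``$T(n) = o(n^{1/k})$'' the little-$o$ absorbs this constant; this is routine but worth stating explicitly. Modulo these points, the argument is a faithful adaptation of the general-graph proofs, with polynomial growth as the only new ingredient.
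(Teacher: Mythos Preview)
Your plan has a genuine gap in step~(ii). The Naor--Stockmeyer Ramsey argument does \emph{not} produce an order-invariant algorithm for arbitrary localities $T(n)=o(n^{1/k})$. The hypergraph Ramsey number $R(p,m,c)$ grows as a tower of height roughly $p$, where $p$ is the number of identifiers a node sees, here $p=\Theta(T(n)^k)$. For the argument to go through you need the identifier space $n^{O(1)}$ to be at least $R(p,m,c)$, i.e.\ $p\lesssim\log^\ast n$, which forces $T(n)=O((\log^\ast n)^{1/k})$. So the Ramsey step can only handle tiny localities---this is precisely why the paper uses it for \cref{thm-path-LCL-4}, not for the present theorem. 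Your proposed bridge ``$o(n^{1/k})\to O(\log n)$ via order-invariance'' collapses, and the gluing-copies idea in step~(iii) is not a substitute (nor does it correspond to any known speed-up mechanism).

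The paper's argument is a single direct step, not a two-stage reduction. Open up the Chang--Kopelowitz--Pettie proof rather than citing it as a black box: given $\Algo$ with $T(n)=o(n^{1/k})$, use the polynomial growth bound $|B_d(v)|=O(d^k)$ to choose a \emph{constant} $\tilde n$ so large that $(T(\tilde n)+r)^k\cdot C_\Delta\le\tilde n$; this is possible exactly because $T(\tilde n)+r=o(\tilde n^{1/k})$. Then compute, in $O(\log^\ast n)$ rounds, a distance-$(T(\tilde n)+r)$ coloring with $O(\tilde n)$ colors to serve as fake $O(\log\tilde n)$-bit identifiers, and run $\Algo$ for $T(\tilde n)=O(1)$ rounds pretending the graph has $\tilde n$ nodes. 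Correctness at each node $v$ follows because the $(T(\tilde n)+r)$-ball of $v$ can be completed to an $\tilde n$-node subgraph $\tilde G$ of $G$; minor-closedness gives $\tilde G\in\GG$, so $\Algo$ is correct on $\tilde G$, and its output in the $r$-ball of $v$ agrees with what it produces in $G$. No Ramsey, no order-invariance, no gluing of copies, and the only structural facts used are polynomial growth and closure under taking subgraphs.
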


\begin{restatable}{theorem}{thmcomplexityd}
\label{thm-path-LCL-4}
Let $k\geq 1$ and $\Delta\geq 3$ be integers, and let $\GG \in \AAA_k$.
There is no $\LCL$ problem whose  complexity in  $\GG^\Delta$ is between $\omega(1)$ and $o\left((\log^\ast n)^{1/k}\right)$.
\end{restatable}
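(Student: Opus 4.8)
\textbf{Proof proposal for \cref{thm-path-LCL-4}.}

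The plan is to mimic the classical speed-up argument of Naor and Stockmeyer~\cite{NaorS95} — which shows that an $o(\log^\ast n)$-round $\LCL$ algorithm on general bounded-degree graphs can be sped up to $O(1)$ rounds — but to track carefully how the neighborhood growth rate enters the quantitative bound. The key input is the alternative characterization of $\AAA_k$ promised in \cref{sect:bounded-growth}: if $\GG \in \AAA_k$, then the $d$-radius ball around any node of any graph in $\GG^\Delta$ has size $O(d^k)$. First I would recall the structure of the Naor--Stockmeyer argument: an algorithm $\Algo$ running in $t(n)$ rounds is a function from radius-$t(n)$ labeled balls (with $O(\log n)$-bit identifiers) to output labels; by Ramsey-theoretic arguments one finds, for infinitely many $n$, a large subset of the identifier space on which $\Algo$ becomes \emph{order-invariant}, i.e.\ its output depends only on the relative order of identifiers in the ball, not their actual values. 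An order-invariant algorithm with locality $t$ can then be simulated in $O(t)$ rounds using only $\log^\ast$-type color reduction, and — crucially — if $t$ is small enough relative to $n$, the whole thing collapses to $O(1)$.

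Second, I would quantify the collapse. The number of distinct radius-$t$ balls in $\GG^\Delta$, up to isomorphism and up to the relative order of identifiers, is bounded by a function of the ball size: with $m = O(t^k)$ nodes in a ball, there are at most $2^{O(m^2)} \cdot m! = 2^{O(m^2 \log m)} = 2^{O(t^{2k}\log t)}$ such order-types. For the Ramsey argument to force order-invariance on $\{1,\dots,n\}$ we need the identifier range $n$ to be large compared to an iterated-exponential tower whose height is governed by this count; unwinding the bounds, order-invariance is guaranteed once $t = o\!\big((\log^\ast n)^{1/k}\big)$, because then the tower of height $\approx t^{2k}$ still fits below $n$. (This is exactly where the exponent $1/k$ appears: the ball has polynomial-in-$t$ size of degree $k$, so the tower height is polynomial in $t$ of degree $\Theta(k)$, and inverting "tower of height $h$ below $n$" gives $h = O(\log^\ast n)$, hence $t = O((\log^\ast n)^{1/k})$.) Thus any $\LCL$ with complexity $o\big((\log^\ast n)^{1/k}\big)$ admits an order-invariant algorithm of the same (sub-$(\log^\ast n)^{1/k}$, in particular sub-$\log^\ast n$) locality, and then the standard fact that an order-invariant $\LCL$ algorithm of locality $o(\log^\ast n)$ can be converted to an $O(1)$-round algorithm (the relative order of $c$ nearby identifiers in a graph with $\Theta(\log^\ast n)$-colorable neighborhoods can be simulated, cf.\ \cite{NaorS95,ChangKP16}) finishes the job. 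I should double-check that this last conversion step also goes through in $\GG^\Delta$ rather than only in general graphs, but since $\GG^\Delta$ has bounded degree, the usual Linial-style $O(\log^\ast n)$ coloring argument applies verbatim, and for an order-invariant algorithm one only needs a proper coloring with $O(1)$ colors of a bounded power of the graph, reducible to $O(\log^\ast n)$ — which is still $\omega$ of our target $t$, so the argument must instead feed the order-invariant algorithm a coloring computed in $O(t)=o(\log^\ast n)$ rounds; here one uses that order-invariance lets us replace genuine IDs by any locally-distinct labels, and a distance-$2t$ coloring with a constant number of colors can be assembled — this is the delicate point.

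The main obstacle I anticipate is precisely pinning down this final conversion and its locality in the bounded-degree minor-closed setting: the textbook statement "$o(\log^\ast n) \Rightarrow O(1)$ for $\LCL$" is for general bounded-degree graphs and the proof already implicitly uses that $d$-balls are of size $\Delta^{O(d)}$; here balls are \emph{much smaller} ($O(d^k)$), which only helps, but I must make sure the Ramsey bound is stated so that the smaller ball size genuinely improves the threshold from $\log^\ast n$ to $(\log^\ast n)^{1/k}$ rather than leaving it at $\log^\ast n$. Concretely, the count of order-types must be controlled by $t^{O(k)}$ (polynomial in $t$) and \emph{not} by $2^{\Theta(t)}$; this is exactly what the $O(d^k)$ ball bound delivers, and it is the only place the hypothesis $\GG\in\AAA_k$ is used. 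I would therefore organize the write-up as: (i) quote the ball-size characterization from \cref{sect:bounded-growth}; (ii) state the Ramsey/order-invariance lemma with the explicit dependence of the identifier-range requirement on the ball size; (iii) invert to get the $t = o((\log^\ast n)^{1/k})$ threshold; (iv) invoke the order-invariant-to-$O(1)$ speedup, noting bounded degree is all that is needed. Steps (i)--(iii) are the heart; step (iv) is standard once (iii) places us below $\log^\ast n$.
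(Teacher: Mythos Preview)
Your approach is essentially the same as the paper's: invoke the $k$-growth bound from \cref{sect:bounded-growth}, feed it into the Naor--Stockmeyer hypergraph Ramsey argument, and read off the threshold. The paper phrases this exactly via the bound $\log^\ast R(p,m,c) \leq p + \log^\ast m + \log^\ast c + O(1)$, where the dominant parameter is the hypergraph \emph{uniformity} $p$, i.e.\ the $\tau$-ball size $= O(\tau^k)$, and the condition $p \ll \log^\ast n$ gives $\tau = o((\log^\ast n)^{1/k})$.

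Two clarifications are in order. First, you state that the tower height is ``$\approx t^{2k}$'' because that is the exponent in your order-type count, but that count is the number of \emph{colors} $c$ in the Ramsey instance, which only contributes $\log^\ast c$ to the tower height; the tower height is actually governed by the uniformity $p = O(t^k)$. Your stated height $t^{2k}$ is inconsistent with your own conclusion $(\log^\ast n)^{1/k}$ (it would give $(\log^\ast n)^{1/(2k)}$), so this is a slip to correct rather than a real obstacle. Second, your step (iv) is not delicate at all: once the Ramsey step hands you a size-$m$ set $S$ of identifiers on which the algorithm is order-invariant, you simply freeze $n$ to a sufficiently large constant $n^\ast$ and run $\Algo$ with parameter $n^\ast$ on the \emph{actual} identifiers; order-invariance means the output in each $r$-ball matches what $\Algo$ would produce on a genuine $n^\ast$-node instance with IDs from $S$, hence is correct. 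No $O(\log^\ast n)$ coloring or distance-$2t$ coloring is needed for this step, so your worry about bootstrapping a coloring in $o(\log^\ast n)$ rounds can be dropped.
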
 

The proof of \cref{thm-path-LCL-1,thm-path-LCL-2,thm-path-LCL-3,thm-path-LCL-4} are deferred to  \cref{sect:complexity-landscape}.

\paragraph{An infinitude of complexity landscapes.}
\cref{thm-path-LCL-2,thm-path-LCL-3,thm-pathwidth-k} imply that, in addition to the well-known complexity landscapes for paths, trees, and general graphs, there are infinitely many different complexity landscapes among minor-closed graph classes. 

For every $\Delta\geq 3$ and $k\geq 1$, let us consider the class of graphs with maximum degree $\Delta$ and pathwidth at most $k$. For this class of graphs, there exist $\LCL$ problems with randomized and deterministic complexities $\Theta(n), \Theta(n^{1/2}), \Theta(n^{1/3}), \ldots, \Theta(n^{1/k})$, and there does not exist an $\LCL$ problem whose deterministic complexity is between $\omega(\log^\ast n)$ and $o(n^{1/k})$. These results already guarantee that the complexity landscapes necessarily vary for different values of $k$.

\paragraph{Algorithmic implications.} 
\cref{thm-pathwidth-k,thm-path-LCL-3} imply that  any $n^{o(1)}$-round deterministic distributed algorithm $\Algo$ for any $\LCL$ problem on bounded-pathwidth graphs can be automatically turned into an $O(\log^\ast n)$-round deterministic algorithm $\Algo'$ solving  the same problem on bounded-pathwidth graphs. This allows us to \emph{automatically speed up} existing algorithms significantly on bounded-pathwidth graphs.

\begin{corollary}\label{cor-algo}
The following problems can be solved in $O(\log^\ast n)$ rounds deterministically on graphs of bounded pathwidth and bounded degree.
\begin{itemize}
    \item Constructive \Lovasz\ local lemma with the condition $epd \leq 1-\delta$, for any constant $\delta > 0$.
    \item $\Delta$ vertex coloring.
    \item $(\Delta+1)$ edge coloring.
\end{itemize}
\end{corollary}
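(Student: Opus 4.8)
The plan is to derive Corollary~\ref{cor-algo} directly from the machinery already stated in the excerpt. The key input is the combination of \cref{thm-pathwidth-k} (the class of graphs with pathwidth at most $k$ belongs to $\AAA_k$) and \cref{thm-path-LCL-3} (for $\GG \in \AAA_k$, there is no $\LCL$ problem whose deterministic complexity in $\GG^\Delta$ lies strictly between $\omega(\log^\ast n)$ and $o(n^{1/k})$). Taken together, these say: on any fixed class of bounded-degree, pathwidth-$\le k$ graphs, every $\LCL$ problem that admits an $n^{o(1)}$-round deterministic algorithm in fact admits an $O(\log^\ast n)$-round deterministic algorithm. A class of ``bounded pathwidth and bounded degree'' graphs is contained in the pathwidth-$\le k$ degree-$\le \Delta$ class for some finite $k$ and $\Delta$, so the gap applies to it.

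**Next I would** check, for each of the three problems listed, two things: (i) that it is (or can be encoded as) an $\LCL$ problem on bounded-degree graphs, and (ii) that it has a known $n^{o(1)}$-round deterministic $\LOCAL$ complexity on the relevant class — in fact it suffices to cite a $\poly(\log n)$-round deterministic algorithm, since $\poly(\log n) = n^{o(1)}$. For $\Delta$ vertex coloring and $(\Delta+1)$ edge coloring, these are standard $\LCL$ problems (each allowed configuration records a node's input/output labels and the constraint that a node's color differs from its neighbors', together with the palette-size constraint enforced by the bounded degree), and both are known to be solvable in $\poly(\log n)$ deterministic rounds on general bounded-degree graphs. For the constructive \Lovasz\ local lemma under the criterion $epd \le 1-\delta$, I would invoke the known $\poly(\log n)$-round (indeed sublogarithmic, after the derandomization results referenced in the excerpt, but $\poly(\log n)$ is all that is needed) deterministic $\LOCAL$ algorithm; one has to be slightly careful to phrase the LLL instance so that it is an $\LCL$ with constant locality radius $r$ (bounded-degree dependency graph, bounded number of bad events per node, constraint radius $O(1)$), which is exactly the regime in which the cited LLL algorithms operate.

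**Then** the argument is: fix such a problem $\PP$; it is an $\LCL$; on the bounded-pathwidth bounded-degree class $\GG^\Delta$ (with $\GG$ the pathwidth-$\le k$ class, $k$ the bound) it has deterministic complexity $n^{o(1)}$ by the cited upper bound; by \cref{thm-pathwidth-k} we have $\GG \in \AAA_k$; by \cref{thm-path-LCL-3} there is no $\LCL$ whose deterministic complexity on $\GG^\Delta$ is strictly between $\omega(\log^\ast n)$ and $o(n^{1/k})$, and since $n^{o(1)}$ certainly falls below $n^{1/k}$, the deterministic complexity of $\PP$ on $\GG^\Delta$ must be $O(\log^\ast n)$. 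Unwinding the constructive nature of \cref{thm-path-LCL-3} (which, as the excerpt stresses, yields an explicit speed-up procedure), we obtain the promised algorithm $\Algo'$ from $\Algo$. One subtlety worth a sentence: the statement of the corollary mentions ``bounded pathwidth'' without a specific $k$, so I would note that any fixed bounded-pathwidth bounded-degree class is $\GG^\Delta$ for some finite $k,\Delta$, and the conclusion is uniform once $k,\Delta$ are fixed (the $O(\cdot)$ hides constants depending on $k$ and $\Delta$).

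**The main obstacle** I anticipate is not conceptual but bookkeeping: making sure each of the three problems is genuinely an $\LCL$ in the strict sense of Section~\ref{sect:lcl} (finite $\LabelIn$, $\LabelOut$, $\CC$, constant radius $r$), especially the LLL problem, whose ``input'' includes a description of the bad events and whose encoding as a constant-radius $\LCL$ requires the dependency graph to be locally constructible with bounded degree — this is standard in the LLL literature but should be stated. A secondary point is citing the right deterministic upper bounds: $(\Delta+1)$ edge coloring and $\Delta$ coloring each need a $\poly(\log n)$ deterministic bound on bounded-degree graphs (available from the network-decomposition line of work and from the LLL-based colorings), and the LLL bound needs the $epd \le 1-\delta$ slack to land in the $n^{o(1)}$-solvable regime. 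Once those citations are in place, the corollary follows immediately from \cref{thm-pathwidth-k} and \cref{thm-path-LCL-3} with essentially no further argument.
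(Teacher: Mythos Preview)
Your approach is essentially the same as the paper's: invoke \cref{thm-pathwidth-k} to place the pathwidth-$\le k$ class in $\AAA_k$, invoke \cref{thm-path-LCL-3} to get the $\omega(\log^\ast n)$--$o(n^{1/k})$ gap, and then cite known $\poly(\log n)$-round deterministic algorithms for each of the three problems so that their complexities fall below the gap. The paper's proof is in fact shorter than yours, since it takes the ``these are $\LCL$s'' part as evident and simply lists the relevant upper-bound references.

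One small correction: you write that \cref{thm-path-LCL-3} ``yields an explicit speed-up procedure,'' but the paper explicitly states the opposite right after this corollary---the proof of \cref{thm-path-LCL-3} is non-constructive, and explicitly constructing the $O(\log^\ast n)$-round algorithms is described as ``a highly nontrivial task.'' This does not affect the validity of your argument (the corollary only asserts existence), but you should drop that parenthetical.
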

\begin{proof}
Given the discussion above, we just need to check that these $\LCL$ problems can be solved in $n^{o(1)}$ rounds deterministically on graphs with maximum degree $\Delta = O(1)$. Indeed, constructive \Lovasz\ local lemma can be solved in polylogarithmic rounds on general graphs~\cite{RozhonG20}, $\Delta$ vertex coloring can be solved in $O(\log^2 n)$ rounds on bounded-degree graphs~\cite{ghaffari2018improved}, and $(\Delta+1)$ edge coloring can be solved in $\poly(\Delta, \log n)$ rounds on general graphs~\cite{christiansen2023power}.
\end{proof}

It seems to be a highly nontrivial task to explicitly construct the $O(\log^\ast n)$-round algorithms whose \emph{existence} is guaranteed in \cref{cor-algo}, as the proof of \cref{thm-path-LCL-3} is non-constructive in the sense that it does not offer an algorithm that decides between the two cases $O(\log^\ast n)$ and $\Omega(n^{1/k})$.  

\subsection{Path-Like Graph Classes}
\cref{thm-path-LCL-2,thm-path-LCL-3,thm-path-LCL-4}  allow us to completely characterize the minor-closed graph classes whose deterministic complexity landscape is identical to that of paths: $O(1)$, $\Theta(\log^\ast n)$, and $\Theta(n)$.

\begin{corollary}\label{thm-path-like}
For any minor-closed graph class $\GG$, the possible deterministic complexity classes for $\LCL$ problems on bounded-degree graphs in $\GG$ are exactly $O(1)$, $\Theta(\log^\ast n)$, and $\Theta(n)$ if and only if $\GG \in \AAA_1$.
\end{corollary}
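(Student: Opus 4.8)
The plan is to establish both implications, using \cref{thm-path-LCL-1,thm-path-LCL-2,thm-path-LCL-3,thm-path-LCL-4} together with the classification \cref{lem-pathwidth-partition}.

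For the ``if'' direction, suppose $\GG\in\AAA_1$. I would first verify that each of the three classes is realized. The class $O(1)$ is witnessed by any trivial problem (e.g.\ a fixed output label). For $\Theta(\log^\ast n)$ I would take $(\Delta+1)$-vertex coloring, where $\Delta$ is the degree bound of the problem: Linial's algorithm gives an $O(\log^\ast n)$ upper bound on every bounded-degree graph, and since condition (C1) with $k=1$ forces $T_{1,s}\in\GG$ for all $s$, the class $\GG^\Delta$ contains arbitrarily long paths, on which any coloring with a constant number of colors requires $\Omega(\log^\ast n)$ rounds. For $\Theta(n)$ I would invoke \cref{thm-path-LCL-2} with $k=s=1$. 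It then remains to rule out every other class: \cref{thm-path-LCL-4} with $k=1$ excludes all complexities strictly between $\omega(1)$ and $o(\log^\ast n)$, \cref{thm-path-LCL-3} with $k=1$ excludes all deterministic complexities strictly between $\omega(\log^\ast n)$ and $o(n)$, and any solvable $\LCL$ can be solved in $O(n)$ rounds on an $n$-node bounded-degree graph by gathering each connected component, so nothing lies above $\Theta(n)$.

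For the ``only if'' direction I would prove the contrapositive: if $\GG$ is minor-closed and $\GG\notin\AAA_1$, then the landscape is not exactly $\{O(1),\Theta(\log^\ast n),\Theta(n)\}$. By \cref{lem-pathwidth-partition} there are three cases. (i) If $\GG\in\AAA_0$, then \cref{thm-path-LCL-1} shows every solvable $\LCL$ has complexity $O(1)$, so $\Theta(\log^\ast n)$ never occurs and the landscape is the strictly smaller set $\{O(1)\}$. (ii) If $\GG\in\AAA_k$ with $k\ge 2$, then \cref{thm-path-LCL-2} with $s=2$ gives an $\LCL$ of complexity $\Theta(n^{1/2})$ on $\GG^\Delta$, a class outside the three. (iii) If $\GG$ has unbounded pathwidth, I would invoke the classical theorem of Robertson and Seymour that a minor-closed class has bounded pathwidth if and only if it excludes some forest as a minor: since $\GG$ excludes no forest and is minor-closed, it contains every forest, hence every tree $T_{2,s}$. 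Now the subclass $\GG'$ of forests of pathwidth at most $2$ lies in $\AAA_2$ (it contains all $T_{2,s}$ and excludes $T_{3,s}$ for large $s$) and is contained in $\GG$, so \cref{thm-path-LCL-2} supplies an $\LCL$ $\PP$ of complexity $\Theta(n^{1/2})$ on $(\GG')^\Delta$, whose $\Omega(n^{1/2})$ lower bound — proved on the trees $T_{2,s}$ — transfers to $\GG^\Delta$. Thus $\PP$ has complexity $\Omega(n^{1/2})$ on $\GG^\Delta$, already outside $\{O(1),\Theta(\log^\ast n)\}$, and it remains only to check that this complexity is still $o(n)$.

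The last point in case (iii) is the main obstacle: there is no universal $o(n)$ bound for solvable $\LCL$s on minor-closed classes (even on trees, complexities $\Theta(n^{1/k})$ and $\Theta(n)$ occur), so passing from $\GG'$ to the possibly much larger class $\GG^\Delta$ could, a priori, push the complexity of $\PP$ up to $\Theta(n)$. What I need is an $\LCL$ that is simultaneously solvable on \emph{all} bounded-degree graphs with an $o(n)$-round algorithm and has a lower bound that already bites on the trees $T_{2,s}\in\GG$. The natural candidates are the $\LCL$s from the time-hierarchy constructions, whose complexity on general bounded-degree graphs is $\Theta(n^{1/2})$ and whose hard instances are variants of $T_{2,s}$; verifying that such a problem — or the specific problem produced by the proof of \cref{thm-path-LCL-2} — behaves as required on arbitrary bounded-degree inputs is where the real work lies, whereas cases (i), (ii), and the ``if'' direction are essentially bookkeeping on top of the cited results.
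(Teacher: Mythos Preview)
Your ``if'' direction and cases (i) and (ii) of the ``only if'' direction match the paper's proof essentially line for line. The divergence is in case (iii), and there you are creating an obstacle that does not exist, while the paper sidesteps the whole issue with a simpler choice of witness class.

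First, the obstacle you flag in case (iii) is illusory. The proof of \cref{thm-path-LCL-2} that you invoke establishes that the problem $\PP_s$ of Chang and Pettie is solvable in $O(n^{1/s})$ rounds on \emph{all} graphs; the hypothesis $\GG\in\AAA_k$ there is used only for the lower bound, namely to guarantee that the hard instances $T_{s,x}$ lie in $\GG$. In your case (iii) you have already argued that $\GG$ contains every $T_{2,x}$, so $\PP_2$ has complexity exactly $\Theta(n^{1/2})$ on $\GG^\Delta$: the $\Omega(n^{1/2})$ lower bound comes from the $T_{2,x}\in\GG$, and the $O(n^{1/2})$ upper bound is universal. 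No ``real work'' remains.

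Second, the paper avoids even this detour by exhibiting a different extra complexity class. Once the excluding-forest theorem shows that $\GG$ contains all trees, the paper simply observes that the landscape of $\GG^\Delta$ therefore contains every complexity class realizable on trees, in particular $\Theta(\log n)$ (witnessed, for instance, by $\Delta$-coloring or sinkless orientation, whose $O(\log n)$ upper bounds hold on all bounded-degree graphs and whose $\Omega(\log n)$ lower bounds hold already on trees). This immediately shows the landscape is not $\{O(1),\Theta(\log^\ast n),\Theta(n)\}$, without needing to look inside the proof of \cref{thm-path-LCL-2}.
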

\begin{proof}
Suppose $\GG \in \AAA_1$. Then \cref{thm-path-LCL-3,thm-path-LCL-4} show that $O(1)$, $\Theta(\log^\ast n)$, and $\Theta(n)$ are the only possible deterministic complexity classes for $\LCL$ problems in  bounded-degree graphs in $\GG$. Indeed, it is well-known~\cite{Chang2021automata} that there exist $\LCL$ problems with deterministic complexities $O(1)$, $\Theta(\log^\ast n)$, and $\Theta(n)$ even if we restrict ourselves to path graphs. Since $\GG \in \AAA_1$,  $\GG$ contains all path graphs, so $O(1)$, $\Theta(\log^\ast n)$, and $\Theta(n)$ are exactly the possible deterministic complexity classes for $\LCL$ problems on  bounded-degree graphs in $\GG$. 

Suppose $\GG \notin \AAA_1$. If $\GG \in \AAA_k$ for some $k \neq 1$, then \cref{thm-path-LCL-1,thm-path-LCL-2} show that the set of possible deterministic complexity classes  for $\LCL$ problems on bounded-degree graphs in $\GG$ cannot be $O(1)$, $\Theta(\log^\ast n)$, and $\Theta(n)$. If $\GG \notin \AAA_k$ for all $k$, then $\GG$ has unbounded pathwidth.
The well-known excluding forest theorem~\cite{ROBERTSON1983forest} implies that any
 minor-closed graph class $\GG$ with unbounded pathwidth must contain all trees. Hence the complexity landscape for $\GG^\Delta$ includes all the complexity classes for trees, such as $\Theta(\log n)$, see~\cite{ChangP17}.
\end{proof}

\paragraph{The complexity of the characterization.}
The graph minor theorem~\cite{ROBERTSON2004325} of Robertson and Seymour implies that if $\GG$ is a minor-closed graph class, then there is a finite list of graphs $H_1, H_2, \ldots, H_k$ such that $G \in \GG$ if and only if $G$ does not contain any of $H_1, H_2, \ldots, H_k$  as a minor. Therefore, any minor-closed graph class $\GG$ admits a \emph{finite} representation by listing its finite list of forbidden minors $\{H_1, H_2, \ldots, H_k\}$, so it makes sense to consider computational problems where the input is an arbitrary minor-closed graph class.

A common method to demonstrate the simplicity of a characterization is by establishing its polynomial-time computability.
Given any minor-closed graph class $\GG$, represented by a list of forbidden minors $H_1, H_2, \ldots, H_k$, is there an efficient algorithm deciding whether $\GG$ is path-like in the sense of \cref{thm-path-like}? We show an affirmative answer to this question. In fact, we prove a more general result which shows that for any fixed index $i$, whether $\GG \in \AAA_i$ is decidable in time polynomial in the size of the representation of $\GG$, which is a finite list of forbidden minors $\{H_1, H_2, \ldots, H_k\}$. 

\begin{restatable}{proposition}{thmdecide}
\label{lem-Ak-decide}
For any fixed index $i$, there is a polynomial-time algorithm that, given a list of graphs $H_1, H_2, \ldots, H_k$, decides whether the class of $\{H_1, H_2, \ldots, H_k\}$-minor-free graphs $\GG$  is in $\AAA_i$.
\end{restatable}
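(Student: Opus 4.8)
The plan is to reduce the predicate $\GG\in\AAA_i$ to finitely many instances of a combinatorial question about the trees $T_{m,\cdot}$ and then to answer that question with a dynamic program of constant depth. For each $m\ge 1$, let $\mathcal{T}_m$ be the set of graphs that are a minor of $T_{m,s}$ for some $s$. The recursive definition of $T_{m,s}$ (\cref{def-trees}) makes $T_{m,s}$ a subgraph of $T_{m,s+1}$, so membership in $\mathcal{T}_m$ is monotone in $s$; thus $H\in\mathcal{T}_m$ is equivalent to ``$H$ is a minor of $T_{m,s}$ for every sufficiently large $s$''. If $\GG$ is the class of $\{H_1,\dots,H_k\}$-minor-free graphs then $T_{m,s}\in\GG$ iff no $H_j$ is a minor of $T_{m,s}$, so condition~(C1) of \cref{def-pw} (relevant only when $i\ge 1$) is exactly ``$H_j\notin\mathcal{T}_i$ for all $j$'' and condition~(C2) is exactly ``$H_j\in\mathcal{T}_{i+1}$ for some $j$''. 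Hence $\GG\in\AAA_i$ iff $\bigl(i=0\ \text{or}\ H_j\notin\mathcal{T}_i\ \text{for all}\ j\bigr)$ and $\bigl(H_j\in\mathcal{T}_{i+1}\ \text{for some}\ j\bigr)$, and it suffices to decide membership in $\mathcal{T}_m$ in $\poly(|V(H)|)$ time for the fixed values $m\in\{i,i+1\}$ (when $i=0$ only $m=1$ is needed). Note this also returns the correct answer ``no'' whenever $\GG$ has unbounded pathwidth: by the excluded-forest theorem~\cite{ROBERTSON1983forest} such a $\GG$ contains every tree, in particular every $T_{i+1,s}$, so no $H_j$ is a minor of any $T_{i+1,s}$, and the second condition fails.

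The second step is to reduce to connected patterns and set up the recursion on $m$. A graph lies in $\mathcal{T}_m$ iff it is a forest all of whose connected components lie in $\mathcal{T}_m$: one direction is immediate (a minor of the tree $T_{m,s}$ is a forest, and restricting a model to one component works), and for the converse one packs the finitely many component models into pairwise vertex-disjoint copies of $T_{m,s'}$ inside a large $T_{m,s}$, using that $T_{m,s}$ contains arbitrarily many vertex-disjoint subgraphs isomorphic to $T_{m,s'}$ for any fixed $s'$. So we may assume $H$ is a tree. The key claim is then a recursive characterization: $H\in\mathcal{T}_1$ iff $H$ is a path; and for $m\ge 2$, $H\in\mathcal{T}_m$ iff $H$ contains a path $P$ such that every connected component $D$ of $H-V(P)$ — which, as $H$ is a tree, is joined to $P$ by a single edge, at a unique vertex $a_D\in V(P)$ — satisfies $(D+a_D,\,a_D)\in\mathcal{T}_{m-1}^{\star}$, where $\mathcal{T}_{m-1}^{\star}$ is the rooted analogue of $\mathcal{T}_{m-1}$, namely the set of rooted trees $(H',x)$ admitting a minor model in some $T_{m-1,s}$ in which the branch set of $x$ contains the root of $T_{m-1,s}$; this rooted family obeys the same recursion except that the witnessing path $P$ is required to start at the root.

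Proving this characterization is the crux. For the ``only if'' direction one argues that every branch set is a subtree of $T_{m,s}$; that the branch sets meeting the main path occupy pairwise disjoint sub-intervals of it and, since each hanging copy is attached at a single vertex, only consecutive such branch sets can be adjacent and no path of $H$ can leave and re-enter a hanging copy; hence for connected $H$ these branch sets are exactly those of the vertices of a single path $P\subseteq H$, and the remaining vertices fall, according to the hanging copy that hosts them, into the pieces $D$, each realizing the required rooted minor in its copy of $T_{m-1,s}$. For the ``if'' direction one reverses this: lay $P$ along the main path of a large $T_{m,s}$, giving each vertex of $P$ a long enough block of consecutive positions to accommodate all pieces hanging off it, and plug each piece into its own hanging copy via the recursively obtained model. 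The delicate point is the interface between a piece $D$ and its parent copy: depending on whether the parent's branch set reaches into the child copy, the child copy must host either $D$ with $a_D$ placed at the root, or $D$ alone with a neighbour of $a_D$ placed at the root; a short local rerouting (available because $s$ is large) shows the two situations are interchangeable, which is precisely what makes the rooted variant $\mathcal{T}_{m-1}^{\star}$ well-behaved.

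Finally, for fixed $i$ the characterization turns into a polynomial-time algorithm. Every rooted tree that appears anywhere in the recursion started from a component of some $H_j$ has the form ``the subtree of $H_j$ hanging below an oriented edge, rooted at its upper endpoint'', so there are only $O(|V(H_j)|)$ of them; the recursion has $i+1=O(1)$ levels; and each recursive call searches over the $O(|V(H_j)|^2)$ paths of the current subtree (a path in a tree is determined by its two endpoints, or by one endpoint in the rooted case) and, for each, over the $O(|V(H_j)|)$ components of the subtree minus that path. Memoising over these $O(|V(H_j)|)$ subtrees and the $O(1)$ values of $m$ decides $H_j\in\mathcal{T}_i$ and $H_j\in\mathcal{T}_{i+1}$ in $\poly(|V(H_j)|)$ time, and combining the answers according to the equivalence above decides $\GG\in\AAA_i$ in time polynomial in $\sum_j|V(H_j)|$, hence polynomial in the size of the forbidden-minor representation of $\GG$. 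I expect the main obstacle to be the recursive characterization of $\mathcal{T}_m$ — especially making the ``if'' direction and the rooted/unrooted bookkeeping precise — since the reduction in the first paragraph, the component-packing argument, and the conversion of the characterization into a dynamic program are all routine.
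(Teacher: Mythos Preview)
Your proposal is correct and follows essentially the same route as the paper: reduce $\GG\in\AAA_i$ to testing, for each forbidden minor $H_j$ and each $m\in\{i,i+1\}$, whether $H_j$ is a minor of some $T_{m,s}$, and decide the latter by a depth-$O(1)$ recursion that peels off a path and checks the hanging subtrees against level $m-1$ (the paper's \cref{lem-path-like-decide-aux,lem-Ak-decide-aux}). The only notable difference is bookkeeping: the paper roots each hanging piece at the child $v$ of the path vertex rather than at your $a_D$, which sidesteps the ``delicate interface'' issue you flag; conversely, you are slightly more careful than the paper in handling disconnected $H_j$ via component packing.
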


The proof of \cref{lem-Ak-decide} is in \cref{sect:decide}. We remark that although the algorithm of \cref{lem-Ak-decide} finishes in polynomial time, the algorithm is unlikely to be practical in that the list of forbidden minors for the considered minor-closed graph class $\GG$ is often not known.

\subsection{Tree-Like Graph Classes}  It is natural to conjecture that any minor-closed graph class shares the same complexity landscape as trees if and only if the graph class has bounded treewidth and unbounded pathwidth. We prove the ``only if'' part of the conjecture.

\begin{corollary}\label{cor-onlyif}
A minor-closed graph class $\GG$ shares the same complexity landscape as trees only if $\GG$ has bounded treewidth and unbounded pathwidth.
\end{corollary}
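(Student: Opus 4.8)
The plan is to prove the contrapositive of \cref{cor-onlyif}: if a minor-closed graph class $\GG$ does not simultaneously have bounded treewidth and unbounded pathwidth, then the set of possible complexity classes of $\LCL$ problems on $\GG^\Delta$ differs from the corresponding set for trees. Since bounded pathwidth implies bounded treewidth and unbounded treewidth implies unbounded pathwidth, failing the hypothesis forces \emph{exactly one} of the following: (i) $\GG$ has bounded pathwidth, or (ii) $\GG$ has unbounded treewidth. In each case I would exhibit a single complexity class that lies in the tree landscape but not in that of $\GG^\Delta$ (or vice versa), which suffices.

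In case (i), \cref{lem-pathwidth-partition} places $\GG$ in $\AAA_k$ for some integer $k \geq 0$. If $k=0$, then \cref{thm-path-LCL-1} implies that every solvable $\LCL$ problem on $\GG^\Delta$ has complexity $O(1)$, so $O(1)$ is the only complexity class; but the tree landscape also contains $\Theta(\log^\ast n)$, realized already on paths (see~\cite{Chang2021automata}), so the two landscapes differ. If $k\geq 1$, observe that $\log n = \omega(\log^\ast n)$ and $\log n = o(n^{1/k})$, so by \cref{thm-path-LCL-3} there is no $\LCL$ problem on $\GG^\Delta$ with deterministic complexity $\Theta(\log n)$; on the other hand, trees admit an $\LCL$ problem with deterministic complexity $\Theta(\log n)$ (see~\cite{ChangP17}). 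Hence in either subcase the complexity landscape of $\GG^\Delta$ is not the tree landscape.

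In case (ii), I would first invoke the graph minor theory of Robertson and Seymour~\cite{ROBERTSON2004325}: a minor-closed class has bounded treewidth if and only if it excludes some planar graph, so an unbounded-treewidth minor-closed $\GG$ contains arbitrarily large grids, and hence---after passing to suitable bounded-degree subgraphs, such as brick-wall graphs, to accommodate $\Delta = 3$---contains bounded-degree graph families of unbounded treewidth. The argument then appeals to the constructions of \cref{sect-dense-region}, which show that such graph classes admit $\LCL$ problems whose complexities densely fill $[\Theta(\log n),\Theta(n)]$; in particular $\GG^\Delta$ admits an $\LCL$ problem of some complexity $\Theta(f(n))$ that is not $\Theta(n^{1/k})$ for any $k$ and is not one of $O(1), \Theta(\log^\ast n), \Theta(\log\log n), \Theta(\log n)$. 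Since every complexity class occurring for trees is of one of those forms, the landscape of $\GG^\Delta$ again differs from that of trees, completing the contrapositive.

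I expect the main obstacle to be case (ii): it rests entirely on the dense-region machinery of \cref{sect-dense-region}, which is the substantial technical content, and one must also verify that the grid-like structure forced by unbounded treewidth can be realized with maximum degree at most $\Delta\geq 3$; once one has even a single complexity function strictly between $\log n$ and $n$ that is not a root $n^{1/k}$, the separation from the (discrete) tree list is immediate. Case (i), by contrast, is a routine bookkeeping consequence of \cref{lem-pathwidth-partition,thm-path-LCL-1,thm-path-LCL-3} together with the standard facts that paths already realize $\Theta(\log^\ast n)$ and trees realize $\Theta(\log n)$.
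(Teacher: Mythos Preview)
Your proposal is correct and follows essentially the same contrapositive structure as the paper's proof: case~(i) uses \cref{thm-path-LCL-3} to eliminate the $\Theta(\log n)$ class, and case~(ii) invokes the excluding grid theorem together with a density result to produce a complexity not on the tree list. The only notable difference is that for case~(ii) the paper appeals to the prior construction of~\cite{balliuBOS18} (giving the dense interval $[\Theta(\sqrt{n}),\Theta(n)]$, which already suffices) rather than the heavier machinery of \cref{sect-dense-region}; either works, and your explicit treatment of the $\AAA_0$ subcase is slightly more careful than the paper's one-line appeal to \cref{thm-path-LCL-3}.
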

\begin{proof}
\cref{thm-path-LCL-3} implies that if a minor-closed graph class $\GG$ has bounded pathwidth, then the complexity class $\Theta(\log n)$ disappears. Therefore, for $\GG$ to have the same complexity landscape as that of trees, $\GG$ must have unbounded pathwidth.

Now suppose $\GG$  is a minor-closed graph class with unbounded treewidth. Then the well-known excluding grid theorem~\cite{ROBERTSON1986excludingPlanar} implies that $\GG$ contains all planar graphs. 

It was shown in~\cite{balliuBOS18} that for any rational number $\frac{1}{2} \leq c < 1$, there exists an $\LCL$ problem $\PP$ that is solvable in  $O(n^c)$ rounds for all graphs and requires $\Omega(n^c)$ rounds to solve for planar graphs. As  $\GG$ contains all planar graphs, this implies that the complexity of $\PP$ is $\Theta(n^{c})$ in $\GG$, This shows that $[\Theta(\sqrt{n}), \Theta(n)]$ is a \emph{dense region} in the complexity landscape for $\GG$. Hence the complexity landscape for $\GG$ is different from that of trees.
\end{proof}

\paragraph{The range of the dense region.}
In the above proof, we see that the dense region $[\Theta(\sqrt{n}), \Theta(n)]$ exists in the complexity landscape for any minor-closed graph class $\GG$ that has unbounded treewidth. In \cref{sect-dense-region}, we generalize this result to extend the range of the dense region to cover the entire interval $[\Theta(\log n), \Theta(n)]$, which is the \emph{widest possible} due to the $\omega(\log^\ast n)$ -- $o(\log n)$ gap shown in~\cite{ChangKP16}. Specifically, we show that there are $\LCL$ problems with the following complexities.

\begin{restatable}{theorem}{thmdense}\label{thm-time-hierarchy-main}
For any minor-closed graph class  $\GG$  that has unbounded treewidth, there are $\LCL$ problems on bounded-degree graphs in $\GG$ with the following complexities.
\begin{itemize}
    \item $\Theta(n^{c})$, for each rational number $c$ such that $0 \leq c \leq 1$.
    \item $\Theta(\log^{c} n)$, for each rational number $c$ such that $c \geq 1$.    
\end{itemize}
\end{restatable}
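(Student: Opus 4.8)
The plan is to reduce \cref{thm-time-hierarchy-main} to the existing time-hierarchy constructions for paths and general graphs by exploiting the fact that an unbounded-treewidth minor-closed class $\GG$ contains every planar graph (by the excluding grid theorem~\cite{ROBERTSON1986excludingPlanar}), hence in particular contains all grids, all subdivided grids, and all paths. The two bullets call for rather different gadgets, so I would treat them separately. For the \emph{polynomial} regime $\Theta(n^c)$ with $0 \le c \le 1$ rational, I would take an $\LCL$ problem $\PP_c$ realizing complexity $\Theta(n^c)$ on paths — these exist for all rational $c$ by the automata-theoretic classification of $\LCL$s on paths~\cite{Chang2021automata} and the construction in~\cite{balliuBOS18} — and then \emph{encode a path of length $n$ inside} an arbitrary $n$-vertex graph of $\GG$ in a locally checkable way. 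Concretely, the $\LCL$ I build on $\GG^\Delta$ asks the solver to mark a subset $S$ of vertices and edges forming a simple path together with consistent local pointers along it, to certify that every unmarked vertex is ``attached'' to this path (this part is where the $\Theta(n)$ complexity of routing information around the structure must be absorbed), and to solve $\PP_c$ on the marked path; on a path graph the only legal choice is $S = V$, which forces complexity $\Omega(n^c)$, while on any graph the solution can always be completed in $O(n)$ rounds after gathering the whole graph, and the orientation/attachment machinery keeps the cost at $O(n^c + n^{\text{(something)}})$ — so I must be careful that the path-extraction overhead does not dominate. A cleaner route, which I would actually pursue, is to use a spanning-tree-style construction: on bounded-degree graphs in $\GG$ one can compute (as part of the output, locally checkable via parent-pointers and a distance/label field) a rooted spanning structure, along which one then runs the path-$\LCL$; the point is that lower-bound instances for $\PP_c$ — paths of length $\ell$ — embed into $\GG$ as is, giving the $\Omega(\ell^c) = \Omega(n^c)$ lower bound, and the upper bound follows because $\PP_c$ is solvable in $O(n^c)$ rounds on every graph class that contains the hard instances used in its lower bound (a property one reads off from~\cite{balliuBOS18}).

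For the \emph{polylogarithmic} regime $\Theta(\log^c n)$ with $c \ge 1$ rational, the relevant hard instances are no longer paths but the hierarchical tree-grid structures used in~\cite{ChangP17,balliuBOS18} whose constructions already live on planar graphs; since $\GG$ contains all planar graphs, these $\LCL$ problems transfer verbatim, and their $\Omega(\log^c n)$ lower bounds hold on $\GG$ while the matching $O(\log^c n)$ upper bounds (which are proved for general bounded-degree graphs in those papers) apply a fortiori. So this bullet should reduce to: (i) quote the planar-graph time-hierarchy from the literature, (ii) observe via the excluding grid theorem that every unbounded-treewidth minor-closed $\GG$ contains all planar graphs and hence contains all the lower-bound instances, and (iii) note that the upper bounds are stated for general graphs so they hold on $\GG^\Delta$. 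One technical point to verify is that the $\LCL$s in~\cite{balliuBOS18} achieving $\Theta(\log^c n)$ are indeed constructed on bounded-degree planar graphs (not merely planar), which I believe they are by inspection of the construction.

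The main obstacle I anticipate is the \emph{polynomial} bullet for values $c$ not of the form $1/k$, and in particular making the ``extract a path / run $\PP_c$'' $\LCL$ genuinely locally checkable while matching the $\Theta(n^c)$ bound on both ends. The subtlety is twofold: (a) the lower-bound instances for the path-$\LCL$ $\PP_c$ must embed into $\GG$ in such a way that no ``shortcut'' solution is possible — this is automatic if the hard instances are themselves paths, which is why I would rely on~\cite{balliuBOS18}'s explicit path-based $\Theta(n^c)$ constructions rather than tree-based ones; and (b) the upper bound must not be spoiled by the need to discover the embedded structure — here I would argue that after $O(n)$ rounds the whole connected component is known, so any solvable $\LCL$ is solvable in $O(n)$, and for $c < 1$ the $O(n^c)$ path-algorithm for $\PP_c$ runs \emph{on the path part} while the rest of the graph is handled by the brute-force $O(n)$ fallback only when the component is small — wait, that breaks when the non-path part is large. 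The correct fix, which I would spell out carefully, is that the $\LCL$ should only demand a $\PP_c$-solution along the embedded path and should allow \emph{any} labeling off it (subject to local consistency with the path-pointers), so that the off-path vertices cost $O(1)$ and the total is $O(n^c)$; then on an actual path graph the problem degenerates exactly to $\PP_c$, forcing $\Omega(n^c)$. Getting these two requirements to hold simultaneously — non-degeneracy of the lower bound and $O(1)$ cost off the path — is the delicate part, and I expect most of the real work of the proof (deferred to \cref{sect-dense-region}) to be in setting up that gadget and its correctness/complexity analysis precisely.
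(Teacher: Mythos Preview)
Your proposal has a genuine gap in both bullets, stemming from incorrect beliefs about what the cited literature actually provides.

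For the polynomial regime, you want to use ``an $\LCL$ problem $\PP_c$ realizing complexity $\Theta(n^c)$ on paths'' and cite~\cite{Chang2021automata,balliuBOS18}. But no such problems exist: the classification in~\cite{Chang2021automata} shows that the only possible complexities on paths are $O(1)$, $\Theta(\log^\ast n)$, and $\Theta(n)$, so there is \emph{no} path-$\LCL$ with complexity $\Theta(n^c)$ for $0<c<1$. The construction in~\cite{balliuBOS18} is grid-based, not path-based, and on \emph{planar} graphs it only reaches the range $[\Theta(\sqrt{n}),\Theta(n)]$; extending below $\Theta(\sqrt{n})$ there uses higher-dimensional grids, which a proper minor-closed class cannot contain (the paper explicitly discusses this obstruction). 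So your embed-a-path plan cannot produce the required lower bounds for $0<c<1/2$.

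For the polylogarithmic regime, neither~\cite{ChangP17} nor~\cite{balliuBOS18} supplies $\Theta(\log^c n)$ problems on planar graphs: \cite{ChangP17} gives $\Theta(n^{1/k})$ on trees, and~\cite{balliuBOS18} gives polynomial complexities. The existing $\Theta(\log^c n)$ constructions are in~\cite{Balliu18}, and the paper explains why those do \emph{not} transfer: their hard instances contain arbitrarily large clique minors, so they live in no proper minor-closed class. Your step~(i), ``quote the planar-graph time-hierarchy from the literature,'' therefore quotes something that does not exist.

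The idea you are missing is precisely the paper's contribution: take the two-dimensional Turing-machine grid of~\cite{balliuBOS18} and attach a complete binary tree (with layers linked into paths so the structure is locally checkable and planar) along one side. This forces the number of nodes $n$ to be exponential in the tape length $s$, so the simulation cost $T(s)$ becomes $T(\Theta(\log n))$ rather than $T(\Theta(\sqrt{n}))$. Choosing $T(s)=\Theta(2^{cs})$ then yields $\Theta(n^c)$ for all rational $0<c<1$, and a second iteration of the tree (double exponentiation, $s=\Theta(\log\log n)$) yields $\Theta(\log^c n)$ for all rational $c\ge 1$. All hard instances are bounded-degree planar, so the excluding grid theorem places them in $\GG$; the matching upper bounds are proved for general graphs.
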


\paragraph{Existing approaches.}
We briefly explain the construction of the $\LCL$ problem $\PP$ in~\cite{balliuBOS18} which is used in the above proof of \cref{cor-onlyif}.
The $\LCL$ problem uses locally checkable constraints to force the underlying network to encode an execution of a linear-space-bounded Turing machine $\MM$ in a two-dimensional grid. Suppose the time complexity of  $\MM$ on the input string  $0^s$ is $T(s)$. Running a simulation of $\MM$ on  $0^s$  requires $n = s \cdot T(s)$
 nodes and $t = T(s)$ rounds. For any $\frac{1}{2} \leq c < 1$, the round complexity function $t = \Theta(n^{c})$ can be realized with $T(s) = s^{c/(1-c)}$.
Since  $T(s) = \Omega(s)$, $[\Theta(\sqrt{n}), \Theta(n)]$ is the largest possible dense region resulting from this approach.

It was shown in~\cite{balliuBOS18} that $\LCL$ problems with complexity $\Theta(n^c)$ exist by extending this construction to higher dimensional grids. This extension is not applicable for proving \cref{thm-time-hierarchy-main}, due to the following reason. For any graph $H$ there exists a number $d$ such that $H$ is a minor of a sufficiently large $d$-dimensional grid. If a minor-closed graph class $\GG$ contains arbitrarily large $d$-dimensional grids for all $d$, then $\GG$ must be the set of all graphs.

With a completely different approach, in another previous work~\cite{Balliu18}, the two dense regions $[\Theta(\log \log^\ast n), \Theta(\log^\ast n)]$ and $[\Theta(\log n), \Theta(n)]$ were shown for $\LCL$ problems on general graphs. Due to a similar reason, their construction of $\LCL$ problems is also not applicable for proving \cref{thm-time-hierarchy-main}.

The proof in~\cite{Balliu18} relies on the following graph structure. Start with an $a \times b$ grid graph whose dimensions $a$ and $b$ can be arbitrarily large. Let $u_{i,j}$  denote the node on the $i$th row and the $j$th column.
For each row $i$, add an edge between $u_{i,j_1}$ and $u_{i,j_2}$ if $j_2 - j_1 = i$. This graph contains a $k$-clique as a minor given that $k \leq \min\{a,b\}$. To see this, contract the $j$th column into a node $v_j$ for each $1 \leq j \leq k$. Then $\{v_1, v_2, \ldots, v_k\}$ forms a clique. Thus, for the results in~\cite{Balliu18} to apply to a minor-closed graph class $\GG$, $\GG$ must contain the $k$-clique for all $k$. Since any graph is a minor of a sufficiently large clique, it follows that $\GG$ is necessarily the set of all graphs.

\paragraph{New ideas.}
To establish the dense region  $[\Theta(\log n), \Theta(n)]$, in \cref{sect-dense-region} we modify the construction of~\cite{balliuBOS18} by attaching a root-to-leaf path of a complete tree to one side of the grid used in the Turing machine simulation. This enables us to increase the number of nodes to be \emph{exponential} in the round complexity of  Turing machine simulation, allowing us to realize any reasonable $\LCL$ complexity in the region $[\Theta(\log n), \Theta(n)]$ and to prove \cref{thm-time-hierarchy-main}.

\subsection{Summary}
 
For convenience, in the subsequent discussion, let $\BBB$ denote the set of all minor-closed graph classes $\GG$ that has unbounded pathwidth and bounded treewidth, and let $\CCC$ denote the set of all minor-closed graph classes $\GG$ that has unbounded treewidth and is not the set of all graphs.

\paragraph{The state of the art.}
We summarize the old and new results on the complexity of $\LCL$ problems as follows. For simplicity, here we only consider the deterministic $\LOCAL$ model.
  \begin{align*}
   \AAA_0 \hspace{0.5cm}  & O\left(1\right) \\
  \AAA_k \hspace{0.5cm}  & O\left(1\right) 
\frac{\times}{\hspace{0.6cm}} 
\overset{?}{\Theta\left(\left(\log^\ast n\right)^{\frac{1}{k}}\right)}
\frac{?}{\hspace{0.6cm}} 
\Theta\left(\log^\ast n\right) 
\frac{\times}{\hspace{0.6cm}} 
\Theta\left(n^{\frac{1}{k}}\right) 
\frac{?}{\hspace{0.6cm}} 
\Theta\left(n^{\frac{1}{k-1}}\right) 
\frac{ ?}{\hspace{0.25cm}} 
\cdots
\frac{ ?}{\hspace{0.25cm}} 
\Theta\left({n^{\frac{1}{2}}}\right) 
\frac{?}{\hspace{0.6cm}} 
\Theta\left(n\right) \\
  \BBB \hspace{0.5cm}  &  O\left(1\right) 
\frac{\times}{\hspace{0.6cm}} 
\overset{?}{\Theta\left(\log \log^\ast n\right)}
\frac{?}{\hspace{0.6cm}} 
\Theta\left(\log^\ast n\right) 
\frac{\times}{\hspace{0.6cm}} 
\Theta\left(\log  n\right) 
\frac{ ?}{\hspace{0.25cm}} 
\cdots
\frac{ ?}{\hspace{0.25cm}} 
\Theta\left(n^{\frac{1}{3}}\right) \frac{?}{\hspace{0.6cm}} 
\Theta\left({n^{\frac{1}{2}}}\right)
\frac{?}{\hspace{0.6cm}} 
\Theta\left(n\right) \\
  \CCC \hspace{0.5cm}  & O\left(1\right) 
\frac{\times}{\hspace{0.6cm}} 
\overset{?}{\Theta\left(\log \log^\ast n\right)}
\frac{?}{\hspace{0.6cm}} 
\Theta\left(\log^\ast n\right) 
\frac{\times}{\hspace{0.6cm}} 
\Theta\left(\log  n\right) 
\frac{\text{dense}}{\hspace{1cm}} 
\Theta\left(n\right) 
 \end{align*}
 The $\omega(\log^\ast n)$ -- $o(\log n)$ and $\omega(1)$ -- $o(\log \log^\ast n)$ gaps for $\BBB$ and $\CCC$ are due to~\cite{ChangKP16,ChangP17}. The existence of the complexity class $\Theta(n^{1/k})$ for all positive integers $k$ for $\BBB$ is due to~\cite{ChangP17}. The existence of the complexity class $\Theta(\log^\ast n)$ for $\AAA_1, \AAA_2, \ldots$, $\BBB$, and $\CCC$ is due to the well-known fact that the complexity of $3$-coloring paths is $\Theta(\log^\ast n)$.  The existence of the complexity class $\Theta(\log n)$ for  $\BBB$ and $\CCC$  is due to the well-known that the complexity of  $\Delta$-coloring trees is $\Theta(\log n)$. All the remaining results are due to \cref{thm-path-LCL-1,thm-path-LCL-2,thm-path-LCL-3,thm-path-LCL-4,thm-time-hierarchy-main}. 
 
 \paragraph{The conjecture.} For the sake of comparison, here we also illustrate the complexity landscapes described in \cref{conjecture-main}.
 \begin{align*}
   &\AAA_0   && O\left(1\right) \\
  &\AAA_k   && O\left(1\right) 
\frac{\times}{\hspace{0.6cm}} 
\Theta\left(\log^\ast n\right) 
\frac{\times}{\hspace{0.6cm}} 
\Theta\left(n^{\frac{1}{k}}\right) 
\frac{\times}{\hspace{0.6cm}} 
\Theta\left(n^{\frac{1}{k-1}}\right) 
\frac{ \times}{\hspace{0.25cm}} 
\cdots
\frac{ \times}{\hspace{0.25cm}} 
\Theta\left({n^{\frac{1}{2}}}\right) 
\frac{\times}{\hspace{0.6cm}} 
\Theta\left(n\right) \\
  &\BBB   && O\left(1\right) 
\frac{\times}{\hspace{0.6cm}} 
\Theta\left(\log^\ast n\right) 
\frac{\times}{\hspace{0.6cm}} 
\Theta\left(\log  n\right) 
\frac{ \times}{\hspace{0.25cm}} 
\cdots
\frac{ \times}{\hspace{0.25cm}} 
\Theta\left(n^{\frac{1}{3}}\right) \frac{\times}{\hspace{0.6cm}} 
\Theta\left({n^{\frac{1}{2}}}\right)
\frac{\times}{\hspace{0.6cm}} 
\Theta\left(n\right) \\
  &\CCC   && O\left(1\right) 
\frac{\times}{\hspace{0.6cm}} 
\Theta\left(\log^\ast n\right) 
\frac{\times}{\hspace{0.6cm}} 
\Theta\left(\log  n\right) 
\frac{\text{dense}}{\hspace{1cm}} 
\Theta\left(n\right) 
 \end{align*}

\subsection{Roadmap}
In \cref{sect-bounded-pathwidth}, we prove that $\bigcup_{0 \leq i < \infty} \AAA_i$ is a classification of all bounded-pathwidth minor-closed graph classes. 
In \cref{sect:bounded-growth}, we show a combinatorial characterization of the set of graph classes $\AAA_i$ based on the growth rate of the size of the $d$-radius neighborhood.
In \cref{sect:complexity-landscape}, we use the combinatorial characterization to prove    \cref{thm-path-LCL-1,thm-path-LCL-2,thm-path-LCL-3,thm-path-LCL-4}.
In \cref{sect:decide}, we present a polynomial-time algorithm that decides whether $\GG \in \AAA_i$ for any given minor-closed graph class $\GG$.
In \cref{sect-dense-region}, we establish the dense region $[\Theta(\log n), \Theta(n)]$ for any minor-closed graph class that has unbounded treewidth.

\section{A Classification of Bounded-Pathwidth Networks}\label{sect-bounded-pathwidth}

In this section, we prove \cref{lem-pathwidth-partition,thm-pathwidth-k}. That is,  $\bigcup_{0 \leq i < \infty} \AAA_i$ is a classification of all bounded-pathwidth minor-closed graph classes, and $\AAA_k$ contains the class of graphs of pathwidth at most $k$. We need the following well-known characterization of bounded-pathwidth minor-closed graph classes by Robertson and Seymour~\cite{ROBERTSON1983forest}.

\begin{theorem}[{Excluding forest theorem~\cite{ROBERTSON1983forest}}]\label{thm-excluding-forest}
A minor-closed graph class $\GG$ has bounded pathwidth if and only if $\GG$ does not contain all forests.
\end{theorem}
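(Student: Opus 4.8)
The plan is to prove the two implications of the biconditional separately. The forward direction is elementary: if $\GG$ has bounded pathwidth then it cannot contain all forests, because there exist forests of arbitrarily large pathwidth --- for instance the complete binary trees $B_h$ of height $h$, with $\mathsf{pw}(B_h) = \lceil h/2 \rceil$ (upper bound by an explicit recursive path decomposition splitting at the root; lower bound by a bramble / cops-and-robbers argument showing that no path decomposition of width $o(h)$ can separate the branching structure of $B_h$), or the trees $T_{k,s}$ of \cref{def-trees}, whose pathwidth grows with $k$ once $s$ is large. Taking the contrapositive gives the forward direction.

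For the converse, assume $\GG$ does not contain all forests and fix a forest $F \notin \GG$. Since $\GG$ is minor-closed, no $G \in \GG$ can have $F$ as a minor, as this would force $F \in \GG$; hence every member of $\GG$ is $F$-minor-free. It therefore suffices to establish the following claim, which is the substance of \cite{ROBERTSON1983forest}: for every forest $F$ there is a constant $c(F)$ such that every $F$-minor-free graph has pathwidth at most $c(F)$. (The sharp value $c(F) = |V(F)| - 2$ is the ``quickly excluding a forest'' bound of Bienstock--Robertson--Seymour--Thomas, but any finite bound depending only on $F$ suffices for us.) Granting the claim, every $G \in \GG$ has pathwidth at most $c(F)$, so $\GG$ has bounded pathwidth, completing the proof modulo the claim.

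To prove the claim I would induct on $|V(F)|$, the base case ($F$ edgeless) being immediate. For the inductive step, pick a leaf $v$ of $F$ with neighbor $u$, set $F' = F - v$, and suppose toward a contradiction that $G$ is $F$-minor-free yet has pathwidth much larger than $c(F')$. Fix an optimal path decomposition $(X_1, \dots, X_m)$ of $G$. The inductive hypothesis yields a model of $F'$ in $G$; the point is that, since the width is large, such a model can be found together with a reservoir of untouched vertices and enough connectivity that the branch set of $u$ can be extended by a connected piece disjoint from the rest of the $F'$-model, which then serves as the branch set of $v$ --- producing an $F$-model in $G$ and the desired contradiction. The main obstacle is exactly this ``room for one more leaf'' step: converting the hypothesis of large pathwidth into a structural guarantee that a partial forest model can always be grown by one vertex. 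The leverage is that a path decomposition of large width has no small separators between distant bags, so connectivity can be routed freely; making this quantitative (to reach $c(F) = |V(F)| - 2$) requires the careful analysis of how $F'$-models recur along the decomposition, while for the statement at hand a coarser argument is enough.
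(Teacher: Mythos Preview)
The paper does not prove this theorem at all: it is stated as \cref{thm-excluding-forest} with a citation to Robertson and Seymour~\cite{ROBERTSON1983forest} and used as a black box. So there is no ``paper's own proof'' to compare against; you are attempting to reprove a result the paper simply imports.

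On the substance of your sketch: the forward direction is fine, and you correctly reduce the converse to the statement that every $F$-minor-free graph has pathwidth bounded by a function of $F$. But your inductive step has a real gap. Knowing only that $G$ has pathwidth much larger than $c(F')$ gives you, by induction, an $F'$-model somewhere in $G$; it does \emph{not} by itself give you an $F'$-model positioned so that the branch set of $u$ can be extended by a fresh connected piece disjoint from the rest. The phrase ``a path decomposition of large width has no small separators between distant bags, so connectivity can be routed freely'' is where the argument would need to do work, and as written it does none: large width tells you some bag is large, not that separators along the decomposition are large, and in any case you have not tied the location of your $F'$-model to the structure of the decomposition. The actual proofs (Robertson--Seymour, and the sharper Bienstock--Robertson--Seymour--Thomas) do not proceed by a bare leaf induction of this form; they need a stronger inductive hypothesis that tracks how many disjoint or well-placed models one can find, or they argue via an interval-representation / blocking-set analysis that genuinely exploits the linear structure of path decompositions. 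Your outline identifies the right obstacle but does not overcome it.
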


\cref{lem-pathwidth-partition,thm-pathwidth-k} are proved using the observation that the pathwidth of $T_{i,s}$ equals $i$ whenever $s \geq 3$. The calculation of the pathwidth of $T_{i,s}$ can be done in a way similar to the folklore calculation to show that 
the pathwidth of a complete ternary tree of height $d$ is precisely $d$. We still present a proof of the observation here for the sake of completeness. 

\begin{observation}\label{lem-pathwidth-calculation}
 For any two integers $i$ and $s$ such that $i \geq 1$ and $s \geq 3$, the pathwidth of $T_{i,s}$ is $i$.
\end{observation}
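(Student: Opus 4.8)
I would prove the two bounds $\mathsf{pw}(T_{i,s}) \le i$ and $\mathsf{pw}(T_{i,s}) \ge i$ separately, both by induction on $i$, exactly mirroring the folklore argument for complete ternary trees.

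For the upper bound, I would build a path decomposition of width $i$ directly. The base case $i=1$ is easy: $T_{1,s}$ is a path, and the decomposition $(\{v_1,v_2\},\{v_2,v_3\},\ldots,\{v_{s-1},v_s\})$ has width $1$. For the inductive step, suppose each copy $T_j$ of $T_{i-1,s}$ admits a path decomposition $\mathcal{D}_j$ of width $i-1$ whose \emph{first} bag contains the root of $T_j$ (a mild strengthening of the induction hypothesis that I would carry along — one can always prepend such a bag, or observe the construction produces one). To decompose $T_{i,s}$: process the spine $(v_1,\ldots,v_s)$ from left to right, and when we reach $v_j$, insert $v_j$ into every bag of $\mathcal{D}_j$ (raising its width from $i-1$ to $i$), glue this modified $\mathcal{D}_j$ in, and then move on to $v_{j+1}$ via a transition bag $\{v_j,v_{j+1}\}$. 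One checks (P1)–(P3): the spine edges are covered by the $\{v_j,v_{j+1}\}$ bags and by the bags containing $v_j$; the edge from $v_j$ to the root of $T_j$ is covered by the first bag of the modified $\mathcal{D}_j$, which contains both; and interval-continuity of each vertex holds because $v_j$ sits contiguously around the block for $T_j$. The maximum bag size is $i+1$, so the width is $i$.

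For the lower bound $\mathsf{pw}(T_{i,s}) \ge i$ with $s \ge 3$, I would use the standard "search-number / bramble"-style contiguity argument, or more elementarily the following: it is well known that a complete ternary tree of height $i$ has pathwidth exactly $i$, and $T_{i,s}$ with $s \ge 3$ \emph{contains the complete ternary tree of height $i$ as a minor} (indeed as a subgraph after choosing, at each of the $i$ levels of recursion, three children among the $s$ available — the spine of length $s\ge 3$ together with the attached subtrees at its first three vertices gives a degree-3 branching at each level). Since pathwidth is minor-monotone, $\mathsf{pw}(T_{i,s}) \ge \mathsf{pw}(\text{complete ternary tree of height } i) = i$. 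If I wanted a self-contained argument I would instead induct: in any path decomposition of $T_{i,s}$, at least one bag must "separate" the three subtrees hanging off $v_1, v_2, v_3$ in a way that forces it to contain a spine vertex plus a full width-$(i-1)$ witness inside one subtree, pushing the width to $i$; the key combinatorial fact is that three pairwise-disjoint connected subgraphs each of pathwidth $\ge i-1$, all attached to a common path, cannot be laid out without some bag of size $\ge i+1$.

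**Main obstacle.** The upper bound is bookkeeping and poses no real difficulty once the strengthened induction hypothesis (root in the first bag) is in place. The genuine content is the lower bound. The cleanest route is to invoke minor-monotonicity of pathwidth plus the known value for complete ternary trees, so the main thing to get right is the claim that $T_{i,s}$ ($s\ge 3$) contains the height-$i$ complete ternary tree as a minor/topological minor — this needs a careful but routine recursive argument picking three branches at each level. If the paper prefers not to cite the ternary-tree fact, then reproving the lower bound from scratch (the "some bag must be large" separator argument) is where the real care is needed, since one must argue about how a path decomposition interacts with the many recursively-nested copies of $T_{i-1,s}$ simultaneously.
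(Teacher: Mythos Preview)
Your upper bound is essentially the paper's construction: add the spine vertex $v_j$ to every bag of a width-$(i-1)$ decomposition of $T_j$ and concatenate the blocks with transition bags $\{v_j,v_{j+1}\}$. The strengthened hypothesis that the root sits in the first bag is unnecessary, since you are inserting $v_j$ into \emph{every} bag of $\mathcal{D}_j$; the edge from $v_j$ to the root of $T_j$ is therefore covered by whichever bag contains that root.

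For the lower bound, your Route~1 is not just unclean but actually false: $T_{i,s}$ does \emph{not} contain the complete ternary tree of height $i$ as a minor, for any $s$. Already at $i=2$ this fails structurally. The only degree-$3$ vertices in $T_{2,s}$ are the interior spine vertices $v_2,\ldots,v_{s-1}$, and they lie on a single path; hence no vertex has three branches each leading to another degree-$3$ vertex, which a height-$2$ complete ternary tree would require as a topological minor. (For $s=3$ even the node count fails: $|T_{i,3}| = (3^{i+1}-3)/2$, one less than the height-$i$ complete ternary tree.) So the reduction to the ternary-tree folklore via minor-monotonicity does not go through.

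Your Route~2 is the right idea and is what the paper does, with one clean trick you did not quite isolate. The paper first reduces to $s=3$ by monotonicity, then \emph{re-roots} $T_{i,3}$ at the middle spine vertex $v_2$. Now $v_2$ has exactly three children ($v_1$, $v_3$, and the root of $T_2$), and each of the three resulting subtrees contains a copy of $T_{i-1,3}$, hence has pathwidth at least $i-1$ by induction. The paper then invokes Scheffler's lemma (\cref{lem-pathwidth-calculation-aux}): if the root has three children whose subtrees each have pathwidth at least $k$, the whole tree has pathwidth at least $k+1$. This is exactly the ``some bag must be large'' fact you allude to, packaged so that the induction is immediate.
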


We need an auxiliary lemma.

\begin{lemma}[\cite{Scheffler1990}]\label{lem-pathwidth-calculation-aux}
Consider a rooted tree $T$ whose root $r$ has three children $u_1$, $u_2$, and $u_3$. If the pathwidth of the subtree rooted at $u_i$ is at least $k$ for each $i \in\{1,2,3\}$, then the pathwidth of $T$ is at least $k+1$.
\end{lemma}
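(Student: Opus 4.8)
The plan is to argue by contradiction: assume $T$ admits a path decomposition of width at most $k$, and produce a bag of size at least $k+2$. Write $T_i$ for the subtree of $T$ rooted at $u_i$, $i \in \{1,2,3\}$. These three subtrees have pairwise disjoint vertex sets, none of which contains $r$, and deleting $r$ from $T$ leaves $T_1$, $T_2$, $T_3$ in three distinct connected components of $T-r$. Suppose for contradiction that $(X_1, X_2, \ldots, X_m)$ is a path decomposition of $T$ with $|X_l| \le k+1$ for every $l$.

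The first substantive step is to observe that each subtree forces a large bag \emph{within its own vertex set}. For each $i$, the sequence $(X_1 \cap V(T_i), X_2 \cap V(T_i), \ldots, X_m \cap V(T_i))$ satisfies conditions (P1)--(P3) for $T_i$, hence is a path decomposition of $T_i$. Since the pathwidth of $T_i$ is at least $k$, the width of this decomposition is at least $k$, so there is an index $j_i$ with $|X_{j_i} \cap V(T_i)| \ge k+1$. Relabelling the three subtrees if necessary, assume $j_1 \le j_2 \le j_3$.

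The second step merges these indices through the cut vertex $r$. Fix $a \in X_{j_1} \cap V(T_1)$ and $b \in X_{j_3} \cap V(T_3)$, and let $P$ be the unique $a$--$b$ path in $T$; it ascends from $a$ to $u_1$, passes through $r$, descends to $u_3$, and continues to $b$, so $V(P) \subseteq V(T_1) \cup \{r\} \cup V(T_3)$ and therefore $V(P) \cap V(T_2) = \emptyset$. I would then invoke the standard fact that, for any connected subgraph $H$ of a graph with path decomposition $(X_1, \ldots, X_m)$, the index set $\{l : X_l \cap V(H) \ne \emptyset\}$ is an interval — this follows because each vertex occupies an interval of bags, each edge lies in a common bag, and a chain of pairwise-overlapping intervals has interval union. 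Applied to $H = P$, this interval contains both $j_1$ and $j_3$, hence contains $j_2$; so some $z \in V(P)$ lies in $X_{j_2}$, and $z \notin V(T_2)$. Consequently $|X_{j_2}| \ge |X_{j_2} \cap V(T_2)| + 1 \ge (k+1)+1 = k+2$, contradicting $|X_{j_2}| \le k+1$. Hence every path decomposition of $T$ has width at least $k+1$, i.e.\ the pathwidth of $T$ is at least $k+1$.

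This is the classical branching lower bound, so I do not expect a genuine obstacle; the only points requiring care are checking that the restriction of a path decomposition to a subtree is again a valid path decomposition (so that a bag of size $\ge k+1$ is forced \emph{inside} each $V(T_i)$) and the interval lemma for connected subgraphs. I would also note that the phrasing above handles the degenerate cases $j_1 = j_2$, $j_2 = j_3$, and $j_1 = j_2 = j_3$ uniformly — when two of the indices coincide the same bag plays both roles and the size bound still yields $k+2$ — so no separate case analysis is needed. The only structural input is that $r$ is a cut vertex separating $T_1$ from $T_3$ with $T_2$ hanging off to the side.
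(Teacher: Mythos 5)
Your proof is correct: restricting the hypothetical width-$k$ decomposition to each subtree forces a bag with $k+1$ vertices of that subtree, and the interval property of connected subgraphs applied to the $a$--$b$ path through $r$ (which avoids $V(T_2)$) injects an extra vertex into the middle bag, giving the contradiction; the degenerate coincidences of the indices $j_i$ are indeed handled uniformly. Note that the paper does not prove this lemma at all --- it is imported from Scheffler's work as a black box --- so there is no proof to compare against; your argument is the standard one for this classical fact and is sound as written.
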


We are now ready to prove \cref{lem-pathwidth-calculation}.

\begin{proof}[Proof of \cref{lem-pathwidth-calculation}]
We first prove the upper bound. For the base case, $(X_1, X_2, \ldots, X_{s-1})$ with $X_i = \{v_i, v_{i+1}\}$ 
is a path decomposition of the $s$-node path $(v_1, v_2, \ldots, v_s)$, showing that the pathwidth of $T_{i,s}$ is at most $1$.

Given that the pathwidth of  $T_{i-1, s}$ is at most $i-1$, we show that the pathwidth of $T_{i, s}$ is at most $i$.
Let the $s$-node path $(v_1, v_2, \ldots, v_s)$ and $s$ instances $T_1, T_2, \ldots, T_s$ of  $T_{i-1, s}$ be the ones in the definition of $T_{i,s}$. For each $1 \leq j \leq s$, let $(X_{1,j}, X_{2,j}, \ldots, X_{k,j})$ be a path decomposition of $T_j$ of width at most $i-1$. Define $X_{l,j}' = X_{l,j} \cup \{v_j\}$. Then  
\[\left(X_{1,1}', X_{2,1}', \ldots, X_{k,1}', \{v_1, v_2\}, X_{1,2}', X_{2,2}', \ldots, X_{k,2}',  \{v_2, v_3\}, \ldots, \{v_{s-1}, v_s\}, X_{1,s}', X_{2,s}', \ldots, X_{k,s}'\right)\]
is a path decomposition of  $T_{i, s}$ of width at most $i$.

For the rest of the proof, we consider the lower bound. It suffices to consider the case of $s = 3$, as the pathwidth of $T_{i,s}$ is at least the pathwidth of $T_{i,3}$ for each $s \geq 3$. For the base case, it is trivial that the pathwidth of  $T_{1,3}$ is at least $1$.

Given that the pathwidth of  $T_{i-1, 3}$ is at least $i-1$, we show that the pathwidth of $T_{i, 3}$ is at least $i$. Consider the path $(v_1, v_2, v_3)$ in the definition of $T_{i, 3}$. We re-root the tree by setting $v_2$ as the root. Now $v_2$ has three children. Let $T$ be any subtree of  $T_{i, 3}$ rooted at one of the children of $v_2$. Observe that $T$ contains $T_{i-1, 3}$ as a subgraph, so the pathwidth of $T$ is at least $i-1$ by induction hypothesis. Applying \cref{lem-pathwidth-calculation-aux} to $T_{i,3}$ rooted at $v_2$, we obtain that the pathwidth of  $T_{i, 3}$ is at least $i$.
\end{proof}

Using \cref{lem-pathwidth-calculation}, we now prove \cref{lem-pathwidth-partition,thm-pathwidth-k}.

\thmpwa*
\begin{proof}
First of all, \cref{thm-excluding-forest} implies that any minor-closed graph class $\GG \notin \AAA$ cannot belong to $\AAA_i$ for any $i$, as $\GG$ contains all forests, so $\bigcup_{0 \leq i < \infty} \AAA_i \subseteq \AAA$.

We claim that $\AAA_1, \AAA_2, \ldots$ are disjoint sets. Suppose there are two indices $i$ and $j$ such that $\GG \in \AAA_i$ and $\GG \in \AAA_j$ and $i < j$. Then we have $T_{j', s} \in \GG$ for all  $j'$ and $s$ such that $1 \leq j' \leq j$ and $s \geq 1$ by (C1) in \cref{def-pw}, as $T_{j', s}$ is a minor of $T_{j, s}$.
However, $\GG \in \AAA_i$ implies that $T_{i+1, s} \notin \GG$ for some $s$ by (C2)  in \cref{def-pw}. This is a contradiction because  $1 \leq i+1 \leq j$.

It remains to show that each graph class $\GG \in \AAA$ belongs to $\AAA_i$ for some index $i$. 
As $\GG$ has bounded pathwidth, let $k < \infty$ be the maximum pathwidth of graphs in $\GG$. Then $T_{k+1, s} \notin \GG$ for all $s \geq 3$ by \cref{lem-pathwidth-calculation}. We pick $i$ to be the smallest index such that $T_{i+1, s} \notin \GG$ for some $s$. Such an index $i$ exists, and we must have $0 \leq i \leq k$.
As a result, $\GG \in \AAA_{i}$, as both (C1) and (C2)  in \cref{def-pw}  are satisfied due to our choice of $i$.
\end{proof}

\thmpwb*
\begin{proof}
Let $\GG$ be the graph class that contains all graphs with pathwidth at most $k$.
We have $T_{k, s} \in \GG$ for all positive integers $s$ by 
 \cref{lem-pathwidth-calculation}, so (C1) in \cref{def-pw} is satisfied. Here we use the fact that $T_{k,s'}$ is a minor of $T_{k,s}$ whenever  $s' \leq s$.
 Since $\GG$  does not contain any graph with pathwidth at least $k+1$, by 
 \cref{lem-pathwidth-calculation}, $T_{k+1, s} \notin \GG$ for all $s \geq 3$, so (C2) in \cref{def-pw}  is satisfied.
\end{proof}

\section{The Bounded Growth Property}\label{sect:bounded-growth}

In this section, we prove an alternative characterization of the set of graph classes $\AAA_i$ that will be crucial in the complexity-theoretic study in \cref{sect:complexity-landscape}. 

Let $g\geq 0$. We say that a graph class $\GG$ is \emph{$g$-growth-bounded} if, for every integer $\Delta\geq  3$, there exists a constant $C_\Delta > 0$ only depending on $\Delta$ and $\GG$  such that, for every $d\geq 1$, every $G=(V,E) \in\GG^\Delta$, and every $r\in V$, \[|\{v\in V \ | \  \dist(v,r)\leq d\}|\leq C_\Delta \cdot d^g.\]

In other words, $\GG$ is $g$-growth-bounded if the $d$-radius neighborhood of any node in any bounded-degree graph in $\GG$ has size $O(d^g)$.
Recall that $\GG^\Delta$ is the set of graphs $G \in \GG$ with maximum degree at most $\Delta$.  

The goal of this section is to prove the following result, which gives an alternative characterization of the set $\AAA_k$: It is the set of all minor-closed graph classes $\GG$ such that $k$ is the smallest number such that  $\GG$ is $k$-growth-bounded.

\begin{proposition}\label{thm:alt-definition}
Let $\GG$ be any minor-closed graph class.
If $\GG \notin \bigcup_{0 \leq i < \infty} \AAA_i$, then 
$\GG$ is not $k$-growth-bounded for any $k < \infty$.
If $\GG \in \AAA_k$, then $k$ is the smallest number such that  $\GG$ is $k$-growth-bounded.
\end{proposition}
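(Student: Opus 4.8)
The plan is to establish the two implications separately, in both cases using the combinatorial link between the trees $T_{k,s}$ and neighborhood growth.

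\medskip

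\noindent\emph{First implication: if $\GG \notin \bigcup_{i} \AAA_i$, then $\GG$ is not $k$-growth-bounded for any finite $k$.}
By \cref{lem-pathwidth-partition}, $\GG \notin \bigcup_i \AAA_i$ means $\GG \notin \AAA$, i.e.\ $\GG$ has unbounded pathwidth, so by the excluding forest theorem (\cref{thm-excluding-forest}) $\GG$ contains all forests. In particular $\GG$ contains the complete binary tree of every depth $d$, which (after bounding the degree by $\Delta = 3$) has a node, the root, whose $d$-radius neighborhood has size $2^{d+1}-1$. Since $2^{d+1}-1$ is not $O(d^k)$ for any fixed $k$, no single exponent $k$ can witness $g$-growth-boundedness, which gives the claim. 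One should be slightly careful that $g$-growth-boundedness quantifies over all $\Delta \ge 3$ with a possibly different constant for each; but failing the bound already for $\Delta = 3$ suffices.

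\medskip

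\noindent\emph{Second implication: if $\GG \in \AAA_k$, then $k$ is the smallest exponent for which $\GG$ is $k$-growth-bounded.}
This splits into a lower bound (not $(k-1)$-growth-bounded, for $k \ge 1$) and an upper bound ($k$-growth-bounded). For the lower bound, condition (C1) gives $T_{k,s} \in \GG$ for every $s$. The tree $T_{k,s}$ has $\Theta(s^k)$ nodes (exactly $s + s\cdot|V(T_{k-1,s})| = \sum_{j=1}^{k} s^j$), all contained in the $O(ks)$-radius ball around the root — indeed every node of $T_{k,s}$ is within distance $k\cdot s$ of the root, and $T_{k,s}$ has maximum degree $3$. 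Taking $d \asymp s$, a ball of radius $d$ contains $\Omega(d^k)$ nodes, so $\GG$ is not $(k-1)$-growth-bounded. For the upper bound, I must show that every $G \in \GG^\Delta$ has all $d$-balls of size $O(d^k)$. Here I would use condition (C2): there is an $s_0$ with $T_{k+1,s_0} \notin \GG$. The idea is that a bounded-degree graph in which some ball of radius $d$ is much larger than $d^k$ must contain $T_{k+1,s}$ as a minor for $s$ growing with $d$; since $\GG$ is minor-closed and excludes $T_{k+1,s_0}$, this caps the ball size. Concretely, one does a BFS from $r$ and argues that if the layers grow, one can greedily carve out the hierarchical path-of-paths structure of $T_{k+1,s}$: a long "spine" path in the BFS tree each of whose $s$ vertices hangs a subtree that itself (recursively) realizes $T_{k,s}$, using that high layer-sizes force long monotone paths via a Dilworth/pigeonhole argument on the BFS tree combined with the bounded-degree condition. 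Equivalently, and perhaps more cleanly, one can induct on $k$: show that $\GG$ being $(k{+}1)$-growth-bounded but not $k$-growth-bounded forces $T_{k+1,s} \in \GG$ for all $s$, contrapositive of what we want.

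\medskip

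\noindent\emph{Main obstacle.} The routine parts are the tree-size computations and the forest direction. The real work is the upper bound in the second implication: extracting a $T_{k+1,s}$-minor (with $s \to \infty$) from the mere hypothesis that $d$-balls are super-$d^k$ in size, in a bounded-degree minor-closed class. The delicate point is that large balls alone do not obviously produce the *nested* path structure of $T_{k+1,s}$; one needs a recursive/inductive extraction — e.g., showing that if a rooted bounded-degree tree has $\omega(d^k)$ nodes within radius $d$ then either it has a vertex from which three disjoint subtrees each contain (a minor of) $T_{k,s}$-like growth, or it has a long path along which such growth persists, and iterating this $k+1$ times. Making this extraction quantitatively tight (so the excluded $T_{k+1,s_0}$ really forces an $O(d^k)$ bound rather than, say, $O(d^{k} \operatorname{polylog} d)$) is the crux, and I expect it to lean on \cref{lem-pathwidth-calculation} (pathwidth of $T_{k,s}$ is $k$) together with the structural fact that bounded-pathwidth bounded-degree graphs have linearly-growing balls along each "direction", applied recursively.
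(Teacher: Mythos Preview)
Your treatment of the first implication and of the lower-bound half of the second implication is correct and essentially what the paper does (the paper phrases the lower bound via \cref{lem:alt-onlyif}, using $T_{k,s}\in\GG$ exactly as you do).

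The upper bound, which you correctly flag as the crux, is where your proposal remains only a sketch, and the concrete mechanism is missing. The paper does \emph{not} argue via pathwidth or \cref{lem-pathwidth-calculation}; instead it introduces a clean inductive invariant on BFS trees. Given a rooted tree $T$ and a parameter $C$, define layers $L_0^\ast=V$, $L_i=\{v:\text{the subtree at }v\text{ has }\ge C\text{ nodes in }L_{i-1}^\ast\}$, and $L_i^\ast=\{v\in L_i:\text{at least two children of }v\text{ lie in }L_i\}$; call $T$ \emph{$(k,C)$-limited} if $L_{k+1}=\emptyset$. Two short lemmas then do all the work: (i) any $(k,C)$-limited bounded-degree tree has $|N^d(r)|=O(d^k)$, proved by induction on $k$ after observing that $L_k$ restricted to $N^d(r)$ is a union of at most $O(1)$ monotone paths (so $|L_k\cap N^d(r)|=O(d)$), and the pieces hanging off $L_k$ are $(k{-}1,C)$-limited; and (ii) if a bounded-degree rooted tree is \emph{not} $(k,C)$-limited for $C$ large enough (depending on $\Delta$ and a target $x$), then it contains $T_{k+1,x}$ as a rooted minor, found by repeatedly walking down through $L_i^\ast$-rich children. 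Since $\GG\in\AAA_k$ excludes some $T_{k+1,s_0}$, every BFS tree of every $G\in\GG^\Delta$ is $(k,C)$-limited for a suitable $C$, and (i) gives the $O(d^k)$ ball bound with no polylog loss.

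Your ``three disjoint subtrees'' instinct and the hope to lean on the pathwidth computation are off target: the right branching notion is \emph{two} children with heavy subtrees (this is exactly what $L_i^\ast$ records), and the argument is purely a counting/minor-extraction on the BFS tree, with no appeal to pathwidth.
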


We first prove an easy part of   \cref{thm:alt-definition}.

\begin{lemma}\label{lem:alt-onlyif}
Let $\GG$ be any minor-closed graph class such that $\GG \notin \AAA_0 \cup \AAA_1 \cup \cdots \cup \AAA_k$. For any $s < k+1$, $\GG$ is not $s$-growth-bounded.
\end{lemma}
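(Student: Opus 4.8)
The plan is to refute every candidate bound of the form $C_\Delta d^s$ with $s<k+1$ by exhibiting an explicit family of hard instances, namely the trees $T_{k+1,m}$ for large $m$, which have maximum degree $3$; since the definition of $g$-growth-boundedness only requires \emph{some} admissible $\Delta$ to fail, working with $\Delta=3$ throughout will suffice.

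The first step is to verify that $T_{k+1,m}\in\GG$ for every positive integer $m$. By \cref{lem-pathwidth-partition}, $\bigcup_{i\ge 0}\AAA_i$ is precisely the collection $\AAA$ of bounded-pathwidth minor-closed graph classes, and it is a partition. Since $\GG\notin\AAA_0\cup\cdots\cup\AAA_k$, there are two possibilities. If $\GG\notin\AAA$, then $\GG$ has unbounded pathwidth, so the excluding forest theorem (\cref{thm-excluding-forest}) shows that $\GG$ contains all forests, in particular every $T_{k+1,m}$. Otherwise $\GG\in\AAA_j$ for a unique index $j$, and disjointness forces $j\ge k+1$; then condition (C1) of \cref{def-pw} gives $T_{j,m}\in\GG$ for all $m$, and since $T_{k+1,m}$ occurs as a subgraph (hence a minor) of $T_{j,m}$ and $\GG$ is minor-closed, again $T_{k+1,m}\in\GG$. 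Inspecting \cref{def-trees} shows that the maximum degree of $T_{k+1,m}$ is at most $3$, so $T_{k+1,m}\in\GG^3$.

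Next I would record two elementary facts about $T_{k,s}$. Writing $N(k,s)$ for its number of nodes, the recursion $N(1,s)=s$ and $N(k,s)=s+sN(k-1,s)$ gives $N(k,s)=s+s^2+\cdots+s^k\ge s^k$. Writing $h(k,s)$ for the maximum distance from the root of $T_{k,s}$ to any node, the recursion $h(1,s)=s-1$ and $h(k,s)=s+h(k-1,s)$ gives $h(k,s)=ks-1$. Hence $T_{k+1,m}$ has at least $m^{k+1}$ nodes, all within distance $d:=(k+1)m-1$ of the root $r$. Now fix an arbitrary constant $C>0$ and take $G=T_{k+1,m}\in\GG^3$ with $m$ large, $r$ the root, and $d=(k+1)m-1$; then the $d$-ball about $r$ is all of $V(G)$, so $|\{v:\dist(v,r)\le d\}|\ge m^{k+1}$ while $d^{s}\le((k+1)m)^{s}$, and the ratio is at least $m^{k+1-s}/(k+1)^{s}$. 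Since $s<k+1$, the exponent $k+1-s$ is positive, so this ratio grows without bound in $m$; choosing $m$ with $m^{k+1-s}>(k+1)^sC$ violates the bound for the constant $C$. As $C$ was arbitrary, no constant $C_3$ works for the exponent $s$, so $\GG$ is not $s$-growth-bounded.

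I do not expect a genuine obstacle here; the argument is a direct construction. The only points needing care are (i) the case split in the first step, i.e.\ handling both the unbounded-pathwidth case (via the excluding forest theorem) and the $\GG\in\AAA_j$, $j\ge k+1$ case (via (C1) and the fact that subgraphs are minors), and (ii) keeping the quantifier structure of $g$-growth-boundedness straight, so that producing a single degree bound $\Delta=3$ together with one radius $d$ per instance is enough to refute it.
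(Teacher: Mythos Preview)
Your proposal is correct and follows the same approach as the paper: exhibit the family $\{T_{k+1,m}\}_m\subseteq\GG^3$ and observe that the ball of radius $\Theta(m)$ about the root contains $\Omega(m^{k+1})$ nodes. The paper's proof is terse and leaves both steps implicit; you have filled in the calculations.

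One small remark: your route to $T_{k+1,m}\in\GG$ via \cref{lem-pathwidth-partition} and the excluding forest theorem is correct but slightly heavier than necessary. The paper's one-line appeal to \cref{def-pw} unpacks as a direct induction on $j$: if $\GG\notin\AAA_0$ then (C2) fails for $j=0$, so $T_{1,s}\in\GG$ for all $s$; then (C1) holds for $j=1$, so $\GG\notin\AAA_1$ forces (C2) to fail for $j=1$, giving $T_{2,s}\in\GG$ for all $s$; and so on up to $j=k$. This avoids invoking \cref{lem-pathwidth-partition} and \cref{thm-excluding-forest}, but the conclusion is of course the same.
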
 
\begin{proof}
Consider the set of graphs $\SSS = \{T_{k+1,x} \ | \ x \geq 3\}$. By \cref{def-pw}, as $\GG \notin \AAA_0 \cup \AAA_1 \cup \cdots \cup \AAA_k$,  $\SSS$ is a subset of $\GG^\Delta$ for each $\Delta \geq 3$. From the graph structure of $T_{k+1,x}$, the size of the $d$-radius neighborhood of a node in a graph in $\SSS$ cannot be bounded by any function $O(d^s)$ with $s < k+1$.
\end{proof}

\paragraph{Layers of nodes.} Consider the following terminology.  Given a positive integer $C$ and a rooted tree $T=(V,E)$, we define a sequence $L_0^\ast, L_1, L_1^\ast, L_2, L_2^\ast, \ldots$ of node subsets  of $T$ as follows.
\begin{itemize}
    \item $L_0^\ast = V$ is the set of all nodes in $T$. 
    \item For each $i \geq 1$, $L_i$ is the set of nodes $v$ such that the subtree rooted at $v$ contains at least $C$ nodes in $L_{i-1}^\ast$.
    \item  For each $i \geq 1$, $L_i^\ast$ is the set of nodes $v$  having at least two children in $L_i$.
\end{itemize}
For each non-negative integer $k$, we say that a rooted tree $T$ is \emph{$(k,C)$-limited} if  $L_{k+1} = \emptyset$. 

We write $N^d(r)$ to denote the set of nodes $v \in V$ with $\dist(v,r) \leq d$. We have the following auxiliary lemma.

\begin{lemma}\label{lem-neighborhood-size-aux}
If $T$ is a $(k,C)$-limited rooted tree whose maximum degree is at most $\Delta$, then $|L_{k} \cap N^d(r)| \leq \Delta C d$, where $r$ is the root.
\end{lemma}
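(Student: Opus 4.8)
The plan is to show that $L_k$ induces a rooted subtree $T'$ of $T$ with root $r$, that $T'$ has only a few branching nodes (this is exactly where the $(k,C)$-limited hypothesis enters), and then to count level by level how many nodes of $T'$ can lie within distance $d$ of $r$. To begin, I would establish that $L_k$ is closed under taking ancestors: if $v \in L_k$, then the subtree rooted at $v$ contains at least $C$ nodes of $L_{k-1}^\ast$, and since this subtree is contained in the subtree rooted at any ancestor $u$ of $v$, also $u \in L_k$. If $L_k = \emptyset$ the claimed bound is trivial, so assume $L_k \neq \emptyset$; then $r \in L_k$ and $L_k$ induces a connected subtree $T'$ of $T$ rooted at $r$. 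For every $v \in L_k$ the unique path from $r$ to $v$ in $T$ stays inside $L_k$, hence inside $T'$, so the depth of $v$ in $T'$ equals $\dist(v,r)$; consequently $|L_k \cap N^d(r)|$ is exactly the number of nodes of $T'$ at depth at most $d$.

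Next I would observe that $L_k^\ast \subseteq L_k$: if a node $v$ has two children each of whose subtrees contains at least $C$ nodes of $L_{k-1}^\ast$, then the subtree rooted at $v$ contains at least $C$ such nodes. Together with the previous observation, this shows that the set of nodes of $T'$ having at least two children in $T'$ is precisely $L_k^\ast$. Since $T$ is $(k,C)$-limited we have $L_{k+1} = \emptyset$; applying the definition of $L_{k+1}$ to $r$, whose subtree is all of $T$, yields $|L_k^\ast| \leq C-1$. Thus $T'$ is a rooted tree of maximum degree at most $\Delta$ with at most $C-1$ branching nodes.

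Finally I would carry out the level count. For $j \geq 0$ let $a_j$ denote the number of nodes of $T'$ at depth $j$, so that $a_0 = 1$ and $|L_k \cap N^d(r)| = \sum_{j=0}^{d} a_j$. Every node at depth $j$ is a child in $T'$ of a node at depth $j-1$, and only a node of $L_k^\ast$ can have more than one child; writing $c_u \leq \Delta$ for the number of children of $u$ in $T'$, we get $a_j \leq a_{j-1} + \sum_{u}(c_u - 1)$, where the sum is over $u \in L_k^\ast$ at depth $j-1$. Unrolling this recursion and exchanging the order of summation gives $\sum_{j=0}^{d} a_j \leq (d+1) + d\sum_{u \in L_k^\ast}(c_u-1) \leq (d+1) + d(\Delta-1)(C-1)$, and a direct computation shows the right-hand side is at most $\Delta C d$ for every $d \geq 1$ (this is the only regime of interest, as $g$-growth-boundedness is stated for $d \geq 1$, and one uses $\Delta \geq 3$).

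The one step requiring care is the middle one: correctly identifying the branching nodes of the induced subtree $T'$ with the set $L_k^\ast$ defined on $T$, and reading off the bound $|L_k^\ast| \leq C-1$ from $L_{k+1} = \emptyset$. Once this is in place, the first and last steps are routine bookkeeping, so I do not anticipate a genuine obstacle.
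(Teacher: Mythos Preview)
Your proof is correct and rests on the same key observation as the paper: from $L_{k+1}=\emptyset$ one gets $|L_k^\ast|\le C-1$, and since $L_k$ is ancestor-closed it induces a subtree $T'$ whose branching vertices are exactly $L_k^\ast$. The paper then finishes with a different bookkeeping device: instead of your level-by-level count, it deletes the edges from each $v\in L_k^\ast$ to its children, which partitions $T'$ into at most $(C-1)(\Delta-1)+2\le C\Delta$ monotone root-directed paths, and observes that each such path can contribute at most $d$ vertices to $N^d(r)$, giving $|L_k\cap N^d(r)|\le C\Delta d$ directly. Your horizontal (depth-by-depth) summation and the paper's vertical (path-by-path) decomposition are equivalent ways of exploiting the same ``few branchings'' structure; the path decomposition is slightly slicker in that it avoids the telescoping recursion, while your argument makes the depth-wise growth $a_j\le 1+(\Delta-1)(C-1)$ explicit, which is a mild bonus. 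Both arguments need $d\ge 1$ and $\Delta+C\ge 3$ for the final numerical inequality, which is harmless in the intended application.
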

\begin{proof} 
Since $T$ is  $(k,C)$-limited, we have $L_{k+1} = \emptyset$, so $|L_{k}^\ast| < C$. If we remove the edges between each $v \in L_{k}^\ast$ and its children in the rooted tree $T$, then the set of nodes in $L_{k}$ are partitioned into paths $P_1, P_2, \ldots, P_x$, as $L_{k}^\ast$ is the set of nodes in $L_{k}$ with at least two children in $L_k$. Here $x$ is the number of paths in the partition. Orienting each edge toward the parent, we write each path $P_i$ as $P_i = u_{i,1} \leftarrow u_{i,2} \leftarrow \cdots$. 

Observe that $u_{i,1}$ is either the root $r$ of $T$ or a child of a node in $L_k$. Since the paths  $P_1, P_2, \ldots, P_x$ are disjoint, at most one path $P_i$ has $u_{i,1} = r$. The number of nodes whose parent is in $L_k^\ast$ is at most $|L_k^\ast|(\Delta -1) + 1$. Hence the total number $x$ of paths is at most $x = |L_k^\ast|(\Delta -1) + 2  \leq (C-1)(\Delta-1)+2 \leq C\Delta$.

A necessary condition for a node $v \in L_k$ to be in $N^d(r)$ is that it is within the first $d$ nodes of some path $P_i$, we have $|L_{k} \cap N^d(r)| \leq xd \leq \Delta C d$.
\end{proof}

Using \cref{lem-neighborhood-size-aux}, we show that the size of the $d$-radius neighborhood of the root $r$ of any $(k,C)$-limited rooted tree $T$  is small.

\begin{lemma}\label{lem-neighborhood-size}
For any three integers $C \geq 1$, $k \geq 0$, and $\Delta \geq 1$, there exists a constant $K_{C,k,\Delta}$ only depending on $C$, $k$, and $\Delta$ such that the number of nodes within distance $d$ to the root in any $(k,C)$-limited rooted tree $T$ with degree at most $\Delta$ is upper bounded by  $K_{C,k,\Delta} \cdot d^k$.
\end{lemma}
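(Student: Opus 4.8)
The plan is to prove the bound by induction on $k$, using the layer decomposition $L_0^\ast, L_1, L_1^\ast, L_2, \ldots$ and \cref{lem-neighborhood-size-aux} as the engine that controls how many nodes of a given layer can appear inside a ball. The base case $k=0$ is immediate: if $T$ is $(0,C)$-limited then $L_1 = \emptyset$, which forces $|L_0^\ast| = |V| < C$ (otherwise the root itself would lie in $L_1$), so the number of nodes within any distance to the root is at most $C = K_{C,0,\Delta}\cdot d^0$.

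For the inductive step, suppose the statement holds for $k-1$ with constant $K_{C,k-1,\Delta}$, and let $T$ be a $(k,C)$-limited rooted tree of degree at most $\Delta$. The idea is to split $N^d(r)$ according to membership in $L_k$. By \cref{lem-neighborhood-size-aux}, $|L_k \cap N^d(r)| \leq \Delta C d$. It remains to bound the nodes of $N^d(r)$ that are \emph{not} in $L_k$. A node $v \notin L_k$ has a subtree containing fewer than $C$ nodes of $L_{k-1}^\ast$; more usefully, the nodes outside $L_k$ form a forest of ``dangling'' subtrees hanging off the nodes of $L_k$ (together with, possibly, a piece near the root if the root itself is outside $L_k$, but since $L_k$ is closed upward toward the root whenever it is nonempty, the outside-$L_k$ part is exactly a collection of subtrees each rooted at a child of some node of $L_k \cap N^d(r)$). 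Each such dangling subtree $T'$, when re-examined, is itself $(k-1,C)$-limited: its layer sequence is dominated by that of $T$, and since $T'$ contains fewer than $C$ nodes of $L_{k-1}^\ast(T) \supseteq L_{k-1}^\ast(T')$ — wait, more carefully, one checks that $L_k(T') = \emptyset$ because $v \notin L_k(T)$ means no node of $T'$ has a subtree with $C$ nodes of $L_{k-1}^\ast$, hence $T'$ is $(k-1,C)$-limited. Then the nodes of $T' \cap N^d(r)$ that lie within distance $d$ of $r$ lie within distance $d$ of the root of $T'$, so by the inductive hypothesis there are at most $K_{C,k-1,\Delta}\cdot d^{k-1}$ of them per dangling subtree.

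Finally we count the dangling subtrees that can intersect $N^d(r)$: each is rooted at a child of a node in $L_k \cap N^d(r)$, and there are at most $|L_k \cap N^d(r)| \leq \Delta C d$ such nodes, each with at most $\Delta$ children, giving at most $\Delta^2 C d$ dangling subtrees meeting the ball. Summing, $|N^d(r)| \leq |L_k \cap N^d(r)| + (\Delta^2 C d)\cdot K_{C,k-1,\Delta}\cdot d^{k-1} \leq \Delta C d + \Delta^2 C K_{C,k-1,\Delta}\cdot d^{k}$, which is at most $K_{C,k,\Delta}\cdot d^k$ for a suitable constant $K_{C,k,\Delta} = \Delta C + \Delta^2 C K_{C,k-1,\Delta}$ (using $d \geq 1$).

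The step I expect to be the main obstacle — and the one deserving the most care in the full write-up — is verifying that each dangling subtree $T'$ hanging off $L_k$ is genuinely $(k-1,C)$-limited, i.e.\ that the layer decomposition behaves well under passing to such subtrees. One must check that the layers $L_i^\ast(T')$ of $T'$ computed intrinsically are contained in the restrictions $L_i^\ast(T)\cap V(T')$, so that $v\notin L_k(T)$ (no $C$-heavy subtree within $T'$) really does imply $L_k(T')=\emptyset$; a subtlety is that ``heaviness'' is measured relative to $L_{k-1}^\ast$, which is itself layer-dependent, so the containment has to be pushed through all levels simultaneously by a secondary induction on $i$. Once that monotonicity is nailed down, the counting above is routine.
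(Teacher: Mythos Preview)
Your proposal is correct and follows the paper's proof essentially verbatim: induct on $k$, bound $|L_k \cap N^d(r)| \leq \Delta C d$ via \cref{lem-neighborhood-size-aux}, and apply the inductive hypothesis to each subtree hanging off $L_k$, arriving at the very same recursion $K_{C,k,\Delta} = \Delta C(1 + \Delta K_{C,k-1,\Delta})$. Two small notes: you should separately dispatch the case $L_k = \emptyset$ (your counting would then give zero; the paper handles it by observing that $T$ is already $(k-1,C)$-limited and invoking the inductive hypothesis directly), and the ``main obstacle'' you flag dissolves once you notice that for a subtree $T'$ rooted at any node $u$ one has $L_i(T') = L_i(T) \cap V(T')$ and $L_i^\ast(T') = L_i^\ast(T) \cap V(T')$ with \emph{equality}, since membership of a vertex $v$ in each layer depends only on the subtree below $v$, which is identical in $T$ and in $T'$.
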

\begin{proof} 
We set $K_{C,k,\Delta}$ as follows.
\[
K_{C,k,\Delta}
= 
\begin{cases}
C-1, & \text{if $k=0$,}\\
\Delta C \cdot (\Delta K_{C,k-1,\Delta} + 1)  & \text{if $k \geq 1$}.
\end{cases}
\]

Let $T=(V,E)$ be any  $(k,C)$-limited rooted tree, so $L_{k+1} = \emptyset$. Let $r$ be the root of $T$.  We verify that the number of nodes within distance $d$ to the root in $T$ is at most $|N^d(r)| \leq K_{C,k,\Delta} \cdot d^k$.

\paragraph{Base case.}
For the case of $k = 0$, if $L_1 = \emptyset$, then $T$ contains at most $C-1$ nodes, since otherwise the root of $T$ belongs to $L_1$. Therefore, $|N^d(r)| \leq |V| \leq C-1 = K_{C,0,\Delta}$.

\paragraph{Inductive step.}
Now, suppose $k \geq 1$.
If $L_{k} = \emptyset$, then $T$ is $(C,k-1)$-limited, so the number of nodes within distance $d$ to the root is at most $K_{C,k-1,\Delta} d^{k-1} \leq K_{C,k,\Delta} d^k$ by inductive hypothesis.

Suppose $L_{k} \neq \emptyset$, then the root $r$ belongs to $L_{k}$, and the number of nodes in $L_k$ within distance $d$ to the root is at most $\Delta C d$ by \cref{lem-neighborhood-size-aux}. Let $S = N^d(r) \cap L_k$ denote these nodes.

Each node in $N^d(r) \setminus S$  belongs to the tree rooted at a child $u$ of some $v \in S$. Let $S'$ be the set of nodes in $V \setminus L_k$ that are children of nodes in $S$. Then $|S'| \leq \Delta |S|$.

Since $u \notin L_k$ for each $u \in S'$, the subtree of $T$ rooted at $u$ is $(C,k-1)$-limited, so it contains at most $K_{C,k-1,\Delta} d^{k-1}$ nodes that are within distance $d$ to $u$ by inductive hypothesis.
Therefore, 
\begin{align*}
|N^d(r)| &= |N^d(r) \cap L_k| + |N^d(r) \setminus L_k| \\
&\leq |S| + |S'| \cdot K_{C,k-1,\Delta} d^{k-1}\\
&\leq \Delta C d  + \Delta^2 C d \cdot K_{C,k-1,\Delta} d^{k-1}\\
& \leq \Delta C \cdot (1 + \Delta K_{C,k-1,\Delta}) d^{k}\\
& = K_{C,k,\Delta} d^{k},
\end{align*}
as desired.
\end{proof}

\paragraph{Rooted minors.} A \emph{rooted} graph is a graph $G$ with a distinguished root node $r \in V$. A rooted graph can also be treated as a graph without a root. We say that $H$ is a \emph{rooted minor} of $G$ if there exist  a partition of $V(G)$ into $k=|V(H)|$ disjoint connected clusters $\CC = \{V_1, V_2, \ldots, V_k\}$ and a bijection between $\CC = \{V_1, V_2, \ldots, V_k\}$ and $V(H)$  satisfying the following two conditions:
\begin{itemize}
    \item For each edge $e$ in $H$, the two clusters in  $\CC$ corresponding to the two endpoints of $e$ are adjacent in $G$. 
    \item The cluster in $\CC$ corresponding to the root of $H$ contains the root of $G$.
\end{itemize}
Observe that in \cref{def-trees}, $T_{i,s}$ is a rooted minor of $T_{i', s'}$  whenever $i \leq i'$ and $s \leq s'$.

We show that a rooted minor isomorphic to $T_{i+1,s}$ exists in any rooted tree $T$ that is not $(i,C)$-limited, for some sufficiently large $C$.

\begin{lemma}\label{lem-minor-aux}
For any  positive integers $x$ and $\Delta$, there is a number $C_{x,\Delta} > 0 $  only depending on $x$ and $\Delta$ such that for each positive integer $i$, any rooted tree $T$ with degree at most $\Delta$ that is not $(i,C_{x,\Delta})$-limited  contains $T_{i+1,x}$ as a rooted minor.
\end{lemma}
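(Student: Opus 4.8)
The plan is to prove, by induction on $i \ge 0$, the slightly more general statement in which $T_{1,x}$ is read as the $x$-vertex path (see \cref{def-trees}): there is a constant $C_{x,\Delta}$, say $C_{x,\Delta} = (2\Delta)^{x+1}$, such that every rooted tree $T$ of maximum degree at most $\Delta$ that is not $(i,C_{x,\Delta})$-limited contains $T_{i+1,x}$ as a rooted minor. The cases $i \ge 1$ are exactly the lemma, and $i = 0$ serves as the base. First I would record two preliminary facts. (i) \emph{The layer sets are local to subtrees}: for each node $u$ of $T$, writing $T_u$ for the subtree rooted at $u$, one shows by induction on $j$ that $L_j(T_u) = L_j(T) \cap V(T_u)$ and $L_j^\ast(T_u) = L_j^\ast(T) \cap V(T_u)$, the point being that the subtree of a node $w \in V(T_u)$ is the same whether computed in $T$ or in $T_u$; in particular, if $T_u$ contains a node of $L_j(T)$ then $L_j(T_u) \ne \emptyset$, i.e.\ $T_u$ is not $(j-1,C_{x,\Delta})$-limited. (ii) \emph{It suffices to exhibit a model of $T_{i+1,x}$ in $T$}, that is, pairwise disjoint connected vertex sets realizing the adjacencies of $T_{i+1,x}$ with the root set containing the root of $T$: since $T$ is connected, such a model extends to a partition of $V(T)$ by repeatedly moving an unassigned vertex adjacent to some set into that set, which preserves connectivity, all adjacencies, and the root set. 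I will also use the easy fact $L_i^\ast \subseteq L_i$ (a node with a child in $L_i$ has $\ge C_{x,\Delta}$ nodes of $L_{i-1}^\ast$ in its own subtree).

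For the base case $i = 0$, ``not $(0,C_{x,\Delta})$-limited'' means $L_1(T) \ne \emptyset$, equivalently (take the root) $|V(T)| \ge C_{x,\Delta}$, so the bound on the maximum degree forces a root-to-leaf path on at least $x$ vertices $q_0,\dots,q_{x-1}$, and $\{q_0\},\dots,\{q_{x-1}\}$ is a model of $T_{1,x}$ rooted at $\{q_0\}$. For the inductive step, assume the statement for $i-1$ and let $T$ be not $(i,C_{x,\Delta})$-limited, so there is a node $v$ with $|M| \ge C_{x,\Delta}$ where $M := L_i^\ast(T) \cap V(T_v)$. I would walk down from $v$ along the path $P = (v = p_0, p_1, \dots)$ that always proceeds to the child whose subtree contains the most vertices of $M$, stopping once no child's subtree meets $M$. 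Call $p_t$ a \emph{branching position} if $p_t \in M$, or if some child of $p_t$ other than the one on $P$ has a subtree meeting $M$; at every branching position there is a child $c_t \notin P$ whose subtree contains an $L_i(T)$-node (either $p_t \in L_i^\ast$ has two children in $L_i$, at most one on $P$; or a non-$P$ child contains an $M$-node, which lies in $L_i$). Letting $n_t$ be the number of $M$-vertices in $T_{p_t}$, one has $n_{t+1} = n_t$ at non-branching positions and $n_{t+1} \ge (n_t-1)/\Delta$ at branching positions, while $n_t$ must reach $0$ before the walk stops; a short potential computation then produces at least $x$ branching positions $p_{(1)},\dots,p_{(x)}$ in downward order, for $C_{x,\Delta}$ as above. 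By fact (i) and the induction hypothesis, each subtree $T_{c_{(j)}}$ contains $T_{i,x}$ as a rooted minor whose root set contains $c_{(j)}$. I would then assemble a model of $T_{i+1,x}$: the spine sets $V_1,\dots,V_x$ are the consecutive pieces of $P$ delimited by $p_{(1)},\dots,p_{(x)}$, with $V_1$ also absorbing the $v$-to-$p_{(1)}$ subpath and the root-to-$v$ subpath of $T$; each $V_j$ carries the rooted-minor model of $T_{i,x}$ inside $T_{c_{(j)}}$, joined to $V_j$ by the edge $p_{(j)}c_{(j)}$. These sets are pairwise disjoint because for $j < j'$ the node $p_{(j')}$ lies in the on-$P$ child of $p_{(j)}$, hence in none of the other subtrees hanging at $p_{(j)}$; invoking fact (ii) finishes the step.

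I expect the inductive step to be the main obstacle, and within it the claim that the greedy downward walk necessarily meets at least $x$ branching positions. The subtlety is that the potential $n_t$ can fall by far more than one at a single branching position (when many children carry $M$-vertices), so a naive count of unit decrements fails; the argument must instead exploit the multiplicative bound $n_{t+1}\ge (n_t-1)/\Delta$ together with the observation that the walk cannot terminate while $n_t$ is still large, which is precisely where a threshold of the form $(2\Delta)^{x+1}$ — rather than something independent of $\Delta$ — is needed. A secondary point calling for care is the bookkeeping that the chosen side-branch subtrees $T_{c_{(1)}},\dots,T_{c_{(x)}}$ are pairwise disjoint and disjoint from $P$, and that the resulting model genuinely realizes the adjacency pattern of $T_{i+1,x}$ as given in \cref{def-trees}.
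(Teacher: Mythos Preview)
Your proposal is correct and follows essentially the same strategy as the paper: induct on $i$, walk greedily down the tree using the count of $L_i^\ast$-nodes in the current subtree as a potential, use the multiplicative drop $(n-1)/\Delta$ to guarantee $x$ special positions, and at each such position peel off a side branch containing an $L_i$-node to which the induction hypothesis applies. The paper's proof differs only cosmetically: it takes the slightly smaller threshold $B_x = 1+\Delta+\cdots+\Delta^{x-1}$, starts the walk at the root (which already lies in $L_{i+1}$, so your detour through an arbitrary $v\in L_{i+1}$ and the extra root-to-$v$ absorption into $V_1$ is unnecessary), and phrases the special positions directly as the $L_i^\ast$-nodes met along the path rather than via your ``branching position'' notion---but as you implicitly observe, at a node of $L_i\setminus L_i^\ast$ all $L_i^\ast$-descendants lie below the unique $L_i$-child, so the two notions coincide on the greedy path.
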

\begin{proof}
We define $B_1 = 1$ and $B_{j} = 1 + \Delta \cdot B_{j-1}$ for $j > 1$. Remember that if $T$ is not $(i,C)$-limited, then $L_{i+1} \neq \emptyset$, so $|L_{i}^\ast| \geq C$.
We claim that for $C = B_{x}$, in any  rooted tree $T$ with $|L_{i}^\ast| \geq C$, there is a path $P=(u_1, u_2, \ldots, u_k)$ meeting the following conditions.
 \begin{itemize}
     \item The first node $v_1$ of the path is the root $r$ of $T$.
     \item The path contains $x$ nodes in $L_{i}^\ast$. Denote these nodes as $v_1, v_2, \ldots, v_x$.
     \item All nodes in the path belong to $L_{i-1}$.
 \end{itemize}
 
Such a path $P$ can be constructed as follows. We start with $u_1 = r$. Once $(u_1, u_2, \ldots, u_j)$ has been constructed, if the current path still does not contain  $x$ nodes in $L_{i}^\ast$, then we extend the path by picking $u_{j+1}$ as a child of $u_j$ that maximizes the number of nodes in $L_{i}^\ast$ in the subtree rooted at $u_{j+1}$. 

To analyze this construction, let $\Phi_j > 0$ be the number of nodes in $L_{i}^\ast$ in the subtree rooted at $u_j$. There are two cases.
\begin{itemize}
    \item If $u_j \in L_{i} \setminus L_i^\ast$, then $u_j$ has exactly one child  $u_{j+1}$ in $L_{i}$ with $\Phi_{j+1} = \Phi_j$.
    \item Otherwise, $u_j \in L_i^\ast$. Then $u_j$ has at most $\Delta$ children in $L_{i}$, so $\Phi_{j+1} \geq (\Phi_j - 1)/\Delta$.
\end{itemize}
Our choice of $C = B_{x}$ ensures that this process leads to a path containing $x$ nodes in $L_{i}^\ast$.

Now, consider such a path $P$. If $i = 0$, then $P$ already contains $T_{1,x}$ as a rooted minor, as $P$ contains at least $x$ nodes and $T_{1,x}$ is an $x$-node path.  

From now on, we assume $i > 1$. Since  $v_j \in L_{i}^\ast$, it has a child $w_j \in L_{i}$ outside of $P$. By inductive hypothesis, the subtree $T_j$ rooted at $w_j$ has a rooted minor isomorphic to $T_{i, C}$. For each $1 \leq j \leq x$, we consider the partition of the nodes in $T_j$ witnessing the fact that it has $T_{i, C}$ as a rooted minor. The rest of the nodes in $T$ can be partitioned into $x$ connected clusters $S_1, S_2, \ldots, S_x$ so that $v_j \in S_j$. We must have $r \in S_1$, and the overall partition of the nodes of $T$ shows that $T_{i+1, C}$ is a rooted minor of $T$, so we may set $C_{x, \Delta} = C = B_x$.
\end{proof}

For a graph class  $\GG$, we say that   $\GG$ is \emph{$(k,C)$-limited} if all trees 
$T=(V,E)$ in $\GG$ are $(k,C)$-limited for all choices of the root $r \in V$. It is still allowed that $\GG$ contains graphs that are not trees.

\begin{lemma}\label{lem-aux-limited}
Let $\GG$ be any minor-closed graph class such that for each positive integer $\Delta$ there is a constant $C > 0$ such that $\GG^\Delta$ is $(k,C)$-limited. Then $\GG$ is $k$-growth-bounded.
\end{lemma}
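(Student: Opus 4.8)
The plan is to reduce bounding the $d$-radius neighborhood of an \emph{arbitrary} bounded-degree graph $G \in \GG^\Delta$ to the case of trees, and then invoke \cref{lem-neighborhood-size}. First I would fix $\Delta \geq 3$ and take the constant $C > 0$ guaranteed by the hypothesis so that $\GG^\Delta$ is $(k,C)$-limited. Given $G = (V,E) \in \GG^\Delta$, a root $r \in V$, and a radius $d \geq 1$, run a breadth-first search from $r$ to obtain a BFS tree $T$ spanning the connected component of $r$. This $T$ is a subgraph of $G$, hence a minor of $G$, hence $T \in \GG$; and $T$ has maximum degree at most $\Delta$, so $T \in \GG^\Delta$. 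Crucially, BFS preserves distances from the root: $\dist_T(v,r) = \dist_G(v,r)$ for every $v$ in the component, so $N_G^d(r) = N_T^d(r)$ (restricted to that component — nodes in other components are irrelevant). Therefore it suffices to bound $|N_T^d(r)|$.

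Next, since $T \in \GG^\Delta$ and $\GG^\Delta$ is $(k,C)$-limited, by definition $T$ rooted at $r$ is $(k,C)$-limited. Now apply \cref{lem-neighborhood-size} with the integers $C$, $k$, and $\Delta$: it yields a constant $K_{C,k,\Delta}$, depending only on $C$, $k$, and $\Delta$ — and hence, since $C$ depends only on $\Delta$ and $\GG$, a constant depending only on $\Delta$ and $\GG$ — such that $|N_T^d(r)| \leq K_{C,k,\Delta} \cdot d^k$. Setting $C_\Delta := K_{C,k,\Delta}$ gives $|\{v \in V \mid \dist(v,r) \leq d\}| = |N_G^d(r)| = |N_T^d(r)| \leq C_\Delta \cdot d^k$ for every $d \geq 1$, every $G \in \GG^\Delta$, and every $r \in V$, which is exactly the statement that $\GG$ is $k$-growth-bounded.

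I do not expect any serious obstacle here; this lemma is essentially a packaging step. The one point that requires a little care is the observation that BFS from $r$ produces a spanning tree of the component that is isometric from $r$ — i.e. that $\dist_T(v,r) = \dist_G(v,r)$ — which is standard but should be stated explicitly, together with the remark that $T$, being a subgraph of $G$, remains in the minor-closed class $\GG$ and retains the degree bound $\Delta$. A second, even more minor, point is that the hypothesis lets $C$ vary with $\Delta$, so one must be a bit careful to state the final constant $C_\Delta$ as a function of $\Delta$ (and $\GG$) only, but this is automatic once $\Delta$ is fixed at the outset.
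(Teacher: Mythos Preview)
Your proposal is correct and follows essentially the same approach as the paper: both reduce to a BFS tree rooted at $r$, use minor-closure to conclude $T \in \GG^\Delta$, apply the $(k,C)$-limited hypothesis to $T$, and then invoke \cref{lem-neighborhood-size} to bound $|N^d(r)|$ by $K_{C,k,\Delta}\cdot d^k$. Your write-up is in fact slightly more explicit than the paper's about why $\dist_T(v,r)=\dist_G(v,r)$ and how the constants depend on $\Delta$, but there is no substantive difference in the argument.
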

\begin{proof}
Suppose $\GG^\Delta$ is $(k,C)$-limited.
 For each $G=(V,E) \in \GG$ with degree at most $\Delta$, we pick any node $r \in V$, and let $T$ be the tree corresponding to any BFS starting from $r$. 
 Because $\GG$ is minor-closed, we have $T \in \GG$, so  $T$ rooted at $r$ is $(k,C)$-limited.
 Applying  \cref{lem-neighborhood-size} to the rooted tree $T$ with the parameters $C$,$k$, and $\Delta$, we conclude that the size of the $d$-radius neighborhood of $r$ in $G$ is at most $K_{C,k,\Delta} \cdot d^k$, so $\GG$ is $k$-growth-bounded with $C_{\Delta} = K_{C,k,\Delta}$.
\end{proof}

\begin{lemma}\label{lem:alt-if}
If $\GG \in \AAA_0 \cup \AAA_1 \cup \cdots \cup \AAA_k$, then for each integer $\Delta \geq 3$, there exists a constant $C > 0$ such that $\GG^\Delta$ is  $(k,C)$-limited.
\end{lemma}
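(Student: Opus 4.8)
The plan is to obtain the statement as a short consequence of the rooted-minor construction of \cref{lem-minor-aux}, using only that $\GG$ excludes one of the trees $T_{i+1,s}$ together with the fact that $\GG$ is minor-closed.

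First I would pin down which tree $\GG$ excludes. Since $\GG \in \AAA_0 \cup \AAA_1 \cup \cdots \cup \AAA_k$, there is an index $i$ with $0 \le i \le k$ and $\GG \in \AAA_i$, so condition (C2) of \cref{def-pw} supplies a positive integer $s$ with $T_{i+1,s} \notin \GG$. Put $x := s$. Because $i+1 \le k+1$ and $x \le x$, the tree $T_{i+1,x}$ is a rooted minor, hence a minor, of $T_{k+1,x}$ (as recorded right after the definition of rooted minors). Consequently, were $T_{k+1,x}$ in $\GG$, minor-closedness would force its minor $T_{i+1,x}$ to lie in $\GG$ as well; so in fact $T_{k+1,x} \notin \GG$.

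Now fix $\Delta \ge 3$ and take $C := C_{x,\Delta}$ to be the constant furnished by \cref{lem-minor-aux} for the parameters $x$ and $\Delta$; I claim this $C$ works. Suppose not: then some tree $T \in \GG^\Delta$ and some choice of root $r \in V(T)$ yield a rooted tree $(T,r)$ that is not $(k,C)$-limited. Applying \cref{lem-minor-aux} with its index instantiated as $k$, the rooted tree $(T,r)$ contains $T_{k+1,x}$ as a rooted minor, and therefore $T_{k+1,x}$ is a minor of the graph $T$. Since $T \in \GG$ and $\GG$ is minor-closed, this gives $T_{k+1,x} \in \GG$, contradicting the previous paragraph. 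Hence every tree in $\GG^\Delta$, under every choice of root, is $(k,C)$-limited, which is exactly the assertion that $\GG^\Delta$ is $(k,C)$-limited.

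I do not expect a genuine obstacle here, since \cref{lem-minor-aux} does the heavy lifting; the only points needing attention are the index bookkeeping in the second step --- checking that excluding $T_{i+1,s}$ for the specific $i \le k$ with $\GG \in \AAA_i$ already excludes $T_{k+1,x}$, so that the final contradiction is legitimate --- and the routine observation that a rooted minor of a graph is in particular a minor of that graph, which is what licenses the appeal to minor-closedness of $\GG$.
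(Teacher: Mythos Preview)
Your proof is correct and follows essentially the same approach as the paper: both arguments obtain a contradiction by applying \cref{lem-minor-aux} to a tree in $\GG^\Delta$ that is not $(k,C_{x,\Delta})$-limited, thereby producing $T_{k+1,x}$ as a minor inside $\GG$. The only organizational difference is that you fix upfront the single witness $x=s$ coming from (C2) for the specific $i\le k$ with $\GG\in\AAA_i$ and then exhibit the explicit constant $C=C_{x,\Delta}$, whereas the paper argues by contradiction that no $C$ works, deduces $T_{k+1,x}\in\GG$ for \emph{all} $x$, and then invokes (C2); your version is slightly more direct but the content is the same.
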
 
\begin{proof}
Suppose there exists $\Delta \geq 3$ such that $\GG^\Delta$ is not $(k,C)$-limited  for all $C <\infty$. We apply \cref{lem-minor-aux} to any tree $T \in \GG^{\Delta}$ that is not $(k,C)$-limited for  $C = C_{x, \Delta}$ for some choice of the root $r$. Then we obtain that $T$ contains $T_{k+1,x}$ as a minor, so $T_{k+1,x} \in \GG$. Since this holds for all $x$, $\GG$ contains $T_{k+1,x}$ for all $x$, implying that $\GG \notin \AAA_0 \cup \AAA_1 \cup \cdots \cup \AAA_k$, which is a contradiction to the assumption that $\GG \in \AAA_0 \cup \AAA_1 \cup \cdots \cup \AAA_k$.  Hence for each integer $\Delta \geq 3$, there exists a constant $C > 0$ such that $\GG^\Delta$ is  $(k,C)$-limited. 
\end{proof}

Now we are ready to prove \cref{thm:alt-definition}.

\begin{proof}[Proof of \cref{thm:alt-definition}]
If $\GG \notin \bigcup_{0 \leq i < \infty} \AAA_i$, then  \cref{lem-pathwidth-partition,thm-excluding-forest} imply that  $\GG$  contains all trees. By considering complete trees, we infer that  $\GG$  is not $k$-growth-bounded for any $k < \infty$.

Now suppose $\GG \in \AAA_k$.
Since $\GG \in  \AAA_0 \cup \AAA_1 \cup \cdots \cup \AAA_k$,  \cref{lem:alt-if,lem-aux-limited} implies that $\GG$ is $k$-growth-bounded. 
Since $\GG \notin \AAA_0 \cup \AAA_1 \cup \cdots \cup \AAA_{k-1}$,  \cref{lem:alt-onlyif} implies that $\GG$ is not $s$-growth-bounded for any $s < k$. Hence   $k$ is the smallest number such that $\GG$ is $k$-growth-bounded. 
\end{proof}

\section{The Complexity Landscape}\label{sect:complexity-landscape}

In this section, we consider the complexity of $\LCL$ problems for the graph class $\GG^\Delta$ with $\GG \in \AAA_k$, for some $k \geq 0$.
We first consider the case of $\GG \in \AAA_0$.

\thmcomplexitya*
\begin{proof}
By \cref{thm:alt-definition}, there is a number $C_\Delta$ such that the size of $d$-radius neighborhood of any $r \in V$ in any $G=(V,E) \in \GG^\Delta$  is at most $C_\Delta$. As the statement holds for all $d$, we infer that each graph in $\GG^\Delta$ has at most $C_\Delta$ nodes. Therefore, if an $\LCL$ problem is solvable in $G \in \GG^\Delta$, then it can be solved in $C_\Delta = O(1)$ rounds.
\end{proof}

For the rest of the section, we consider $k > 0$.


\thmcomplexityb*
\begin{proof}
The $\LCL$ problem $\PP_s$ considered by Chang and Pettie~\cite{ChangP17} can be solved in $O(n^{1/s})$ rounds on \emph{all graphs}, including the ones in $\GG$. It was shown in~\cite{ChangP17} that $\PP_s$  requires $\Omega(n^{1/s})$ rounds to solve in the graphs $T_{s,x}$, for all $x$. Since any $\GG \in \AAA_k$ contains $T_{s,x}$ for all $x$ by the definition of $\AAA_k$, the complexity of $\PP_s$ in any $\GG \in \AAA_k$ is $\Omega(n^{1/s})$. Combining the upper and lower bounds, we conclude that the complexity of $\PP_s$ is $\Theta(n^{1/s})$ in $\GG$.
\end{proof}


\thmcomplexityc*
\begin{proof}
This lemma follows from a modification of the proof of the existence of the deterministic $\omega(\log^\ast n)$ -- $o(\log n)$ complexity gap on general graphs by Chang, Kopelowitz, and Pettie~\cite{ChangKP16}. 

We briefly review their proof and describe the needed modification.
Fix any $\GG \in \AAA_k$.
Consider any deterministic algorithm $\Algo$ solving an $\LCL$ problem $\PP$ in $T(n) = o(n^{1/k})$ rounds for all graphs in $\GG^\Delta$. The goal is to design an algorithm $\Algo'$ that also solves  $\PP$  for all graphs in $\GG^\Delta$ and costs only $O(\log^\ast n)$ rounds.

Suppose the locality radius of $\PP$ is $r$. If we can assign distinct $O(\log \tilde{n})$-bit identifiers to each node such that the $(T(\tilde{n})+r)$-radius neighborhood of each node has size at most $\tilde{n}$ and does not contain repeated identifiers, then $\PP$ can be solved in $T(\tilde{n})$ rounds by running $\Algo$ with these  $O(\log \tilde{n})$-bit identifiers and pretending that the underlying network has $\tilde{n}$ nodes.
To see that this strategy works, we use the property that $\GG$ is minor-closed. We show that the output label of each node $v$ resulting from the above approach is correct. Given any node $v$ in $G$, consider a subgraph $\tilde{G}$ of $G$ that contains exactly $\tilde{n}$ nodes and contains the entire $(T(\tilde{n})+r)$-radius neighborhood of $v$. We assign distinct  $O(\log \tilde{n})$-bit identifiers to the nodes in $\tilde{G}$ in such a way that the $O(\log \tilde{n})$-bit identifiers in the $(T(\tilde{n})+r)$-radius neighborhood of $v$ are chosen to be the same in $G$ and $\tilde{G}$.
Because $\GG$ is minor-closed, we have $\tilde{G} \in \GG$.
The output labels resulting from running $\Algo$ in $G$ and $\tilde{G}$ must be the same in the $r$-radius neighborhood of $v$, so the correctness of $\Algo$ in $\tilde{G}$ implies that the output label of $v$ must be locally correct in $G$.

We show that we can choose $\tilde{n} = O(1)$ to satisfy the property that the $(T(\tilde{n})+r)$-radius neighborhood of each node has size at most $\tilde{n}$.
Let $d= T(\tilde{n})+r$.
Then $d = o(\tilde{n}^{1/k})$ because $r = O(1)$ and $T(\tilde{n}) = o(\tilde{n}^{1/k})$.
By \cref{thm:alt-definition}, the size of the $d$-radius neighborhood of any node in any graph $G \in \GG^\Delta$ is $O(d^k)$. Thus,  the size of the $(T(\tilde{n})+r)$-radius neighborhood of each node is upper bounded by $o(\tilde{n})$, meaning that by choosing $\tilde{n}$ to be a sufficiently large constant, the size of the  $(T(\tilde{n})+r)$-radius neighborhood of each node is at most $\tilde{n}$.

As $\tilde{n} = O(1)$ is a constant, the needed $O(\log \tilde{n})$-bit  identifiers can be computed in $O(\log^\ast n)$ rounds deterministically using the coloring algorithm of~\cite{FraigniaudHK16}. Since $T(\tilde{n}) = O(1)$ is also a constant, this approach yields a new algorithm $\Algo'$ that solves $\PP$ in just $O(\log^\ast n)$ rounds.
\end{proof}

\thmcomplexityd*
\begin{proof}
This lemma is proved by the Ramsey-theoretic technique of Naor and Stockmeyer~\cite{NaorS95}, as discussed in the paper of Chang and Pettie~\cite{ChangP17}. Although this proof considers only deterministic algorithms, the result applies to randomized algorithms as well, because it is known that randomness does not help for algorithms with round complexity $2^{O(\log^\ast n)}$, as shown by Chang, Kopelowitz, and Pettie~\cite{ChangKP16}.

Fix a graph class $\GG^\Delta$ with  $\GG \in \AAA_k$.
Consider a $T(n)$-round deterministic algorithm $\Algo$ solving an $\LCL$ problem $\PP$ for a graph class $\GG^\Delta$. Fix a network size $n$. Consider the following parameters.
\begin{itemize}
    \item $\tau = T(n)$ is the round complexity of  $\Algo$ on $n$-node graphs in $\GG^\Delta$.
    \item $p$ is the maximum number of nodes in a $\tau$-radius neighborhood of a node in a graph in $\GG^\Delta$. According to \cref{thm:alt-definition}, $p = O(\tau^k)$.
    \item $m$ is the maximum number of nodes in a $(\tau+r)$-radius neighborhood of a node in a graph in $\GG^\Delta$.
    \item $c$ is the number of distinct functions mapping each possible $\tau$-radius neighborhood of a graph in $\GG^\Delta$,  whose nodes are equipped with distinct labels drawn from some set $S$ with size $p$, to an output label of $\PP$.
\end{itemize}

Let $R(p,m,c)$ be the minimum number of nodes guaranteeing that any edge coloring of a complete $p$-uniform hypergraph with $c$ colors contains a monochromatic clique of size $m$. It is known that $\log^\ast R(p,m,c) \leq p + \log^\ast m + \log^\ast c + O(1)$. 
Suppose the space of unique identifiers has size $n^C$. As long as $n^C \geq R(p,m,c)$, it is possible to transform $\Algo$ into an $O(1)$-round deterministic algorithm solving $\Algo$. See~\cite{ChangP17,NaorS95} for details. 

Since the part $\log^\ast m + \log^\ast c + O(1)$ is negligible comparing with $p$,  if $p \ll \log^\ast n$, then the above transformation is possible. Since $p = O(\tau^k)$, the inequality $p \ll \log^\ast n$ holds whenever the round complexity of $\Algo$ is $o\left( (\log^\ast n)^{1/k}\right)$. Hence  the complexity of $\PP$ is not within $\omega(1)$ and $o\left((\log^\ast n)^{1/k}\right)$ on $\GG^\Delta$.
\end{proof}

\section{The Computational Complexity of the Classification} 
\label{sect:decide}

In this section, we prove \cref{lem-Ak-decide} by showing a polynomial-time algorithm that decides whether $\GG \in \AAA_i$ for any given minor-closed graph class $\GG$, where $\GG$ is represented by a finite list of excluded minors $H_1, H_2, \ldots, H_k$.

\begin{lemma}\label{lem-path-like-decide-aux}
Let $\GG$ be the class of $\{H_1, H_2, \ldots, H_k\}$-minor-free graphs. Then $\GG \in \AAA_0 \cup \AAA_1 \cup \cdots \cup \AAA_i$ if and only if there exist an integer $s \geq 3$ and an index $j \in [k]$  such that $H_j$ is a minor of $T_{i+1,s}$.
\end{lemma}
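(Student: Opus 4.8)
\textbf{Proof plan for \cref{lem-path-like-decide-aux}.}

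The plan is to unfold the definition of $\AAA_0 \cup \cdots \cup \AAA_i$ in terms of which trees $T_{j,s}$ belong to $\GG$, and then translate membership of such trees into a minor relation with the forbidden minors $H_1, \ldots, H_k$. First I would recall from \cref{lem-pathwidth-partition} that the sets $\AAA_0, \AAA_1, \ldots$ are disjoint and partition $\AAA$, and that $\GG \in \AAA_0 \cup \cdots \cup \AAA_i$ is, by \cref{def-pw}, equivalent to the single condition ``there exists a positive integer $s$ such that $T_{i+1,s} \notin \GG$'' — intuitively, $\GG$ lies in one of the first $i+1$ buckets iff it fails to contain all of the $(i+1)$-st level trees. (The case $i=0$ is included since (C1) is then vacuous.) So the whole statement reduces to showing: there exists $s$ with $T_{i+1,s} \notin \GG$ if and only if there exist $s \geq 3$ and $j \in [k]$ with $H_j$ a minor of $T_{i+1,s}$.

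The key observation I would use is the monotonicity of the family $\{T_{i+1,s}\}_s$ under the minor relation: since $T_{i+1,s'}$ is a minor of $T_{i+1,s}$ whenever $s' \leq s$ (stated in the paper right after \cref{def-trees}, and also implicit in the rooted-minor remark in \cref{sect:bounded-growth}), the sets $\{H : H \preceq_{\text{minor}} T_{i+1,s}\}$ are nested increasing in $s$. Now for the forward direction: if $T_{i+1,s} \notin \GG$ for some $s$, then by definition of $\GG$ as the class of $\{H_1,\ldots,H_k\}$-minor-free graphs, some $H_j$ is a minor of $T_{i+1,s}$; and we may take $s \geq 3$ because enlarging $s$ only keeps $T_{i+1,s} \notin \GG$ (a graph containing an excluded minor still does after we pass to a supergraph in the family, since $T_{i+1,s}$ is a minor of $T_{i+1,s'}$ for $s' \geq s$), so replacing $s$ by $\max\{s,3\}$ works. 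For the converse: if $H_j$ is a minor of $T_{i+1,s}$ for some $s \geq 3$ and some $j$, then $T_{i+1,s}$ contains a forbidden minor, so $T_{i+1,s} \notin \GG$, which is exactly the condition we need.

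I expect the main subtlety — not really an obstacle, but the point that needs care — is making sure the ``$s \geq 3$'' restriction in the statement does not lose generality: one must check both that any witness $s$ can be pushed up to be at least $3$ (using minor-monotonicity of the tree family in $s$, as above) and that this is what the eventual polynomial-time algorithm needs, since it will enumerate only finitely many candidate trees $T_{i+1,s}$ for bounded $s$ and test each $H_j$ against them. The lower bound $s \geq 3$ presumably also matters downstream because \cref{lem-pathwidth-calculation} guarantees $\mathrm{pw}(T_{i+1,s}) = i+1$ exactly when $s \geq 3$, but for the proof of this lemma itself the only thing I need is the elementary minor-monotonicity in $s$ together with the definitions of $\AAA_j$ and of a minor-free graph class. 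The argument is therefore a short chain of equivalences once the right reformulation of $\AAA_0 \cup \cdots \cup \AAA_i$ is in hand.
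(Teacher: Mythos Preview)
Your proposal is correct and follows essentially the same approach as the paper: both directions hinge on the minor-monotonicity of $T_{j,s}$ in $j$ and in $s$, together with the tautology that $T_{i+1,s}\notin\GG$ iff some forbidden minor $H_j$ sits inside $T_{i+1,s}$. Your packaging via the single equivalence ``$\GG\in\AAA_0\cup\cdots\cup\AAA_i \Leftrightarrow \exists\,s:\ T_{i+1,s}\notin\GG$'' is a clean way to organize the argument; the paper argues the two directions separately and, for the implication $T_{i+1,s}\notin\GG \Rightarrow \GG\in\AAA$, explicitly invokes the excluding-forest theorem (\cref{thm-excluding-forest}), whereas your route can bypass this by taking the minimal $x$ with $T_{x+1,s'}\notin\GG$ and checking (C1)--(C2) directly.
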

\begin{proof}
Suppose  $H_j$ is a minor of $T_{i+1,s}$ for some $s \geq 3$. Then  $T_{i+1,s} \notin \GG$. For each $x \geq i+1$, $T_{i+1,s}$ is a minor of $T_{x,s}$, so $T_{x,s} \notin \GG$, as $\GG$ is closed under minors. Since a necessary condition for $\GG \in \AAA_x$ is $T_{x,s} \in \GG$, we must have $\GG \notin \AAA_x$. Since $H_j$ is a tree, $\GG$ has bounded pathwidth by \cref{thm-excluding-forest}. Therefore, $\GG \in \AAA \setminus(\AAA_{i+1} \cup \AAA_{i+2} \cup \cdots) = \AAA_0 \cup \AAA_1 \cup \cdots \cup \AAA_i$.

For the other direction, suppose $\GG \in \AAA_0 \cup \AAA_1 \cup \cdots \cup \AAA_i$. Then  $\GG \in \AAA_x$ for some $x \in [0, i]$. By the definition of $\AAA_x$, there is an index $s \geq 3$ such that $\GG$ does not contain $T_{x+1,s}$. Since  $T_{x+1,s}$ is a minor of  $T_{i+1,s}$, $\GG$ also does not contain $T_{i+1,s}$, so there is some graph $H_j$ in the list of excluded minors $\{H_1, H_2, \ldots, H_k\}$ such that $H_j$ is a minor of $T_{i+1,s}$.
\end{proof}

\begin{lemma}\label{lem-Ak-decide-aux}
For any rooted tree $H=(V,E)$, the following two statements are equivalent.
\begin{itemize}
    \item There exists a positive integer $s$  such that $H$ is a rooted minor of $T_{j,s}$.
    \item There is a  choice of  a node $u$ in $H$ satisfying the following requirement. Let $P$ be the unique path connecting $u$ and the root $r$ of $H$. Let $S$ be the set of nodes in $V \setminus P$ whose parent is in $P$. For each $v \in S$, there exists a positive integer $s'$  such that the subtree rooted at $v$ is a rooted minor of $T_{j-1,s'}$.
\end{itemize}
\end{lemma}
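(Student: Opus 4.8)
The plan is to prove the two directions separately, exploiting the recursive ``$k$-level hierarchy of $s$-node paths'' structure of $T_{j,s}$ together with the observation already noted in the excerpt that $T_{i,s'}$ is a rooted minor of $T_{i',s}$ whenever $i \le i'$ and $s' \le s$. For the direction from the second statement to the first, suppose we are given the node $u$, the root-to-$u$ path $P$ in $H$, and the set $S$ of off-path children of nodes of $P$, together with, for each $v \in S$, an integer $s_v$ witnessing that the subtree $H_v$ rooted at $v$ is a rooted minor of $T_{j-1,s_v}$. First I would let $s = \max\bigl(|V(H)|,\ \max_{v \in S} s_v,\ 3\bigr)$, so that simultaneously $P$ embeds as a sub-path into the top-level $s$-node path $(v_1,\dots,v_s)$ of $T_{j,s}$ (since $|P| \le |V(H)| \le s$), and each $H_v$ is a rooted minor of $T_{j-1,s}$ (by monotonicity in the second parameter). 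Then I would build the rooted minor model of $H$ in $T_{j,s}$: map the root $r$ of $H$ to $v_1$ and walk $P$ along a prefix of $(v_1,\dots,v_s)$, using distinct top-level path-nodes for consecutive nodes of $P$; for each $v \in S$, if its parent on $P$ is mapped to $v_\ell$, place the model of $H_v$ inside the copy $T_\ell$ of $T_{j-1,s}$ hanging off $v_\ell$ (using its root-containment property to attach $v$'s cluster to $v_\ell$'s cluster). Since distinct nodes of $P$ get distinct $v_\ell$, and distinct off-path subtrees get distinct copies $T_\ell$, the clusters are disjoint and connected, and all required adjacencies are present, so $H$ is a rooted minor of $T_{j,s}$.

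For the converse, suppose $H$ is a rooted minor of $T_{j,s}$ via a branch decomposition $\CC = \{V_x : x \in V(H)\}$ with the cluster $V_r$ of $H$'s root containing the root $v_1$ of $T_{j,s}$. The key structural fact I would extract is: consider the subset of nodes of $H$ whose cluster meets the top-level path $(v_1,\dots,v_s)$ of $T_{j,s}$; I claim this set induces a path $P$ in $H$ starting at $r$. This is because removing the top-level path from $T_{j,s}$ disconnects it into the copies $T_1,\dots,T_s$ of $T_{j-1,s}$, and a standard argument (clusters are connected, $H$ is a tree, adjacencies in $H$ require adjacent clusters) forces the ``top-level'' clusters to form a connected subtree of $H$ containing $r$ in which no node has two children that are also top-level — i.e.\ a path from $r$. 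I would take $u$ to be the last node of this path $P$. Then any node $w \in S$ (off-path child of some $p \in P$) has its cluster $V_w$ entirely inside one copy $T_\ell$ of $T_{j-1,s}$ (the one adjacent to $p$'s top-level node); moreover the whole subtree $H_w$ rooted at $w$ lies in $T_\ell$ as well, because $H_w$ is separated from $r$ in $H$ by the edge $pw$ and any path in $T_{j,s}$ from $T_\ell$ out must pass through $v_\ell$ whose cluster is $V_p \ne V_w$'s ancestor. Viewing $T_\ell$ as a rooted tree with root equal to the node adjacent to $v_\ell$ (the root of that copy of $T_{j-1,s}$), the restriction of $\CC$ to $H_w$ gives a rooted-minor model of $H_w$ in $T_{j-1,s}$ — one must check the root of $H_w$, namely $w$, has its cluster adjacent to $p$'s top-level node and hence can be taken to contain the root of the copy $T_\ell$; a small amount of care (possibly merging $V_w$ with the part of the model living at the copy's root, or noting $w$'s cluster is already the one touching $v_\ell$) handles this. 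This gives the desired $s' = s$ for each $v \in S$.

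I expect the main obstacle to be the converse direction, specifically the clean bookkeeping around the root-containment condition: making precise why the set of ``top-level'' clusters forms exactly a root-started path in $H$ (rather than, say, a subtree), and why the root $w$ of an off-path subtree can always be arranged to own the root of the relevant copy of $T_{j-1,s}$. Both are intuitively clear from the separator structure of $T_{j,s}$, but writing them without hand-waving requires carefully invoking connectivity of clusters and the tree structure of $H$ — essentially a separator/Menger-style argument: if a cluster touched two different copies $T_\ell, T_{\ell'}$, its connected subgraph would have to pass through the top-level path, and then tracing the $H$-tree structure shows the ``branching'' into two children both touching the top level is impossible without violating that $H$ is a tree and clusters are disjoint. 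I would isolate this as a small internal claim and prove it first, after which both directions follow by the embedding/restriction constructions sketched above. The forward direction is essentially a routine explicit construction once $s$ is chosen large enough, so it should not cause trouble.
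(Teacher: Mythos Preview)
Your forward direction (from ``$H$ is a rooted minor of $T_{j,s}$'' to the path-decomposition condition) matches the paper's approach closely, and your identification of the key internal claim---that the $H$-nodes whose clusters meet the top-level path $(v_1,\dots,v_s)$ form a path in $H$ starting at $r$---is exactly right. The paper states this claim and uses it the same way; your sketch of why it holds (via connectivity of clusters, the separator role of the top-level path, and the fact that a minor of a tree inside a tree forces cluster-adjacency to coincide with $H$-adjacency) is sound. The root-containment issue for each off-path subtree is handled in the paper by extending the clustering $\CC_v$ to all of the relevant copy $T_i$ (letting each uncovered node of $T_i$ join the nearest cluster in $\CC_v$); your ``small amount of care'' remark is pointing at the same fix.

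There is, however, a genuine gap in your reverse direction. You map each node of the root-to-$u$ path $P$ to a \emph{single} top-level vertex $v_\ell$, and then place every off-path child of that $P$-node into the single copy $T_\ell$ hanging off $v_\ell$. Your claim that ``distinct off-path subtrees get distinct copies $T_\ell$'' is false: if some $p \in P$ has two (or more) children $w_1, w_2 \in S$, your construction sends both $H_{w_1}$ and $H_{w_2}$ into the \emph{same} copy $T_\ell$. But the rooted-minor models you have for $H_{w_1}$ and $H_{w_2}$ each require the root of $T_{j-1,s}$ to lie in the cluster of $w_1$ (respectively $w_2$), which is impossible simultaneously; and the hypothesis gives you no way to embed both subtrees disjointly into one copy while keeping both adjacent to $p$'s cluster. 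The paper avoids this by allocating to each node $u_i \in P$ a contiguous \emph{subpath} $P_i$ of the top-level path whose length is at least the number of children of $u_i$ in $S$, so that each off-path child $w$ gets its own private index $l_w$ and its own copy $T_{l_w}$; the cluster for $u_i$ is then all of $P_i$ together with the unused copies $T_l$ for $v_l \in P_i$. This is why the paper chooses $s \ge |S|$ rather than $s \ge |P|$. Your choice $s \ge |V(H)|$ is large enough numerically, but the one-vertex-per-$P$-node mapping must be replaced by this subpath allocation for the construction to go through.
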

\begin{proof}
Suppose $H=(V,E)$ is a rooted minor of $T_{j,s}$. Consider any clustering $\CC$ of the node set of $T_{j,s}$ witnessing the fact that $T_{j,s}$ contains $H$ as a rooted minor.
Consider the path $(v_1, v_2, \ldots, v_s)$ and the $s$ rooted trees $T_1, T_2, \ldots, T_s$ in the definition of $T_{j,s}$. Let $\CC^\ast$ be the set of clusters in $\CC$ containing a node in $\{v_1, v_2, \ldots, v_s\}$. Then the set of nodes in $H$ corresponding to the clusters in $\CC^\ast$ form a path $P$ connecting the root $r$ of $H$ to some other node $u$.

Let $S$ be the set of nodes in $V \setminus P$ whose parent is in $P$. We show that the subtree rooted at each $v \in S$ is a rooted minor of $T_{j-1,s}$.
For each node $v \in S$, let $\CC_v$ be the set of clusters in $\CC \setminus \CC^\ast$ corresponding to nodes in the subtree of $H$ rooted at $v$. The union of the clusters in $\CC_v$ must be a connected set $U$ of nodes that are confined to one $T_i$ of the rooted trees $T_1, T_2, \ldots, T_s$ in the definition of $T_{j,s}$. Let $T'$ be the rooted subtree of $T_i = T_{j-1,s}$ induced by  $U$. Observe that the cluster in $\CC_v$ corresponding to $v$ contains the root of $T'$. We extend the clustering $\CC_v$ to cover all nodes in $T_i = T_{j-1,s}$ as follows. For each node $w$ in $T_i$ that is not covered by the clustering $\CC_v$, let $w$ join the cluster in $\CC_v$ that contains the unique node $w'$ in $T'$ minimizing $\dist(w,w')$ in $T_i$. The resulting clustering shows that the subtree rooted at $v$ is a rooted minor of $T_i = T_{j-1,s}$.

For the other direction, suppose there is a choice of a node $u$ in $H$ satisfying the following condition. Let $P = (r = u_1, u_2, \ldots, u_x = u)$ be the unique path connecting the root $r$ and $u$ in $H$, where $x$ is the number of nodes in $P$. Let $S$ be the set of nodes in $V \setminus P$ whose parent is in $P$. For each node $v \in S$, there exists a positive integer $s' = s_v$  such that the subtree rooted at $v$ is a rooted minor of $T_{j-1,s'}$.

We choose $s$ to be the maximum of $|S|$ and $s_v$ among all $v \in S$.
We show that $H$ is a rooted minor of $T_{j,s}$ by demonstrating a clustering of the nodes of $T_{j,s}$ meeting all the needed requirements in the definition of rooted minor. Consider the path $(v_1, v_2, \ldots, v_s)$ and the $s$ rooted trees $T_1, T_2, \ldots, T_s$ in the definition of $T_{j,s}$. We partition the path $(v_1, v_2, \ldots, v_s)$ into $x$ subpaths $(v_1, v_2, \ldots, v_s) = P_1 \circ P_2 \circ \cdots \circ P_x$ such that the number of nodes in $P_i$ is at least the number of children of $u_i$ in $S$. This partition exists due to our choice of $s$. 

A clustering of the nodes of $T_{j,s}$ is constructed as follows.
For each child $w$ of $u_i$ such that $w \in S$, we associate $w$ with a distinct index $l_w$ such that $v_{l_w} \in P_i$. Recall that the subtree rooted at $w \in S$ is a rooted minor of $T_{j-1,s_w}$. Since $s \geq s_w$, $T_{j-1,s_w}$ is a rooted minor of $T_{j-1,s}$, so the subtree rooted at $w \in S$ is also a rooted minor of $T_{l_w} = T_{j-1,s}$. We find a clustering of the nodes in $T_{l_w}$ witnessing that the subtree rooted at  $w$ is a rooted minor of $T_{l_w}$.
For each $u_i \in P$, its corresponding cluster $C$ in $T_{j,s}$ is chosen as follows. Start with the subpath $C = P_i$. For each node $v_{l} \in P_i$ that is not associated with any node in $S$, we add all nodes in the subtree $T_l$ to $C$. 

The clustering covers all nodes in $T_{j,s}$, the root $v_1$ of $T_{j,s}$ belongs to the cluster corresponding to the root $u_1 = r$ of $H$, and each edge in $H$ corresponds to a pair of adjacent clusters, so $H$ is a rooted minor of $T_{j,s}$.
\end{proof}

Now we are ready to prove \cref{lem-Ak-decide}.

\thmdecide*
\begin{proof}
In view of \cref{lem-path-like-decide-aux},  $\GG \in \AAA_i$  if and only if the following two conditions are met.
\begin{itemize}
    \item If $i > 0$, then for all integers $s \geq 1$, each graph $H_j \in\{H_1, H_2, \ldots, H_k\}$ is not a minor of $T_{i,s}$.
    \item There exist an integer $s \geq 1$ and a graph $H_j \in\{H_1, H_2, \ldots, H_k\}$ that is a minor of $T_{i+1,s}$.
\end{itemize}
Therefore, to decide whether $\GG \in \AAA_i$ in polynomial time, it suffices to design a polynomial-time algorithm that accomplishes the following task for a fixed index $j$. 
\begin{itemize}
    \item  Given a graph $H$, decide if there exists an integer $s \geq 1$ such that $H$ is a minor of $T_{j,s}$.
\end{itemize}
 We may assume that $H$ is a tree, since otherwise we can immediately decide that $H$ is not a minor of $T_{j,s}$. Moreover, the above task can be reduced to following rooted version by trying each choice of the root node in $H$.
 \begin{itemize}
    \item  Given a rooted tree $H$, decide if there exists an integer $s \geq 1$ such that $H$ is a rooted minor of $T_{j,s}$.
\end{itemize}

For the rest of the proof, we design a polynomial-time algorithm for this task by induction on $j$.
For the base case of $j = 1$, as $T_{1,s}$ is simply an $s$-node path, the task of checking whether $H$ is a rooted minor of $T_{1,s}$ for some $s$ can be restated as follows. Given a rooted tree $H$, check whether it is a path where its root is an endpoint of the path, that is, $H$ is isomorphic to $T_{1,s}$ for some $s$. This task is doable in polynomial time.  
Assuming that we have a polynomial-time algorithm for $j-1$, the characterization of \cref{lem-Ak-decide-aux} yields a polynomial-time algorithm for $j$ by checking all choices of the node $u$ in the characterization of 
 \cref{lem-Ak-decide-aux}. 
\end{proof}

\paragraph{Remark.} In the \emph{minor containment} problem, we are given two graphs $H$ and $G$, and the goal is to decide whether $H$ is a minor of $G$. The problem is well-known to be polynomial-time solvable when $H$ has constant size~\cite{kawarabayashi2012disjoint,robertson1995graph}. In the minor containment instances in the above proof,  $H$ is one of the forbidden minors $H_1, H_2, \ldots, H_k$. As  $\{H_1, H_2, \ldots, H_k\}$ is seen as the input to the problem considered in \cref{lem-Ak-decide}, these graphs cannot be treated as constant-size graphs, so the polynomial-time algorithms in~\cite{kawarabayashi2012disjoint,robertson1995graph} cannot be adapted here. 

If $H$ does not have constant size, then the problem is NP-complete, even when $G$ is a tree~\cite{gupta1995parallel}. 
A polynomial-time algorithm for minor containment is known for the case where $G$ is a tree and $H$ has bounded degree~\cite{gupta1995parallel}. This polynomial-time algorithm does not apply to our setting in the above proof, as the maximum degree 
 of the graphs $H_1, H_2, \ldots, H_k$ can be arbitrarily large.

\section{Conclusions}\label{sect:conclusions}

In this work, we proposed a conjecture that characterizes the complexity landscape of $\LCL$ problems for any minor-closed graph class, and we proved a part of the conjecture. We believe that our classification  $\AAA = \bigcup_{0 \leq i < \infty} \AAA_i$ of bounded-pathwidth minor-closed graph classes offers the right measure of similarity between a minor-closed graph class and paths from the perspective of locality of distributed computing.

A significant effort is required to completely settle the conjecture. To extend the existing techniques to establish complexity gaps for $\LCL$ problems from paths or trees to bounded-pathwidth graphs or bounded-treewidth graphs, new local algorithms to decompose these graphs are needed. One major challenge here is that both path decomposition and tree decomposition are \emph{global} in that computing them requires at least diameter rounds, so they are not directly applicable to the $\LOCAL$ model. New results along this direction not only have complexity-theoretic implications but it is likely going to yield new algorithmic tools in the $\LOCAL$ model.

On a broader note, we hope our work will inspire others to explore structural graph theory techniques in the study of the algorithms and complexities of local distributed graph problems and build bridges between distributed computing and other areas within algorithms and combinatorics.

\bibliographystyle{alpha}
\bibliography{references}


\appendix 

\section{Unbounded-Treewidth Networks}\label{sect-dense-region}

The goal of this section is to show that for any minor-closed graph class $\GG$ with unbounded treewidth,  its complexity landscape is \emph{dense} in the super-logarithmic region, in a sense similar to the result of Balliu~et~al.~\cite{Balliu18}.  In particular,  we will construct $\LCL$ problems with complexity $\Theta(n^{c})$ for any rational number $c$ such that $0 \leq c \leq 1$ and complexity $\Theta(\log^{c} n)$ for any rational number $c \geq 1$.

 \paragraph{Treewidth.} A \emph{tree decomposition} of a graph $G=(V,E)$ is a tree $T$ whose node set is a family  $\{X_1, X_2, \ldots, X_k\}$ of subsets of $V$ meeting the following conditions.
\begin{description}
    \item[(T1)] $X_1 \cup X_2 \cup \cdots \cup X_k = V$.
    \item[(T2)] For each edge $e=\{u,v\} \in E$, there is a subset $X_i$ that contains both $u$ and $v$. 
    \item[(T3)] If $X_i$ and $X_j$ both contain $v$, then $v$ is also included each $X_l$ in the unique path connecting $X_i$ and $X_j$ in $T$.
\end{description}

The \emph{width} of a tree decomposition is $\max_{1 \leq i \leq k} |X_i| - 1$. The \emph{treewidth} of a graph is the minimum width over all tree decompositions of the graph. Intuitively, the treewidth of a graph measures how similar it is to a tree.

\begin{theorem}[{Excluding grid theorem~\cite{ROBERTSON1986excludingPlanar}}]\label{thm-excluding-grid}
A minor-closed graph class $\GG$ has bounded treewidth if and only if $\GG$ does not contain all planar graphs.
\end{theorem}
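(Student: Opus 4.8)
The statement is the standard corollary of the \emph{quantitative} grid-minor theorem of Robertson and Seymour, namely that there is a function $f\colon\mathbb{N}\to\mathbb{N}$ such that every graph of treewidth at least $f(k)$ contains the $k\times k$ grid as a minor. The plan is to invoke that quantitative form as a black box and combine it with two elementary facts about grids: (i) the treewidth of the $k\times k$ grid grows without bound with $k$, and (ii) every planar graph is a minor of a sufficiently large grid. Note that the claimed equivalence is the contrapositive of ``$\GG$ has unbounded treewidth if and only if $\GG$ contains all planar graphs,'' which is the form I would actually prove.

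First I would handle the easy direction: if $\GG$ contains all planar graphs, then it contains the $k\times k$ grid for every $k$ (grids are planar), and since the treewidth of the $k\times k$ grid is unbounded in $k$, the class $\GG$ has unbounded treewidth. Equivalently, if a minor-closed graph class has bounded treewidth, it cannot contain all planar graphs. This direction uses nothing deep beyond the treewidth-of-a-grid computation.

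For the converse, suppose $\GG$ has unbounded treewidth. Then for every $k$ there is some $G_k\in\GG$ with treewidth at least $f(k)$, so by the grid-minor theorem $G_k$ has the $k\times k$ grid as a minor; since $\GG$ is minor-closed, the $k\times k$ grid lies in $\GG$ for every $k$. The remaining step is the classical fact that every planar graph $H$ is a minor of the $m\times m$ grid for $m$ large enough: fix a planar embedding of $H$, place the vertices in distinct cells of a fine enough mesh, and route the edges along pairwise node-disjoint grid paths. One final application of minor-closedness gives $H\in\GG$, so $\GG$ contains all planar graphs.

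The only genuinely deep ingredient is the grid-minor theorem itself, which we use as a citation; the treewidth computation for grids and the ``planar graph is a grid minor'' routing argument are both routine. Consequently the main point of the write-up is simply bookkeeping: making sure minor-closedness is applied in the right places (once to pull the grids into $\GG$, once to pull the planar graphs into $\GG$) and recording that unbounded treewidth of $\GG$ is witnessed by a possibly different graph $G_k$ for each $k$.
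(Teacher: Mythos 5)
Your argument is correct. Note, however, that the paper does not prove this statement at all: it is quoted as a black-box classical result of Robertson and Seymour (the excluding grid theorem), so there is no internal proof to compare against. What you give is the standard derivation of the qualitative statement from the quantitative grid-minor theorem (treewidth at least $f(k)$ forces a $k\times k$ grid minor), combined with the two routine facts that the $k\times k$ grid has treewidth $k$ and that every planar graph is a minor of a sufficiently large grid; minor-closedness is invoked exactly where it is needed (to place the grids, and then the planar graphs, inside $\GG$). The only caveat is that your proof is not more elementary than the citation itself, since the quantitative grid-minor theorem you use as a black box is the very content of the cited work.
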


The excluding grid theorem of Robertson and Seymour~\cite{ROBERTSON1986excludingPlanar} allows us to make use of all bounded-degree planar graphs when we design our $\LCL$ problems. 
 Similar to~\cite{balliuBOS18}, our construction of $\LCL$ problems is a black-box transformation of a certain type of Turing machine $\MM$ into a distributed problem $\PP$, where the round complexity of $\PP$ is determined by the time complexity of $\MM$. To determine the distributed complexity classes attainable through our construction, we need to formalize the notion of \emph{time constructable functions} with respect to certain Turing machines.

\paragraph{Turing machines.}
A linear-space-bounded Turing machine is described by a 6-tuple $\MM = (Q, \Gamma , \{\bot_\LLLL, \bot_\RRRR\} , F,  q_{0},  \delta)$, as follows.  
\begin{itemize}
    \item $Q$ is a finite set of states.
    \item $\Gamma$ is a finite set of tape symbols.
    \item $\{\bot_\LLLL, \bot_\RRRR\} \subseteq \Gamma$ are two special blank symbols.
    \item $F \subseteq Q$ is a non-empty set of accepting states.
    \item $q_0 \in Q \setminus F$ is the initial state.
    \item $\delta$ is a transition function mapping  $(Q \setminus F) \times \Gamma$ to  $Q \times \Gamma \times \{\LLLL,\RRRR\}$.
\end{itemize}

Here $\delta(q,a) =(q',a',X)$ means that whenever the head of the machine is over a cell whose symbol is $a$ and the current state of the machine is $q$, the machine will write $a'$ to the current cell, transition to the state $q'$, and move the lead to the left cell (if $X = \LLLL$) or the right cell (if $X = \RRRR$). The machine terminates once it enters a state in $F$.

Given an input string $S=(a_1, a_2, \ldots, a_k)$ where each element is in $\Gamma \setminus \{\bot_\LLLL, \bot_\RRRR\}$, we assume that the tape of the machine is initially $(\bot_\LLLL,  a_1, a_2, \ldots, a_k, \bot_\RRRR)$, and the head of the machine is initially over the left-most cell of the tape, whose symbol is $\bot_\LLLL$. 

We do not allow the Turing machine to use additional space. This is enforced by the following rules for any transition $\delta(q,a) =(q',a',X)$.
\begin{itemize}
    \item Whenever $a = \bot_\LLLL$, we must have $X = \RRRR$ and $a' = \bot_\LLLL$.
    \item Whenever $a = \bot_\RRRR$, we must have $X = \LLLL$ and $a' = \bot_\RRRR$.
    \item Whenever $a \in \Gamma \setminus \{\bot_\LLLL, \bot_\RRRR\}$, we must have $a' \in \Gamma \setminus \{\bot_\LLLL, \bot_\RRRR\}$.    
\end{itemize}

That is, $\bot_\LLLL$ always indicates the left border,   $\bot_\RRRR$ always indicates the right border, and no intermediate cell is allowed to use these two symbols.

\paragraph{Time-constructable functions.} We say that $T(s)$ is a time-constructable function with respect to  linear-space-bounded Turing machines if there is a machine $\MM = (Q, \Gamma , \{\bot_\LLLL, \bot_\RRRR\} ,   q_{0}, F, \delta)$ meeting the above requirements such that it takes exactly $T(s)$ steps for the machine $\MM$ to terminate given the length-$s$ unary string $S = 0^s$ as the input.

We say that $T(s)$ is \emph{good} if there exists some finite number $C$ such that   $s \leq T(s) \leq T(s+1) \leq C \cdot T(s)$ holds for all positive integer $s$. 
In \cref{lem-time-hierarchy-aux}, we show that exponential functions are time-constructible with respect to linear-space-bounded Turing machines.  

\begin{lemma}\label{lem-time-hierarchy-aux}
For each rational number $c > 0$, there is a good time-constructible function $T(s) = \Theta(2^{cs})$ with respect to linear-space-bounded Turing machines.
\end{lemma}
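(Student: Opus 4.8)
The plan is to prove the lemma by exhibiting, for a given positive rational $c = p/q$ (with $p,q$ positive integers), a single linear-space-bounded Turing machine $\MM_c$ and then \emph{defining} $T(s)$ to be the exact number of steps $\MM_c$ takes on the input $0^s$; the work is to design $\MM_c$ so that this step count is simultaneously $\Theta(2^{cs})$ and good. The machine will use the first $m(s) := \lceil s/q\rceil$ cells of the tape as a counter written in base $2^p$: each such cell stores one digit from $\{0,1,\ldots,2^p-1\}$, so the tape alphabet $\Gamma$ consists of these $2^p$ digit symbols, the two border symbols $\bot_{\LLLL}, \bot_{\RRRR}$, and a constant number of auxiliary markers. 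Since $m(s) = s/q + O(1)$, an $m(s)$-digit base-$2^p$ counter ranges over $2^{p\,m(s)} = \Theta(2^{cs})$ values, so if $\MM_c$ spends $\Theta(1)$ amortized time per counter value the total running time is $\Theta(2^{cs})$.

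Concretely, $\MM_c$ would run in two phases. In the first phase it sweeps the tape a bounded number of times, carrying a counter modulo $q$ in its finite control, to locate the cell at position $m(s)$ and overwrite it with an auxiliary ``wall'' marker; this phase touches every cell, costing $\Theta(s)$ steps, which conveniently already forces $T(s) \geq s$. In the second phase it repeatedly increments the base-$2^p$ number stored in the first $m(s)$ cells: starting at the least-significant cell it propagates a carry rightward across the maximal run of ``full'' digits (those equal to $2^p-1$), rewriting them to $0$ and adding one to the first non-full digit, and then walks the head back to the least-significant cell; when the carry would cross the wall the counter is maximal and the machine halts in an accepting state. The crucial quantitative point is the amortized head-travel bound: position $i$ of the counter is visited only on increments where the first $i$ digits are all full, i.e.\ once in every $(2^p)^i$ increments, so the head's total travel over the whole run is $\sum_{i\geq 0}\Theta(2^{p\,m(s)}/2^{pi}) = \Theta(2^{p\,m(s)})$, matching the number of increments up to constants. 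I expect this to be the main obstacle: a careless implementation that returns the head all the way to the far end of the counter after each increment would run in $\Theta(s\cdot 2^{cs})$ time, which is not of the form $\Theta(2^{c's})$ for any constant $c'$, so one genuinely has to argue that the head never wanders further than the current carry requires.

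Finally I would verify that $T$ is good. The bound $T(s)\geq s$ comes from Phase~1. Monotonicity $T(s)\leq T(s+1)$ follows because $m(s)=\lceil s/q\rceil$ is non-decreasing, Phase~1's cost is (strictly) increasing in $s$, and Phase~2's cost is a non-decreasing function of $m(s)$. For $T(s+1)\leq C\cdot T(s)$: for all but finitely many $s$ the Phase~2 term $\Theta(2^{cs})$ dominates the polynomial Phase~1 term, and since $m(s+1)\leq m(s)+1$ the Phase~2 term can grow by at most a factor $2^p$ when $s$ increases by one, so the ratio is bounded; the finitely many remaining small values of $s$ each give a bounded ratio, hence a single finite $C$ works throughout. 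Combining these three items with the running-time estimate gives a good time-constructible $T$ with $T(s)=\Theta(2^{cs})$, as required. A few degenerate small cases, such as $s<q$ where $m(s)=1$, are handled directly and affect only the constants.
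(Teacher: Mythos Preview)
Your proposal is correct and follows essentially the same approach as the paper: write $c=p/q$, isolate $\lceil s/q\rceil$ cells, and run a base-$2^p$ counter on them so that the amortized carry-propagation cost is $\Theta(2^{cs})$. The only differences are cosmetic---the paper spends $O(s^2)$ steps on a bubble sort to set up the $\lceil s/q\rceil$ cells rather than your $\Theta(s)$ sweep, and it leaves the verification of the ``good'' conditions implicit---so your plan is, if anything, slightly cleaner.
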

\begin{proof}
Let $c > 0$ be any rational number. We write $c = p/q$ where $p$ and $q$ are positive integers. We will construct a linear-space-bounded Turing machine $\MM$ that accepts $S = 0^s$ in $\Theta(2^{sc})$ steps. The construction of $\MM$ has two parts. In the first part, we design a Turing machine that finds a substring $S'$ of $S$ of length $\ceil{s/q}$ in $O(s^2)$ steps, for any given positive integer $q$. In the second part, given an input string $S'$ of length $k = \ceil{s/q}$, we design a Turing machine that terminates in $\Theta(b^k)$ steps, for any given positive integer $b$. Composing these two machines with $b = 2^p$, we obtain the required Turing machine that terminates in $\Theta(2^{cs})$ steps given   $S = 0^s$ as the input.

\paragraph{Division.} Given a string  $S = 0^s$ as the input, we find a substring $S'$ of length $\ceil{s/q}$, as follows. First, we relabel these $s$ cells by $0,1,\ldots, q-1, 0, 1, \ldots, q-1, \ldots$. In other words, we perform a single sweep of the input tape and write the symbol $(i \mod q)$ to the $i$th cell. This takes $O(s)$ steps. After that, we perform a bubble sort to sort these numbers in non-decreasing order in $O(s^2)$ steps. After sorting, the initial substring $0^{\ceil{s/q}}$ is the desired output.

\paragraph{Exponentiation.} Given a string $S' = 0^k$ as the input, we design a procedure that takes $\Theta(b^k)$ steps to finish. There are $k$ cells $c_1, c_2, \ldots, c_k$ storing the string $S'$, where all of them are labeled $0$ initially. Let $c_\LLLL$ be the cell left to $c_1$ and   $c_\RRRR$ be the cell right to $c_k$. \cref{alg:simple} counts from $0$ to $b^k - 1$ using $c_1, c_2, \ldots, c_k$ to represent numbers in the base $b$. The procedure takes $\Theta(b^k)$ steps to finish in a Turing machine. 
\end{proof}

\begin{algorithm}[!ht]
\SetAlgoLined
At the beginning, the head of the machine is over the left-most cell $c_\LLLL$\;
\While{the current cell is not the right-most cell $c_\RRRR$}
{
Move to the right cell\;
\While{the current cell is labeled $b-1$}
{
Write $0$ to the current cell\;
Move to the right cell\;
}
\If{the current cell is not the right-most cell $c_\RRRR$}
{
Let $x$ be the label of the current cell\;
Write $(x+1 \mod b)$ to the current cell\;
Move left repeatedly, until it reaches the left-most cell $c_\LLLL$\;}
}
 \caption{An algorithm that takes exponentially many steps.}\label{alg:simple}
\end{algorithm}

Although a time-constructable function $T(s)$ is defined only on positive integers, we extend its domain to all real numbers by setting $T(s) = T(\ceil{s})$ for each $s>0$ and $T(s) = T(1)$ for $s < 0$. In \cref{lem-time-hierarchy}, we relate time-constructable functions for Turing machines and complexity of $\LCL$ problems in the $\LOCAL$ model. The proof of \cref{lem-time-hierarchy} is deferred.

\begin{lemma}\label{lem-time-hierarchy}
Suppose $T(s)$ is a good time-constructable function with respect to linear-space-bounded Turing machines. 
\begin{itemize}
    \item If $T(s) = O(2^s / s)$, then there is an $\LCL$ problem that can be solved in $O\left(T\left( {\log n} \right)\right)$ rounds on general graphs and requires $\Omega\left(T\left( {\log n} \right)\right)$ rounds to solve on bounded-degree planar graphs.
    \item If $T(s) = \Omega(2^s)$, then there is an $\LCL$ problem that can be solved in $O\left(T\left({\log \log n} \right)\right)$ rounds on general graphs and requires $\Omega\left(T\left( {\log \log n} \right)\right)$ rounds to solve on bounded-degree planar graphs.
\end{itemize}
\end{lemma}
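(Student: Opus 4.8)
The plan is to take the $\LCL$ construction of Balliu et al.~\cite{balliuBOS18} as a black box and inflate the number of nodes by gluing a locally certified complete binary tree onto the grid that carries the Turing machine simulation. Recall the shape of that construction: for a fixed linear-space-bounded Turing machine $\MM$ there is an $\LCL$ problem $\PP_\MM$ whose valid instances are forced to contain a two-dimensional grid of dimensions $s\times T(s)$ encoding a halting computation of $\MM$ on the input $0^s$, where $s$ is the ``tape dimension''; such an instance has $\Theta(s\cdot T(s))$ nodes, its complexity is $\Theta(T(s))$ because information must flow across all $T(s)$ ``time rows'', and since grids are planar the construction — and in particular the $\Omega(T(s))$ lower bound — already works over bounded-degree planar graphs. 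Our augmented problem $\PP$ keeps all of these constraints and adds a gadget that pads the instance without changing its complexity.

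Consider first $T(s)=O(2^s/s)$. The gadget is a complete binary tree $B$ of height $s$, one of whose root-to-leaf paths is identified with the tape side of the grid (a path of $\Theta(s)$ nodes). We add local constraints certifying that $B$ is a complete binary tree: every internal node has exactly two children, and every leaf is at depth exactly $s$. The depth constraint is the delicate point; I would implement it by a constant-size ``depth certificate'' per tree node that reuses the $s$ tape cells of the grid as a shared ruler — every root-to-leaf path of $B$ being forced, via short ladder gadgets linking a node's certificate to those of its two children, to align with that ruler — so that the certificate adds only $O(1)$ extra nodes per tree node and does not inflate the node count. With $B$ pinned down, a valid instance has $n=\Theta(s\cdot T(s)+2^{s})=\Theta(2^{s})$, so $s=\log n-O(1)$; since $T$ is good, $T(s)=\Theta(T(\log n))$, and the complexity of $\PP$ equals that of the underlying simulation, $\Theta(T(s))=\Theta(T(\log n))$. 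For $T(s)=\Omega(2^s)$ I would use the same idea but let $B$ have height $2^{s}-1$; since $2^{s}-1$ is too large to be a constant-size label, the certificate is an $s$-bit binary counter stored along the glued tape side (its value, all ones $=2^{s}-1$, checked at the top), decremented step by step down $B$ through $O(s)$-node ladder gadgets on the tree edges, with leaves forced to counter value $0$. Then a valid instance has $n=\Theta(s\cdot 2^{2^{s}}+s\cdot T(s))=\Theta(s\cdot 2^{2^{s}})$ — here I use that a good function obeys $T(s)=2^{O(s)}$, since $T(s+1)\le C\,T(s)$ gives $T(s)\le C^{s-1}T(1)$, so the tree dominates — whence $\log\log n=s+o(1)$ and the complexity is $\Theta(T(s))=\Theta(T(\log\log n))$.

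For the matching bounds, the upper bound holds on all graphs: on any valid instance an algorithm gathers the grid's first row in $O(s)$ rounds, locally simulates $\MM$ for $T(s)$ steps, and propagates the depth certificate through $B$ in $O(\mathrm{height}(B))$ rounds, which is $O(s)\le O(T(s))$ in the first case and $O(2^{s})=O(T(s))$ in the second — a total of $O(T(s))$ rounds, i.e.\ $O(T(\log n))$ respectively $O(T(\log\log n))$. The lower bound is inherited: instantiate $B$ as a genuine complete binary tree with correct certificate labels and take the two planar valid instances supplied by the Balliu et al.\ indistinguishability argument, which agree outside a far part of the grid but demand different output labels near the last time row; a $o(T(s))$-round algorithm cannot tell them apart, and since these instances are planar with $n=\Theta(2^{s})$ (resp.\ $\Theta(s\cdot 2^{2^{s}})$) this is exactly an $\Omega(T(\log n))$ (resp.\ $\Omega(T(\log\log n))$) bound on bounded-degree planar graphs.

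The main obstacle is engineering the depth certificate for $B$: it must be locally checkable with constant radius, add only $O(1)$ nodes per tree node (so the node count is pinned to $\Theta(2^{s})$ or $\Theta(s\cdot 2^{2^{s}})$ with no polylogarithmic slack — unlike constant additive slack in $s$, multiplicative polylog slack in $n$ is \emph{not} absorbed by goodness of $T$, and would degrade $\Theta(n^{c})$ to $\Theta(n^{c}/\operatorname{polylog}n)$), be compatible with a planar embedding of the combined graph, and — crucially — guarantee that every graph admitting a valid solution really does have the intended grid-plus-complete-tree structure, so that the upper bound is valid on all instances. Once the gadget and the implication ``valid instance $\Rightarrow$ grid plus complete tree'' are in place, assembling $\PP$ as a genuine $\LCL$ (bounded degree, constant checking radius) and reading off the two stated regimes from Lemma~\ref{lem-time-hierarchy-aux}'s functions is routine.
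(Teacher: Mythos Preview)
Your high-level strategy matches the paper's: attach a complete binary tree to the tape path so as to inflate $n$ exponentially (or doubly exponentially) in $s$. However, the crux you yourself flag as ``the main obstacle'' --- locally certifying that the tree is complete of the correct height with only $O(1)$ overhead per node while keeping the graph planar and bounded-degree --- is left unresolved, and your sketched mechanism does not work as stated. A constant-size depth certificate cannot encode a depth in $\{0,\dots,s\}$ when $s$ grows with $n$, and ``reusing the $s$ tape cells as a shared ruler'' via edges would force the $2^d$ tree nodes at depth $d$ all to touch a single ruler cell, destroying bounded degree. The paper's mechanism is different and not hinted at in your proposal: it links the $2^{i-1}$ nodes of each layer $i$ into a directed \textsf{Path}, equips that path with a proper $2$-coloring, and imposes a handful of local rules relating the \textsf{Path} edges of layer $i$ to the \textsf{Tree} edges into layer $i-1$; these rules force any \textsf{Top} node to root a genuine complete binary tree, and only the leftmost root-to-leaf path is glued to $V_P$ to pin the height to $s$. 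For the doubly exponential case the paper does not use your $s$-bit counter gadgets (which, incidentally, stretch the tree's graph diameter to $\Theta(s\cdot 2^s)$ and may overshoot the $O(T(s))$ budget when $T(s)=\Theta(2^s)$) but simply reapplies the same layer-path construction once more to the bottom-layer path.

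A second gap concerns the upper bound on arbitrary graphs. You assert it ``holds on all graphs'' but then argue only for ``any valid instance'', and you list the implication ``valid instance $\Rightarrow$ grid plus complete tree'' as a further unresolved obstacle. In the paper there is no such implication: \emph{every} graph is a legal instance, and structural violations are handled by an explicit error-pointer mechanism in the \emph{output} labeling. Any node seeing a local violation may originate a chain of \textsf{Error} labels flowing toward $v_1$; if $v_1$ receives such a pointer the entire grid is permitted to output the escape symbol~$\ast$, and far-away grid nodes may output~$\ast$ once the simulation has halted. Because a locally-correct prefix of the tree has exponential growth, the nearest error is always at graph distance $O(\log n)$, so error detection fits in the round budget. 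Without a comparable mechanism your algorithm has nothing correct to output on malformed inputs, and the upper bound does not extend to general graphs as the lemma requires.
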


Combining \cref{lem-time-hierarchy,lem-time-hierarchy-aux}, we obtain the main result of the section.

\thmdense*
\begin{proof}
For any rational number $c > 0$, we pick a good time-constructable function $T(s) = \Theta(2^{cs})$  with respect to linear-space-bounded Turing machines, guaranteed by \cref{lem-time-hierarchy-aux}. 

We first consider the complexity classes  $\Theta(n^{c})$.
For $c < 1$, we have $T(s) = O(2^s / s)$, so \cref{lem-time-hierarchy} shows the existence of an $\LCL$ problem with complexity $\Theta\left(T\left( {\log n} \right)\right) = \Theta(n^{c})$, for all rational numbers $c$ such that $0 < c < 1$.  Here we use the property that $\GG$ includes all planar graphs, guaranteed by \cref{thm-excluding-grid}, so the lower bounds in \cref{lem-time-hierarchy} are applicable. For the extremal cases of $c = 0$ and $c = 1$, the existence of  $\LCL$ problems with complexity $\Theta(n^{c})$ is trivial.

For the complexity classes  $\Theta(\log^{c} n)$, as long as  $c \geq 1$, $T(s) = \Omega(2^s)$, so  \cref{lem-time-hierarchy} shows the existence of an $\LCL$ problem with complexity $\Theta\left(T\left( {\log \log n} \right)\right) = \Theta(\log^{c} n)$, for all rational numbers $c \geq 1$. 
\end{proof}

 The proof of \cref{lem-time-hierarchy} is given in \cref{sect-transform,sect:part1,sect:part2,sect:part3}.

\subsection{Transforming Turing Machines into Distributed Problems}\label{sect-transform}

Given a Turing machine $\MM = (Q, \Gamma, \{\bot_\LLLL, \bot_\RRRR\}, F,  q_{0},  \delta)$ that takes $T(s)$ steps to accept the input string $0^s$, in order to prove \cref{lem-time-hierarchy}, we need to construct  $\LCL$ problems realizing the round complexity bounds specified in \cref{lem-time-hierarchy}. 

We will consider $\LCL$ problems with radius $r > 1$ and with input labels. We allow edge labels and edge orientations. Having a radius greater than one seems necessary in our construction as we need to detect short-length cycles. The use of input labels is not necessary, as input labels can be simulated by bounded-degree subtrees attached to nodes. Similarly, the use of edge labels and edge orientations can be simulated by node labels on degree-$2$ nodes if we subdivide each edge $e=\{u,v\}$ into a path $(u,x,y,v)$. See~\cite{BalliuBCORS19,chang:LIPIcs:2020:13096,ChangP17} for details. For convenience, we still consider $\LCL$ problems with input labels and allow edge labels and edge orientation, as they simplify the presentation. 

To establish the dense region  $[\Theta(\log n), \Theta(n)]$, 
We will partition the node set $V$ of the underlying network $G=(V,E)$  into three parts $V = V_P \cup V_E \cup V_M$ according to input node labels, see \cref{fig:parts} for an illustration.

\begin{figure}
	\centering
	\includegraphics[width=\textwidth]{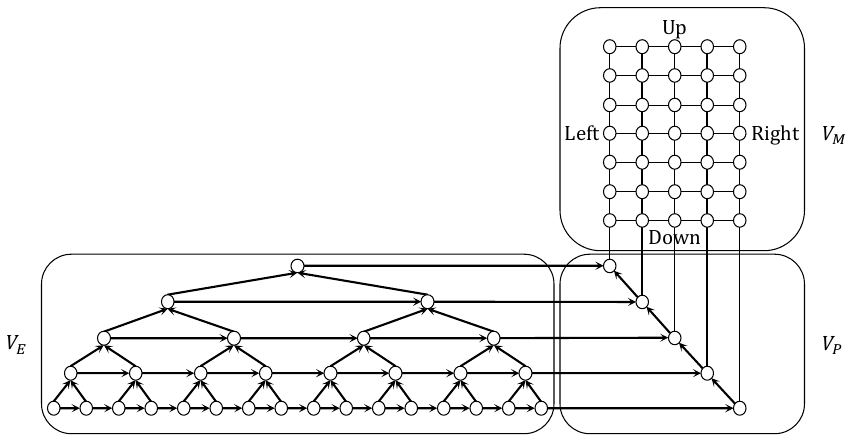}
	\caption{The three parts $V = V_P \cup V_E \cup V_M$.} 
	\label{fig:parts}
\end{figure}

We briefly explain the roles of these three parts. The node set $V_P$ forms a path that is seen as the input string for our Turing machine simulation. The node set $V_M$ forms a grid such that one of its four sides is attached to the path of $V_P$. The purpose of $V_M$ is to carry out the Turing machine simulation.
The node set $V_E$ has the structure of a complete binary tree such that the left-most root-to-leaf path is attached to the path of $V_P$. To use locally checkable rules to force $V_E$ to have the desired tree structure, the nodes in each layer will be connected into a path.
Note that the same kind of tree structure has been used in \cite{BBOS20paddedLCL} for defining a different $\LCL$ problem.

The purpose of $V_E$ is to increase the number of nodes in the graph to be exponential in the size of the input string of the Turing machine simulation.
By adding another complete binary tree, it is possible to further increase the number of nodes in the graph to be doubly exponential in the size of the input string, so we may handle the two cases of \cref{lem-time-hierarchy}.

Of course, the underlying network is still allowed to be any graph that does not follow the structure described above. As we will later see, all the rules specifying the desired graph structure are locally checkable, each node can check in $O(1)$ rounds whether the rules are met locally.

In particular, the rules applied to the nodes in $V_P$ are simple. Each $v \in V_P$ is required to have degree $1$ or $2$. Each edge $e=\{u,v\}$ whose two ends are in $V_P$ is required to be consistently oriented so that the nodes in $V_P$ form directed paths.

Intuitively, if  $V_P$ consists of only one path and all the rules are obeyed, then the string length will be $\Theta(\log n)$ or $\Theta(\log \log n)$, depending on which case of \cref{lem-time-hierarchy} we consider. The rules for the output labels will be designed in such a way that asks for simulation of a given Turing machine $\MM$ on the grid structure over  $V_M$, so the round complexities specified in  \cref{lem-time-hierarchy} can be achieved. 

As discussed earlier, the rules for the network topology and input label assignment might not be obeyed everywhere. To deal with these potential errors, the rules for the output labels will be designed in such a way that errors only make the problem easier to solve. For example, if the local neighborhood of a node  $v \in V_M$ does not form a grid structure, then $v$ will be exempted from doing the Turing machine simulation.

\subsection{Exponentiation} \label{sect:part1}
We describe the locally checkable rules applied to the nodes in $V_E$. The design objective of the rules is to ensure that the size of the $V_E$-component attached to a connected component $P$ of $V_P$ has size exponential in the number of nodes in $P$. Later we will discuss how the size can be made doubly exponential by modifying the rules.

\paragraph{Graph structure.}
We begin with a description of the graph structure that we want to realize with our locally checkable rules.
\begin{itemize}
    \item The nodes in $V_E$ are partitioned into the following three types based on their input labels: {\sf Top}, {\sf Middle}, and {\sf Bottom}.  
    \item The edges with at least one end in $V_E$ are oriented, and they are partitioned into two types: {\sf Tree} and {\sf Path}. 
\end{itemize}

We write $P = v_1 \leftarrow v_2 \leftarrow \cdots \leftarrow v_s$ to denote a connected component of $V_P$. 
The {\sf Tree} edges in $V_E$ induce a complete binary tree $T$ of exactly $s$ layers. These edges are oriented toward the direction of the root. Let $L_i \subseteq V_E$ be the set of nodes in the $i$th layer. Note that $|L_i| = 2^{i-1}$. The nodes in $L_i$ are connected into a directed path $u_{i,1} \leftarrow u_{i,2} \leftarrow \cdots \leftarrow u_{i, 2^{i-1}}$ using {\sf Path} edges. For each $1 \leq i < s$, $u_{i,j}$ is the parent of $u_{i+1, 2j-1}$ and $u_{i+1, 2j}$ in the complete binary tree $T$. The unique node in $L_1$ is a {\sf Top} node, the nodes in $L_2, L_3, \ldots L_{s-1}$ are {\sf Middle} nodes, and the nodes in $L_s$ are {\sf Bottom} nodes. For each $1 \leq i \leq s$, there is an edge $v_i \leftarrow u_{i,1}$ connecting $V_P$ and $V_E$. From the above description, the $V_E$-component attached to $P$ has exactly $2^{s} - 1$ nodes.

\paragraph{Basic rules.} We describe the locally checkable rules realizing the above graph structure. We start with the basic requirements about the {\sf Tree} and {\sf Path}  edges.  

\begin{itemize}
\item  All {\sf Top} and  {\sf Middle} nodes have exactly two incoming  {\sf Tree} edges. {\sf Bottom} nodes do not have incoming {\sf Tree} edges.  All {\sf Middle}  and  {\sf Bottom} nodes have exactly one outgoing {\sf Tree} edge.   {\sf Top} nodes do not have outgoing {\sf Tree} edges. From now on, we say that $u$ is a child of $v$ if there is a {\sf Tree} edge $v \leftarrow u$.

\item {\sf Top}  nodes do not have incident  {\sf Path} edges.
{\sf Middle} and  {\sf Bottom} nodes have one or two incident  {\sf Path} edges. In case they have two incident  {\sf Path} edges, one of them is incoming and the other one is outgoing.
\end{itemize}

\paragraph{Rules for the tree structure.}
To ensure that all the required  {\sf Tree} and {\sf Path}  edges exist, we require that 
each connected component $u_1 \leftarrow u_2 \leftarrow \cdots \leftarrow u_k$ induced by {\sf Path}  edges is properly 2-colored by $\{1,2\}$, encoded in input labels. Furthermore, we have the following requirements about the colors of the endpoints of the path. 
If a node  $v$ has one incoming  {\sf Path}  edge and zero outgoing  {\sf Path}  edge, then  $v$ is required to be colored by $1$.
If a node  $v$ has one outgoing  {\sf Path}  edge and zero incoming  {\sf Path}  edge, then  $v$ is required to be colored by $2$.
In case $v$ has no incident {\sf Path}  edges, $v$ is colored by $1$.
Using the proper 2-coloring, we make the following requirements about the local structure of the {\sf Tree} edges.

\begin{itemize}
    \item Let $u \leftarrow v$ be any {\sf Path}  edge where $\colors(u) = 1$ and $\colors(v) = 2$. It is required that there exists a node $w \in V_E$ whose two children are $u$ and $v$.
    
    Conversely, let $w  \in V_E$ be any node whose two children are $u$ and $v$. It is required that $u$ and $v$ are connected by a {\sf Path}  edge.
    
    \item Let $u \leftarrow v$ be any {\sf Path}  edge where $\colors(u) = 2$ and $\colors(v) = 1$. It is required that there exist two nodes  $w \in V_E$ and $x \in V_E$ with two incident  {\sf Tree} edges $w \leftarrow u$ and $x \leftarrow v$ and one {\sf Path}  edge $w \leftarrow x$. 
    
    Conversely, consider any two nodes  $w \in V_E$ and $x \in V_E$ connected by a {\sf Path}  edge $w \leftarrow x$ such that both $w$ and $x$ have two children. Let $u$ be the child of $w$ with $\colors(u) = 2$.  Let $v$ be the child of $x$ with $\colors(v) = 1$. It is required that $u$ and $v$ are connected by a {\sf Path}  edge in the direction $u \leftarrow v$.
    
    \item  Consider any $w \in V_E$ with two children $u$ and $v$ where $\colors(u) = 1$ and $\colors(v) = 2$. We have the following two requirements.
    \begin{itemize}
        \item $w$ has no outgoing {\sf Path}  edge if and only if $u$ has no  outgoing {\sf Path}  edge.
        \item $w$ has no incoming {\sf Path}  edge if and only if $v$ has no  incoming {\sf Path}  edge.
    \end{itemize}
\end{itemize}

Starting with any {\sf Top} node $v$ and applying the above rules inductively, it is ensured that there is a complete binary tree with the required structure where  $v$ is the root.

\paragraph{Rules for connecting $V_P$ and $V_E$.}
We describe the rules for the edges connecting  $V_P$ and $V_E$. Any edge connecting  $u \in V_P$ and $v \in V_E$ must be oriented in the direction  $u \leftarrow v$, and $V$ is required to have no outgoing {\sf Path} edges. Furthermore, we have the following requirements.
\begin{itemize}
    \item Each $v \in V_E$ with zero outgoing {\sf Path} edge has exactly one incident edge  $u \leftarrow v$ with $u \in V_P$. Conversely, each $u \in V_P$ has exactly one incident edge  $u \leftarrow v$ such that $v \in V_E$ has no outgoing {\sf Path} edges.
    \item Any edge $u \leftarrow v$ connecting  $u \in V_P$ and $v \in V_E$ belongs to one of the following types.
    \begin{itemize}
        \item $v$ has no outgoing {\sf Tree} edge ($v$ is {\sf Top}) and $u$ has no out-neighbor in $V_P$.
        \item $v$ has no incoming {\sf Tree} edge ($v$ is {\sf Bottom}) and $u$ has no in-neighbor in $V_P$.   
        \item $v$ has both incoming and outgoing {\sf Tree} edges ($v$ is {\sf Middle}) and $u$ has a in-neighbor and a out-neighbor in $V_P$.   
    \end{itemize}
    \item For each edge $u \leftarrow v$ connecting two nodes in $V_P$, there exists a {\sf Tree} edge $w \leftarrow x$ connecting two nodes in $V_E$ such that the two edges $u \leftarrow w$ and $v \leftarrow x$ exist.
\end{itemize}

\paragraph{Error pointers.}  Consider any connected component $P = v_1 \leftarrow v_2 \leftarrow \cdots \leftarrow v_s$ of $V_P$. If all of the locally checkable rules above are met, then the number of nodes in the $V_E$-component attached to $P$ is exactly $2^{s}-1$. Furthermore, due to the structure of the complete binary tree, the subgraph induced by $P$ and the  $V_E$-component attached to $P$ has diameter $O(s)$.  

It is possible that the underlying network does not satisfy all of the locally checkable rules. In that case, the first node $v_1$ of $P$ is able to detect an error in $O(s)$ rounds. Specifically, the errors can be reported using a special output edge label {\sf Error}. This label is allowed to appear in any directed edge $u \leftarrow v$, as long as one of the following requirements is met.
\begin{itemize}
    \item There is another edge $v \leftarrow w$ also labeled {\sf Error}.
    \item the $O(1)$-radius neighborhood of $v$ does not satisfy one of the above  rules.
\end{itemize}

If there is an error, then in $O(s)$ rounds we can find a string of error pointers $v_1 \leftarrow \cdots \leftarrow u$ from some node $u \in V_E$ where a rule is violated to the node $v_1$ of the path $P$. 

More specifically, due to the exponential growth of a complete binary tree, if we write $d$ to be the shortest length among all strings of error pointers, then the number of rounds it takes for $v_1$ to find any string of error pointers is $O(d)$. Because the graph structure within distance $d-1$ to $v_1$ is valid, the number of nodes seen by $v_1$ is at least $\Omega(2^d)$, due to the exponential growth of a complete binary tree, so the round complexity of finding such an error is always $O(\log n)$, where $n$ is the number of nodes in the network.

\paragraph{Double exponentiation.} Our rules ensure that when there is no error, the size of the $V_E$-component attached to $P = v_1 \leftarrow v_2 \leftarrow \cdots \leftarrow v_s$ is $2^{s}-1 = \Theta(2^s)$. It is straightforward to modify the rules to increase this number to $\Theta\left(2^{2^s}\right)$. The idea is to apply the same rules again to the directed paths induced by the {\sf Path} edges over the {\sf Bottom} nodes. Starting from the $s$-node path $P = v_1 \leftarrow v_2 \leftarrow \cdots \leftarrow v_s$ corresponding to a component of $V_P$, the $k = 2^{s-1}$ {\sf Bottom} nodes in the $V_E$-component attached to $P$ induces a $k$-node directed path $P'$ using {\sf Path} edges. If we apply the same rules again to these nodes, then we can further attach  $2^{k}-1 = \Theta\left(2^{2^s}\right)$ nodes to $P'$. This ensures that the total number of nodes in the $V_E$-component attached to $P = v_1 \leftarrow v_2 \leftarrow \cdots \leftarrow v_s$ is $\Theta\left(2^{2^s}\right)$, in case all rules are met.

\subsection{Turing Machine Simulation} \label{sect:part2}

We describe the locally checkable rules applied to the nodes in $V_M$. The design objective of the rules is to ensure that the nodes in the $V_M$-component attached to any connected component $P = v_1 \leftarrow v_2 \leftarrow \cdots \leftarrow v_s$ of $V_P$ form a grid structure on which we are required to simulate a given linear-space-bounded Turing machine $\MM$. If there is an error pointer pointed $v_1$, no simulation is needed. Otherwise, it is required that the output labels in $V_M$ constitute a correct simulation of $\MM$ on the input string $S = 0^{s-2}$.

\paragraph{Grid structure.} There are four directions: \upp, \downn, \leftt, \rightt, as we aim to build an oriented grid. Each edge $e = \{u,v\}$ connecting two nodes in $V_M$ is either in the \upp\ -- \downn\ direction or in the \leftt\ -- \rightt\ direction. If $u$ is the \leftt-neighbor of $v$, then $v$ is the \rightt-neighbor of $v$, and vice versa. If $u$ is the \upp-neighbor of $v$, then $v$ is the \downn-neighbor of $v$, and vice versa.

For each node $v$ in $V_M$, we write $d(v) = (d_\UUUU(v), d_\DDDD(v), d_\LLLL(v), d_\RRRR(v))$, where $d_\UUUU(v)$ is the number  \upp-neighbors of $v$, $d_\DDDD(v)$ is the number of \downn-neighbors of $v$, $d_\LLLL(v)$ is the number of \leftt-neighbors of $v$,  and  $d_\RRRR(v)$ is the number of \rightt-neighbors of $v$. It is required that $v$ belongs to one of the following cases.
\begin{itemize}
    \item $v$ is an inner node: $d(v) = (1,1,1,1)$.
    \item $v$ is on the \upp-border: $d(v) = (0,1,1,1)$.
    \item $v$ is on the \downn-border: $d(v) = (1,0,1,1)$.
    \item $v$ is on the \leftt-border: $d(v) = (1,1,0,1)$.
    \item $v$ is on the \rightt-border: $d(v) = (1,1,1,0)$.
    \item $v$ is on the (\upp, \leftt)-corner: $d(v) = (0,1,0,1)$.
    \item $v$ is on the (\upp, \rightt)-corner: $d(v) = (0,1,1,0)$.
    \item $v$ is on the (\downn, \leftt)-corner: $d(v) = (1,0,0,1)$.
    \item $v$ is on the (\downn, \rightt)-corner: $d(v) = (1,0,1,0)$.
\end{itemize}

For each node $u \in V_M$ with $d_\UUUU(u)=1$ and $d_\LLLL(u)=1$, it is required that there are three nodes $v$, $w$, and $x$ such that $v$ is the \upp-neighbor of $u$, $w$ is the \leftt-neighbor of $v$, $x$ is the \downn-neighbor of $w$, and $u$ is the \rightt-neighbor of $x$. We also have an analogous requirement for the rectangular faces in the three remaining directions, the details are omitted. These requirements ensure that the nodes in $V_M$ form disjoint unions of rectangular grids, if all the rules are obeyed everywhere.

\paragraph{Connections between $V_M$ and $V_P$.} We design rules to ensure that  any connected component $P = v_1 \leftarrow v_2 \leftarrow \cdots \leftarrow v_s$ of $V_P$ is glued to a grid of $V_M$ along its  \downn-border.  The rules are as follows.
\begin{itemize}
    \item Each $v \in V_P$ with zero out-neighbor in $V_P$ is required to be adjacent to a node in $V_M$ that is on the (\downn, \leftt)-corner. Conversely, each node in $V_M$ that is on the (\downn, \leftt)-corner is required to be adjacent to a node in $V_P$ with zero out-neighbor in $V_P$.
    \item Each $v \in V_P$ with zero in-neighbor in $V_P$ is required to be adjacent to a node in $V_M$ that is on the (\downn, \rightt)-corner. Conversely, each node in $V_M$ that is on the (\downn, \rightt)-corner is required to be adjacent to a node in $V_P$ with zero in-neighbor in $V_P$.
    \item Each $v \in V_P$ with one in-neighbor and one out-neighbor in $V_P$ is required to be adjacent to a node in $V_M$ that is on the \downn-border. Conversely, each node in $V_M$ that is on the \downn-border is required to be adjacent to a node in $V_P$ with one in-neighbor and one out-neighbor in $V_P$.
    \item For each edge $u \leftarrow v$ connecting two nodes in $V_P$, there exist two nodes $w$ and $x$ in $V_M$ such that $w$ is the \leftt-neighbor of $x$ and the two edges $\{u, w\}$ and $\{v, x\}$ exist.
\end{itemize}

Intuitively, if we write $P = v_1 \leftarrow v_2 \leftarrow \cdots \leftarrow v_s$ to denote a component of $V_P$ and let $u_i$ be the unique neighbor of $v_i$ in $V_M$, then $(u_1, u_2, \ldots, u_s)$ is the path in $V_M$ corresponding to the \downn-border of a grid: $u_1$ is on the (\downn, \leftt)-corner, $u_2, \ldots, u_{s-1}$ are on the \downn-border, and $u_s$ is on the (\downn, \rightt)-corner.

\paragraph{Error pointers.}  Similar to the errors in $V_E$, we allow the algorithm to use error pointers to let $v_1$ of $P = v_1 \leftarrow v_2 \leftarrow \cdots \leftarrow v_s$ learn errors in $P$ regarding its connections to $V_E$
or $V_M$ as well as errors in the nodes in  $V_M$ that are adjacent to $P$. 
To report such an error to $v_1$,  the error pointers will only go through the directed edges in $P$. Since the Turing machine simulation requires $s \geq 3$, the case of $s < 3$ will also be considered an error which can be reported by error pointers.
We cannot afford to report all errors in $V_M$ to $v_1$ via error pointers because the grid could be very large, and they will be handled differently.

\paragraph{Turing machine simulation.} Informally, there will be two possible ways to solve our $\LCL$ problem. The first option is to let $v_1$ of $P = v_1 \leftarrow v_2 \leftarrow \cdots \leftarrow v_s$ discover an error through a sequence of error pointers that ends at $v_1$. Then all nodes in the $V_M$-component attached to $P$ are allowed to output a special symbol $\ast$. The second option is to do a simulation of the given Turing machine $\MM$, where the input string is $0^{s-2}$. 

Intuitively, the \leftt\ -- \rightt\ dimension of the grid corresponds to the direction of the tape, and the \upp\ -- \downn\ dimension of the grid corresponds to the flow of time in the execution of the Turing machine.
Specifically, we write $u_{i,j}$ to denote each element in the grid such that $u_{1,1}$ corresponds to the node on the (\downn, \leftt)-corner and $u_{1,s}$ corresponds to the node on the (\downn, \rightt)-corner.
Then the simulation will start with labeling $u_{1,1}$ by $\bot_\LLLL$, labeling $u_{1,2}, \ldots, u_{1,s-1}$ by $0$, and labeling $u_{1,s}$ by  $\bot_\RRRR$. The head of the machine is initially over the left-most cell, which is $u_{1,1}$. We assume that the output labels of nodes in $V_M$ allow us to store an indicator as to whether the head of the machine is current at this cell and to store the current state of the machine if the head of the machine is at this cell.
The configuration of the machine after $k$ steps will be recorded by output labels in $u_{k+1, 1}, u_{k+1, 2}, \ldots, u_{k+1, s}$.

If there is no error in the grid, then the correctness of the output label at $u_{i,j}$ depends only on the output labels in $u_{i-1, j-1}$, $u_{i-1, j}$, and $u_{i-1, j+1}$, so the rules for the output labels are also locally checkable. 

\paragraph{Rules for the first layer.} We formally describe the locally checkable rules for the output labels at the nodes that are on the (\downn, \leftt)-corner, the \downn-border, and the (\downn, \rightt)-corner of the grid.
\begin{itemize}
    \item Let $v$ be on the (\downn, \leftt)-corner. 
    If the unique neighbor $u \in V_P$ of $v$ has an error pointer pointed to $u$, then $v$ is required to have $\ast$ as its output label.
    Otherwise, the output label of $v$ must indicate that it represents a cell whose label is $\bot_\LLLL$ and it is the current position of the head of the Turing machine.
    \item Let $v$ be on the \downn-border. If the \leftt-neighbor of $v$ has $\ast$ as its output label, then $v$ is also required to have $\ast$ as its output label, Otherwise, $v$ is required to have $0$ as its output label.
    \item Let $v$ be on the (\downn, \leftt)-corner. If the \leftt-neighbor of $v$ has $\ast$ as its output label, then $v$ is also required to have $\ast$ as its output label, Otherwise, $v$ is required to have $\bot_\RRRR$ as its output label.
\end{itemize}

Assuming that there is no error in the grid structure, there are only two possibilities for a correct output labeling of the first layer of the grid. The first possibility is that the output labels of the first layer of the grid correspond to the initial configuration of the Turing machine given $0^{s-2}$ as the input string. This case occurs if there is no error pointer pointed to $v_1$ of $P = v_1 \leftarrow v_2 \leftarrow \cdots \leftarrow v_s$.
The second possibility is that all nodes in the first layer of the grid are labeled $\ast$ in their output labeling. 
This case occurs only if an error is reported.

\paragraph{The use of the special symbol.}
Finally, we present some extra rules about the special output label  $\ast$  and how they resolve the remaining issues.
 As discussed earlier, the graph topology and input labels in $V_M$ might not meet all of the rules, and we cannot afford to report all these errors to $v_1$. 
Our solution is to \emph{always} allow each node in $V_M$ not meeting the rules locally to use the special label $\ast$ as its output label. Furthermore, we add the following rules for the output labeling. Recall that the correctness of the output label of a node $v \in V_M$ depends on the output labels of at most three other nodes. If at least one of these nodes labels themselves by $\ast$, then we are also allowed to label $v$ by $\ast$.

Since we do not have any rules to bound the size of the \upp\ -- \downn\ dimension of the grid,  we cannot afford to let all nodes in the grid communicate with the nodes in $P$, so they will not be able to know the outcome of the simulation of the Turing machine. 
To deal with this issue, we add the following rule. For each output label corresponding to the Turing machine entering an accepting state,  we change the output label to $\ast$. 
This means that if a node $v \in V_M$ is sufficiently far away from  $P$, then it can safely label itself $\ast$ without knowing the result of the Turing machine simulation.
Specifically, suppose the Turing machine simulation enters an accepting state on the $i$th layer of the grid. Due to the rule of the special symbol $\ast$, all nodes on the $(i+s-1)$th layer and after this layer have to label themselves $\ast$.

\subsection{Algorithms and Lower Bounds}\label{sect:part3}
In this section, we design algorithms and prove their optimality for our $\LCL$ problem.

\begin{proof}[Proof of \cref{lem-time-hierarchy}]
We are given a good time-constructable function $T(s)$ for linear-space-bounded Turing machines. 
Let $\MM$ be a Turing machine that takes $T(s)$ steps to accept the string $0^s$, and consider the $\LCL$ problem $\PP$ constructed using $\MM$. 
We will first consider the case where the size of the $V_E$-component attached to a $V_P$-component $P = v_1 \leftarrow v_2 \leftarrow \cdots \leftarrow v_s$ is $2^{s}-1 = \Theta(2^s)$ if all rules are met. The case where the size is doubly exponential will be discussed later.

\paragraph{Algorithm.}
Each node $v$ first locally checks whether $v \in V_P$ and has no out-neighbor in $V_P$. If so, then $v$ examine the set of nodes  $u \in V_P \cup V_E$ that are reachable to $v$ via a directed path $v \leftarrow \cdots \leftarrow u$ to check if there is any error.
Suppose there is an error. Pick $d$ to be the smallest distance of a directed path among all directed paths $v \leftarrow \cdots \leftarrow u$ using nodes in $V_P \cup V_E$ such that $u$ has an error. Then $v$ can detect such an error in $d$ rounds of communication. In this case, $v$ asks all edges in the directed path $v \leftarrow \cdots \leftarrow u$ to be labeled {\sf Error}.
Due exponential growth of the complete binary tree in the graph structure we discussed in \cref{sect:part2}, the number of nodes that $v$ sees within $d$ rounds is $O(2^d)$. Therefore, the round complexity of this step is always $O(\log n)$.

In case no error is found, we have $v = v_1$ for some connected component $P = v_1 \leftarrow v_2 \leftarrow \cdots \leftarrow v_s$ of $V_P$. The number of nodes in the $V_E$-component attached to $P$ is $2^s - 1$, so we have $s = O(\log n)$. Then $v$ spends $s-1 = O(\log n)$ rounds to inform all nodes in $P$ that there is no error. In case $u \in V_P$ does not receive such a message within $O(\log n)$ rounds, $u$ also learns that there is an error.

Each node $v_i$ in $P$ also checks whether there is any error in its connection to $V_M$.
If $P$ contains less than three nodes, then that also counts as an error.
If $v_i$ finds an error, it asks all the edges in the path $v_1 \leftarrow v_2 \leftarrow \cdots v_i$ to be labeled {\sf Error}, and $v_i$ also inform all nodes in  $P$ that there is an error.

For each $u \in V_P$, there are two cases. If $u$ has learned that there is an error during a previous step of the algorithm, then $u$ tells its neighboring node in $V_M$ to label itself $\ast$. Otherwise, $u$ tells its neighboring node in $V_M$ to label itself  $\bot_{\LLLL}$, $0$, or $\bot_{\RRRR}$, depending on whether $u=v_1$, $u \in \{v_2, \ldots, v_{s-1}\}$, or $u = v_s$ in the  connected component $P = v_1 \leftarrow v_2 \leftarrow \cdots \leftarrow v_s$ of $V_P$ that $u$ belongs to.

Now we consider any node $u \in V_M$. It first checks its local neighborhood to see if there is any error. If so, then it labels itself $\ast$.
From now on, we assume that the local neighborhood of $u$ obeys all the rules.
If $u$ has a neighbor $v$ in $V_P$, then it waits for $O(\log n)$ rounds until $v$ tells $u$ the output label of $u$.
Otherwise,  $u$ waits until its \downn-neighbor $v$, the $\leftt$-neighbor of $v$, and the $\rightt$-neighbor of $v$ decide their output labels. Once these three nodes have decided their output labels, $u$  decides its output label as follows. 
If one of these three output labels is $\ast$, then $u$ also labels itself $\ast$. Otherwise, $u$ computes its output label according to the transition rules of the given Turing machine $\MM$.  
As discussed earlier, if the computed output label indicates that the machine enters an accepting state, then $u$ labels itself $^\ast$.
If $u$ has waited for $T(s-2) + O(\log n)$ rounds and it still cannot decide its output label, then it labels itself $\ast$.

The round complexity of the algorithm is $T(s-2) + O(\log n) = O\left(T\left( {\log n} \right)\right)$, as $T(s- 2) = O\left(T\left( {\log n} \right)\right)$ because $s \leq \log n$, as the $V_E$-component attached to $P$ already has $2^s - 1$ nodes if we do the Turing machine simulation. The term $O(\log n)$ is not dominant, as we have $T(\log n)  \geq \log n$.
Although the desired graph structure when all the rules are met is planar, the error detection part of the algorithm does not rely on any assumption about the graph structure, so the algorithm works on any graph.

\paragraph{Lower bound.}

We will show that as long as  $T(s) = O(2^s / s)$, the $\LCL$ problem $\PP$ that we construct needs $\Omega\left(T\left( {\log n} \right)\right)$ rounds to solve on bounded-degree planar graphs.

The lower bound graph is simply the case where all the rules are obeyed and the length of the grid in the \upp\ -- \downn\ dimension is $2 T(s-2)$, measured in the number of nodes. We assume $s \geq 3$. In this case, $|V_P| = s$, $|V_E| = 2^{s} - 1$, and $|V_M| = s \cdot 2T(s-2)$. As we assume that $T(s) = O(2^s / s)$ for all $s$, the total number of nodes in this graph is $O(2^s)$, so $s \geq \log n - O(1)$.

Because there is no error, no error pointer can be used as output labels. Therefore, the rules of our  $\LCL$ problem $\PP$  imply that we have to do a simulation of $\MM$ with the input string $0^{s-2}$.
We write $u_{i,j}$ to denote each element in the grid such that $u_{1,1}$ corresponds to the node on the (\downn, \leftt)-corner and $u_{1,s}$ corresponds to the node on the (\downn, \rightt)-corner.
Suppose in an execution of $\MM$ with the input string $0^{s-2}$, it enters an accepting state after $i=T(s)$ steps at the $j$th cell. 
Then in a correct output labeling of the graph, $u_{i+1,j}$ is labeled $\ast$, and $u_{i,j}$ is not labeled $\ast$. However, the $(T(s-2)-1)$-radius neighborhood of  $u_{i,j}$ and $u_{i+1,j}$ are exactly the same. Hence no randomized algorithm can solve $\PP$ on this graph successfully with probability greater than $1/2$ in $T(s-2)-1$ rounds. As $s \geq \log n - O(1)$, we obtain the $\Omega\left(T\left( {\log n} \right)\right)$ lower bound, using the fact that $T(s)$ is good.

Hence we conclude the proof of the first part of \cref{lem-time-hierarchy}. That is, given that $T(s) = O(2^s / s)$, there is an $\LCL$ problem that can be solved in $O\left(T\left( {\log n} \right)\right)$ rounds on general graphs and requires $\Omega\left(T\left( {\log n} \right)\right)$ rounds on bounded-degree planar graphs.

\paragraph{Double exponentiation.}  For the rest of the section, we consider the case where the size of the $V_E$-component attached to a $V_P$-component $P = v_1 \leftarrow v_2 \leftarrow \cdots \leftarrow v_s$ is doubly exponential $\Theta(2^{2^s})$ if all the rules are met. The algorithm is the same as the one above, which has the round complexity $T(s-2) + O(\log n)$.

 Note that $s \leq \log \log n + O(1)$, as the $V_E$-component attached to $P$ already has $2^{2^{s-1}} - 1$ nodes.
If we assume that $T(s) = \Omega(2^{s})$ for all $s$, then $T(s-2) = \Omega(\log n)$. Hence we may write $T(s-2) + O(\log n) = O\left(T\left( {\log \log n} \right)\right)$.

For the lower bound, we also consider the graph where all the rules are obeyed and the length of the grid in the \upp\ -- \downn\ dimension is $2 T(s-2)$. We have $|V_P| = s$, $|V_E| = 2^{2^{s-1}} - 1$, and $|V_M| = s \cdot 2T(s-2)$. The assumption that $T(s)$ is good implies that $T(s) = 2^{O(s)}$, so the number of nodes in the graph is $n = O\left(  2^{2^{s}}\right)$. That is, $|V_E|$ is the dominant term. This means that we have $s \geq \log \log n - O(1)$. Hence we have the lower bound $T(s-2) - 1 = \Omega\left(T\left( {\log \log n} \right)\right)$.

Hence we conclude the proof of the second part of \cref{lem-time-hierarchy}. That is, given that  $T(s) = \Omega(2^s)$,  there is an $\LCL$ problem that can be solved in $O\left(T\left({\log \log n} \right)\right)$ rounds on general graphs and requires $\Omega\left(T\left( {\log \log n} \right)\right)$ rounds to solve on bounded-degree planar graphs.
\end{proof}

\end{document}